\newtheorem{result}{Result}
\newtheorem{lemma}{Lemma}
\newtheorem{definition}{Definition}
\newtheorem{remark}{Remark}
\newtheorem{theorem}{Theorem}
\newtheorem{corollary}{Corollary}
\newtheorem*{summary*}{Summary of results for measurement informativeness}
\newcommand{\ket}[1]{\left| #1 \right\rangle}
\newcommand{\ketbra}[2]{\left|#1 \rangle \langle #2 \right|}
\DeclareMathOperator{\tr}{Tr}
\DeclareMathOperator{\sgn}{sgn}
\begin{document}

\title{Characterisation of quantum betting tasks in terms of Arimoto mutual information}

\author{Andr\'es F. Ducuara$^{1,2,3,4}$} 
\email[]{andres.ducuara@bristol.ac.uk}

\author{Paul Skrzypczyk$^{3}$}
\email[]{paul.skrzypczyk@bristol.ac.uk}

\affiliation{$^{1}$ Quantum Engineering Centre for Doctoral Training, University of Bristol, Bristol BS8 1FD, United Kingdom \looseness=-1} 

\affiliation{$^{2}$Quantum Engineering Technology Labs, University of Bristol, Bristol BS8 1FD, United Kingdom \looseness=-1}

\affiliation{$^{3}$H.H. Wills Physics Laboratory, University of Bristol, Tyndall Avenue, Bristol, BS8 1TL, United Kingdom \looseness=-1}

\affiliation{$^{4}$Department of Electrical and Electronic Engineering, University of Bristol, Bristol BS8 1FD, United Kingdom \looseness=-1}

\date{\today}

\begin{abstract}
    We introduce operational quantum tasks based on betting with risk-aversion -- or quantum betting tasks for short --  inspired by standard quantum state discrimination and classical horse betting with risk-aversion and side information. In particular, we introduce the operational tasks of quantum state betting (QSB), noisy quantum state betting (nQSB), and quantum channel betting (QCB) played by gamblers with different risk tendencies.  We prove that the advantage that informative measurements (non-constant channels) provide in QSB (nQSB) is exactly characterised by Arimoto's $\alpha$-mutual information, with the order $\alpha$ determining the risk aversion of the gambler. More generally, we show that Arimoto-type information-theoretic quantities characterise the advantage that resourceful objects offer at playing quantum betting tasks when compared to resourceless objects, for general quantum resource theories (QRTs) of measurements, channels, states, and state-measurement pairs, with arbitrary resources. In limiting cases, we show that QSB (QCB) recovers the known tasks of quantum state (channel) discrimination when $\alpha \rightarrow \infty$, and quantum state (channel) exclusion when $\alpha \rightarrow -\infty$.  Inspired by these connections, we also introduce new quantum R\'enyi divergences for measurements, and derive a new family of resource monotones for the QRT of measurement informativeness. This family of resource monotones recovers in the same limiting cases as above, the generalised robustness and the weight of informativeness. Altogether, these results establish a broad and continuous family of four-way correspondences  between operational tasks, mutual information measures, quantum R\'enyi divergences, and resource monotones, that can be seen to generalise two limiting correspondences that were recently discovered for the QRT of measurement informativeness.
\end{abstract} 
\maketitle

\section{Introduction}

The field of quantum information theory (QIT) was born out of the union of the theory of quantum mechanics and the classical theory of information \cite{NC}. This union also happened to kickstart what it is nowadays known as the (ongoing) second quantum revolution which, roughly speaking, aims at the development of quantum technologies \cite{SQR1, SQR2}. Compared with its direct predecessors however, QIT is still a relatively young field and therefore, it is important to keep unveiling, exploiting, and strengthening the links between quantum theory and classical information theory.

In this direction, the framework of quantum resource theories (QRTs) has emerged as a fruitful approach to quantum theory \cite{QRT_QP, RT_review}. A central subject of study within QRTs is that of \emph{resource quantifiers} \cite{QRT_QP, RT_review}. Two well-known families of these measures are the so-called \emph{robustness-based} \cite{RoE, GRoE, RoNL_RoS_RoI, RoS, RoA, RoC, SL, RoT, RoT2, RT_magic, citeme1, citeme2, LBDPS, TR3, FL2020} and \emph{weight-based} \cite{EPR2, WoE, WoS, WoAC} resource quantifiers. Importantly, these quantities have been shown to be linked to operational tasks and therefore, this establishes a type of quantifier-task correspondence. Explicitly, robustness-based quantifiers are linked to discrimination-based operational tasks \cite{RoS, TR1, RoI_task, RoI_Channels, RoA, SL, TR2, RT1}, whilst weight-based resource quantifiers are linked to exclusion-based operational tasks \cite{DS, uola2020}. A resource quantifier is a particular case of a more general quantity known as a \emph{resource monotone} \cite{TG} and therefore, this correspondence can alternatively be addressed as a \emph{monotone-task} correspondence.

From a different direction, in classical information theory, the \emph{Kullback-Leibler (KL) divergence} (also known as the Kullback-Leibler relative entropy) emerges as a central object of study \cite{KL1951}. The importance of this quantity is in part due to the fact that it acts as a parent quantity for many other quantities, such as the Shannon entropy, conditional entropy, conditional divergence, mutual information, and the channel capacity \cite{CT}. Within this classical framework, it has also proven fruitful to consider \emph{R\'enyi-extensions} of these quantities \cite{renyi}. In particular, there is a clear procedure for how to define the R\'enyi-extensions of both Shannon entropy and KL-divergence, which are known as the R\'enyi entropy and the R\'enyi divergence, respectively \cite{renyi, RD}. Interestingly however, there is yet no consensus within the community as to what is the ``proper" way to R\'enyi-extend other quantities. As a consequence of this, there are several different candidates for R\'enyi conditional entropies \cite{review_RCE}, R\'enyi conditional divergences \cite{BLP1}, and R\'enyi mutual information measures \cite{review_RMI}. The latter quantities are also known as measures of dependence \cite{BLP1} or $\alpha$-mutual information measures \cite{review_RMI}, and we address them here as (R\'enyi) \emph{dependence} measures or \emph{mutual informations}. In particular, we highlight the mutual information measures proposed by Sibson \cite{sibson}, Arimoto \cite{arimoto},  Csiszár \cite{csiszar}, as well as one recent proposal, independently derived by Lapidoth-Pfister \cite{LP}, and Tomamichel-Hayashi \cite{TH}. It is known that these mutual information measures (with the exception of Arimoto's) can be derived from their respective conditional R\'enyi divergence \cite{BLP1} and therefore, we address this relationship as a \emph{mutual information-divergence} correspondence. 

The links between the two worlds of QRTs and classical information theory are now beginning to be understood to run much deeper than just the \emph{monotone-task} and \emph{mutual information-divergence} correspondences from above. In fact, they are intimately connected via a more general \emph{four-way} monotone-task-mutual information-divergence correspondence, which holds true in particular for the QRT of measurement informativeness (a QRT where the resource is a measurement's ability to extract information encoded in a state) \cite{SL}. Explicitly, the  robustness-discrimination correspondence \cite{SL, TR2} is furthermore connected to the information-theoretic quantity known as the \emph{accessible information} \cite{Wilde_book} which can, in turn, be written in terms of mutual information measures. In a similar manner the weight-exclusion correspondence \cite{DS, uola2020} is linked to the \emph{excludible information} \cite{DS, MO}, which can also be written in terms of mutual information measures. Even though it was not explicitly stated in any of the above references the fourth corner in terms of ``R\'enyi divergences", it is nowadays a well known fact within the community, first noted by Datta, that the generalised robustness is related to the R\'enyi divergence of order $\infty$ (also called the max quantum divergence) \cite{datta}, with a similar case happening for the  weight and the divergence of order $-\infty$ \cite{DS}. These two apparently ``minor" remarks raise the following fascinating question: Could there exist a whole spectrum of connections between mutual information measures, R\'enyi divergences, resource monotones, and operational tasks, with only the two extreme ends at $\pm \infty$ currently being uncovered? \cite{DS}.

In this work we start by providing a positive answer to this question, by implementing insights from the theory of games and economic behaviour \cite{risk_vNM}. This latter theory, in short, encompasses many of the theoretical tools currently used in the economic sciences. In particular, we invoke here the so-called \emph{expected utility theory} \cite{risk_vNM} and more specifically, we borrow the concept of \emph{risk-aversion}; the behavioural tendency of rational agents to have a preference one way or another for guaranteed outcomes versus uncertain outcomes. This concept remains of great research interest in the economic sciences, with various Nobel prices having been awarded to its understanding \cite{nobel}.

In general, the concept of \emph{risk aversion} is a ubiquitous characteristic of rational agents and, as such, it naturally emerges as a subject of study in various different areas of knowledge such as: the economic sciences \cite{risk_EGS}, biology and behavioral ecology \cite{risk_biology1, risk_biology2}, and neuroscience \cite{NS1, NS2, NS3}. In short, it addresses the behavioural tendencies of rational agents when faced with uncertain events. Intuitively, a gambler spending money on bets with the hope of winning big, can be seen as an individual taking (potentially unnecessary) risks, in the eyes of a more conservative gambler. One of the challenges that economists have tackled, since roughly the second half of the previous century, is the incorporation of the concept of risk aversion into theoretical models describing the behaviour of rational agents, as well as its quantification, and exploitation of its descriptive power \cite{risk_EGS}.

The concept risk was first addressed within theoretical models by Bernoulli in 1738 (translated into English by Sommer in 1954) \cite{risk_bernoulli}. Later on, the theory of expected utility, formalised by von Neumann and Morgenstern in 1944 \cite{risk_vNM}, provided a framework within which to address and incorporate behavioural tendencies like risk aversion. It was then further formalised, independently and within the theory of expected utility, by Arrow, Pratt, and Finetti in the 1950's and 60's \cite{risk_arrow, risk_pratt, risk_finetti} who, in particular, introduced measures for its quantification. The quest for further understanding and exploiting this concept has since remained of active research interest in the economic sciences \cite{risk_EGS}. Recently, an important step was taken in the work of Bleuler, Lapidoth and Pfister (BLP) in 2020 \cite{BLP1}, where the concept of risk aversion was implemented within the realm of classical information theory, as part of the operational tasks of horse betting games with risk and side information.

In this work, inspired by the concepts of \emph{betting}, \emph{risk aversion}, the tasks introduced by BLP \cite{BLP1}, as well as by standard quantum state discrimination, we introduce operational \emph{quantum betting tasks}. Surprisingly, we find that these tasks  turn out to provide the correct approach for solving the conundrum regarding the four-way correspondence for QRTs described above. Specifically, we find that the concept of risk aversion allows us to define operational quantum tasks which can be viewed as a generalisation of discrimination and exclusion.

We start by exploring the QRT of measurement informativeness, and find that Arimoto's $\alpha$-mutual information exactly quantifies the advantage provided by informative measurements when playing one of these quantum betting tasks which we call quantum state betting (QSB). We then explore general QRTs of measurements with arbitrary resources, and similarly derive Arimoto-type information-theoretic measures which quantify the advantage provided by resourceful measurements. Specifically, we find that the concept of \emph{Arimoto's gap}, an information-theoretic quantity which generalises Arimoto's mutual information, characterises QSB games when comparing a resourceful gambler with gamblers with access only to free resources.

In addition to QRTs of measurements, we also explore QRTs of other objects. First, we explore the QRT of non-constant channels. In this scenario we introduce the tasks of noisy quantum state betting (nQSB), and find appropriate Arimoto-type quantities which characterise the performance gain of resourceful objects over resourceless objects in these tasks. Furthermore, we extend these results to QRTs of channels with arbitrary resources, and similarly characterise the advantage provided by resourceful channels in comparison to the best resourceless alternatives.

We also explore the concept of \emph{betting} and \emph{risk-aversion} for tasks beyond QSB and nQSB games, by introducing quantum channel betting (QCB) tasks. We first address these tasks for general single-object QRTs of states with arbitrary resources. In this regime we find that, similarly to the case of QSB and nQSB, there exist Arimoto-type information-theoretic quantities which characterise the performance of resourceful gamblers over resourceless gamblers. We further extend these results to multi-object QRTs of state-measurement pairs. These results therefore altogether highlight that  \emph{betting} and \emph{risk-aversion} are powerful and useful concepts that naturally emerge in general QRTs with arbitrary resources, objects, as well as different tasks.

Finally, we report additional results for the QRT of measurement informativeness, by deriving a continuous four-way correspondence between operational tasks, mutual information measures, R\'enyi divergences, and resource monotones, which generalise correspondences recently found in the literature \cite{SL, DS}.

We believe that the concepts of \emph{betting} and \emph{risk-aversion} have the potential to positively impact  our understanding of the framework of resource theories as well as our understanding of the theory of quantum information more generally. 

This work is organised as follows. In Sec.~\ref{s:risk} we start by describing the concept of  risk aversion in the theory of games and economic behaviour. In Sec.~\ref{s:arimoto} and \ref{s:arimoto2} we address Arimoto's mutual information measure and the R\'enyi capacity both in classical and quantum domains. In Sec.~\ref{s:RoMI} we describe the QRT of measurement informativeness and the QRT of non-constant channels. In Sec.~\ref{s:gap} we address further Arimoto-type information-theoretic quantities for general QRTs of measurements, channels, states, and state-measurement pairs with arbitrary resources. Our main results sections start in Sec.~\ref{s:QBT}, where we introduce operational quantum tasks based on betting with risk-aversion, or quantum betting tasks for short, and introduce various tasks as follows: quantum state betting (QSB) in Sec.~\ref{ss:QSB}, \ref{ss:QSB1}, \ref{ss:QSB2}, noisy quantum state betting (nQSB) in Sec.~\ref{ss:nQSB}, and quantum channel betting (QCB) in Sec.~\ref{ss:QCB}. In Sec.~\ref{s:results} we address the characterisation of quantum betting tasks in terms of Arimoto-type information-theoretic quantities. In  Sec.~\ref{ss:result1} we relate QSB games to  Arimoto's mutual information, for the QRT of measurement informativeness. In Sec. \ref{ss:result2} we characterise noisy QSB (nQSB) games in terms of a noisy Arimoto mutual information, for the QRT of non-constant channels. In Sec. \ref{ss:result3} we characterise QSB and nQSB games in terms of Arimoto-type quantities, for general QRTs of measurements and channels with general resources. In Sec. \ref{ss:result4} we characterise QCB games in terms of Arimoto-type measures for single-object QRTs of states with arbitrary resources as well as multi-object QRTs of state-measurement pairs with arbitrary resources.  In Sec. \ref{ss:result5} we characterise horse betting games in terms of the Arimoto's mutual information in the classical regime, without invoking quantum theory. In Sec.~\ref{e:divergences} and \ref{e:monotones} we address quantum R\'enyi divergences and resource monotones, and derive a four-way correspondence for the QRT of measurement informativeness. We finish in Sec.~\ref{s:conclusions} with conclusions, open questions, perspectives, and avenues for future research.

\section{Background theory}

In this section we address the preliminary theoretical tools necessary to establish our main results. We start with the concept of risk in the theory of games and economic behaviour. We then introduce a pair of games involving risk. After this, we introduce Arimoto's $\alpha$-mutual information measure and the R\'enyi capacity in both classical and quantum information theory. Next, we review the QRTs of measurement informativeness and non-constant channels and, finally, Arimoto-type information-theoretic measures for general QRTs of measurements, channels, states, and state-measurement pairs.

\subsection{The concept of risk in the theory of games and economic behaviour}
\label{s:risk}

In expected utility theory \cite{risk_vNM}, the level of `satisfaction' of a rational agent, when \emph{receiving} (obtaining, being awarded) a certain amount of wealth, or goods or services, is described by a utility function \cite{risk_vNM}. The utility function of a rational agent is a function $u:A\rightarrow \mathds{R}$, with $A=\{a_i\}$ a the set of \emph{alternatives} from which the rational agent can choose from. The set $A$ is endowed with a binary relation $\preceq$. The utility function is asked to be a monotone for such a binary relation; if $a_1 \preceq a_2$ then $u(a_1) \leq u(a_2)$. In this work we address the set of alternatives as representing \emph{wealth} and therefore, it is enough to consider an interval of the real numbers. 

We are going to consider two different types of situation in this work. In the first case, the wealth will always be non-negative, and so we consider the interval being $A=\mathcal{I}=[0,w^M]\subseteq \mathds{R}$, with $w^M>0$ a maximal amount of wealth, and the standard binary relation $\leq$. Similarly, we also will also consider a situation where the wealth is  \emph{non-positive}, meaning we address a utility function taking negative arguments $w<0$, with $\mathcal{I}=[-w^M,0]\subseteq \mathds{R}$, as the level of (dis)satisfaction when the rational agent has to \emph{pay} an amount of money $|w|$ (or when the amount $|w|$ is taken away from him). 

We note here that the utility function does not necessarily need to be positive (or negative), because it is only used to \emph{compare} alternatives. The condition that the utility function is monotonic is the equivalent to it being an increasing function for both positive and negative wealth. Intuitively, this represents that the rational agent is interested in acquiring as much wealth as possible (for positive wealth), and losing the least amount of wealth as possible (for negative wealth). Additionally, the utility function is asked to be twice-differentiable, both for mathematical convenience and, because it is natural to assume that smooth changes in wealth imply smooth changes in the rational agent's satisfaction.

In order to address the concept of risk, we first need to introduce two games (or operational tasks), which involves a player Bob (the \emph{Better} or \emph{Gambler}, who we take to be a rational agent with a utility function $u$) and a referee Alice, who is in charge of the game. We are going to address two different games which we call here: i) \emph{gain games} and ii) \emph{loss games}.

\subsubsection{A gain game and utility theory}

In a \emph{gain game}, Alice (Referee) offers Bob (Gambler) the choice between two options: i) a fixed \emph{guaranteed} amount of wealth $w^G \in [0,w^M]$ or ii) a \emph{bet}. The bet consists of the following: Alice uses a random event distributed according to a probability mass function (PMF) $p_W$, (i.e.~$\sum_{w\in \mathcal{I}} p_W(w)=1$, $p_W(w)\geq 0$, $\forall w\in \mathcal{I}$,  with $W$ a random variable in the alphabet $\mathcal{I}$), in order to give Bob a reward. Specifically, Alice will reward Bob with an amount of wealth $w^B=w$, whenever the random event happens to be $w$, which happens with probability $p(w)$ (we drop the label $W$ on $p_W(w)$ from now on). The choice facing Bob is therefore between a fixed guaranteed amount of wealth $w^G \in [0,w^M]$, or taking the bet and potentially earning more $w^B>w^G$, at the risk of earning less $w^B<w^G$. 

Since the utility function $u(w)$ determines Bob's satisfaction when acquiring the amount wealth $w$, we will see below that it can be used to model his behaviour in this game, i.e. whether he chooses the first or second option.  First, considering the bet (option ii) we can consider the \emph{expected gain} of Bob at the end, 
\begin{align}
    \mathbb{E}[W]
    =
    \sum_{w\in \mathcal{I}}
    p(w)
    w.
\end{align} 
How satisfied Bob is with this expected amount of wealth is given by the utility of this value, i.e.
\begin{align}
    u
    \left(
    \mathbb{E}[W]
    \right)
    =
    u
    \left(
    \sum_{w\in \mathcal{I}}
    p(w)
    w
    \right)
    .
\end{align}
Now, Bob's wealth at the end of the bet is a random variable, this means that his satisfaction will also be a random variable, with some uncertainty. We can also ask what Bob's expected satisfaction, i.e.~\emph{expected utility} will be at the end of the bet,
\begin{align}
    \mathbb{E}[u(W)]
    =
    \sum_{w\in\mathcal{I}}
    p(w)
    u(w)
    .
\end{align}
This represents how satisfied Bob will be with the bet on average. 

We can now introduce the first key concept, that of the \emph{Certainty Equivalent (CE)}: it is the amount of (certain) wealth $w^{ICE}$ which Bob is as satisfied with as the average wealth he would gain from the bet. In other words, the amount of wealth which is as desirable as the bet itself. That is, it is the amount of wealth $w^{ICE}$ that satisfies
\begin{align}
    \mathbb{E}[u(W)] =u(w^{ICE}). \label{e:wCE}
\end{align}
It is crucial to note that the certainty equivalent wealth depends upon the utility function $u$ and the PMF $p_W$, and therefore we interchangeably write it as $w^{ICE}(u,p_W)$. We can now return to the original game, i.e.~the choice between a fixed return $w^G$, or the average return $\mathbb{E}[W]$. The rational decision for Bob is to pick which of the two he is most satisfied with. We now see that if we set $w^G > w^{ICE}$ then he will choose to take the guaranteed amount, if $w^G < w^{ICE}$ he will choose the bet, and if $w^G = w^{ICE}$ then in fact the two options are equivalent to Bob, and he can rationally pick either. That is, we see that the certainty equivalent $w^{ICE}$ sets the boundary between which option Bob will pick. 

Introducing the certainty equivalent moreover allows us to introduce the concept of Bob's \emph{risk-aversion}. To do so, we will compare Bob's expected wealth, in relation to the certainty equivalent of the bet. There are only three possible scenarios,
\begin{align}
     w^{ICE} &< \mathbb{E}[W], \label{eq:R1} \\
	 w^{ICE} &> \mathbb{E}[W], \label{eq:R2} \\
     w^{ICE} &= \mathbb{E}[W]. \label{eq:R3}
\end{align}
In the first case \eqref{eq:R1}, Alice can offer Bob an amount of wealth $w^G$ that is larger than $w^{ICE}$ but less than $\mathbb{E}[W]$, $w^{ICE} < w^G < \mathbb{E}[W]$ and Bob will rationally take this amount over accepting the bet, even though he will walk away with less wealth on average than if he took the bet. In other words, Bob is \emph{reluctant} to take the bet, and so we say that he is \emph{risk-averse}. 

In the second case \eqref{eq:R2}, on the other hand, if Alice wants to make Bob walk away from the bet, and accept a fixed amount of wealth instead, she will have to offer him \emph{more} than the expected gain. That is, Bob will only choose an amount $w^G$ if $w^G > w^{ICE} > \mathbb{E}[W]$. Here Bob is \emph{risk-seeking}. 

Finally, in the third case \eqref{eq:R3}, Bob will take the bet if Alice offers him any $w^G$ less than the expected gains from the bet, and will take the guaranteed amount $w^G$ if it is larger. In this case, we say that Bob is \emph{risk-neutral}, as Bob is essentially indifferent between the uncertain gains of the bet and the certain gains of the guaranteed return.  

If we recall that by definition the utility function $u$ is strictly increasing in the interval $\mathcal{I}$ (more wealth is also more satisfactory to Bob), then by applying  $u$ to the previous three equations, and using the definition of $w^{ICE}$ \eqref{e:wCE}, we get
\begin{align}
	\mathbb{E}[u(W)] &< u(\mathbb{E}[W]), \label{eq:Rp1} \\
	\mathbb{E}[u(W)] &> u(\mathbb{E}[W]), \label{eq:Rp2} \\
	\mathbb{E}[u(W)] &= u(\mathbb{E}[W]). \label{eq:Rp3}
\end{align}
This is an important result, which shows that Bob's risk-aversion is characterised by the curvature of his utility function: Bob is risk-averse when his utility function is concave \eqref{eq:Rp1}, risk-seeking when his utility function is convex \eqref{eq:Rp2}, and risk-neutral when it is linear \eqref{eq:Rp3}. This intuitively makes sense, since roughly speaking this corresponds to his satisfaction growing more slowly than wealth when he is risk-averse and his satisfaction growing faster than wealth when he is risk-seeking. We now move on to analyse the concept of risk in our second game.

\subsubsection{A loss game and utility theory}

Let us now analyse a game which we call here a \emph{loss game}. Similarly to the gain game from the previous section, in an loss game we have two agents, a Referee (Alice) and a Gambler (Bob), who has to make a payment to the Referee. In an loss game Bob is now asked to choose between two options: i) paying a fixed amount of wealth $|w^{F}|$, $w^F\in[-w^M,0]$ or ii) a bet. Choosing the bet means Bob has to pay an amount of wealth according to the outcome of a PMF $p_W$. Similarly to the gain game, we address some quantities of interest: \emph{expected debt} ($\mathds{E}(W)$), \emph{expected utility} ($\mathds{E}[u(W)]$), and the \emph{certainty equivalent (CE)} $w^{ICE}(u,p_W)$, as the amount of wealth $w^{ICE}$ such that $u(w^{ICE}) = \mathds{E}[u(W)]$. We note the CE depends on the utility function $u$ representing the Player, and the PMF $p_W$ representing the bet. The CE is the amount of wealth that Bob pays to Alice, which generates the same level of (dis)satisfaction, had Bob opted for the bet instead. We also note here that both the expected debt and the certainty equivalent are now negative quantities.

We now analyse the meaning of the certainty equivalent in loss games, i.e., where Bob (the Gambler) has to choose between having to pay a certain fixed amount of wealth (fixed debt) $|w^F|$, or paying an average amount (average debt) $|\mathbb{E}[W]|$. The rational decision for Bob is to pick which of the two options he is \emph{more satisfied} (equivalently, we could say least dissatisfied) with. We then see that if we set $w^F < w^{ICE}$ he then will choose to take the bet, if $w^F > w^{ICE}$ he will choose to pay the fixed amount, and if $w^F = w^{ICE}$ he can rationally pick either. That is, we see that the certainty equivalent $w^{ICE}$ again sets here the boundary between which option Bob will pick in an loss game. 

We now compare Bob's expected debt $\mathbb{E}[W]$ and the certainty equivalent of the bet $w^{ICE}$. We have the three possible scenarios, 
\begin{align}
     w^{ICE} & < \mathbb{E}[W], 
     \hspace{0.3cm}
     \longleftrightarrow
     \hspace{0.3cm}
     |w^{ICE}|  > |\mathbb{E}[W]|,
     \label{eq:D1} \\
	 w^{ICE} & > \mathbb{E}[W], 
	 \hspace{0.3cm}
     \longleftrightarrow
     \hspace{0.3cm}
	 |w^{ICE}|  < |\mathbb{E}[W]|,
	 \label{eq:D2} \\
     w^{ICE} & = \mathbb{E}[W],
     \hspace{0.3cm}
     \longleftrightarrow
     \hspace{0.3cm}
     |w^{ICE}|  = |\mathbb{E}[W]|.
     \label{eq:D3}
\end{align}
In the first case \eqref{eq:D1}, Alice can \emph{request} from Bob a fixed amount of wealth $|w^F|$
as $w^{ICE} < w^{F} < \mathbb{E}[W]$, which is equivalent to $|w^{ICE}| > |w^{F}| > |\mathbb{E}[W]|$ and Bob will still \emph{prefer} to pay this amount over opting for the bet, even though he will \emph{potentially have to pay less} $|\mathds{E}(W)|$, on average, had he opted for the bet. In other words, Bob is \emph{reluctant} to take the bet, and so we see that he is \emph{risk-averse}.

In the second case \eqref{eq:D2}, if Alice wants to make Bob walk away from choosing the bet, and accept \emph{paying} a fixed amount of wealth instead, she will have to offer him a deal where he has to pay \emph{less} than the CE (and in turn less than the expected debt). In other words, in this case Bob is confident that the bet will allow him to pay less than the expected debt. That is, Bob will choose paying a fixed amount $|w^F|$ only if $w^{F} > w^{ICE} > \mathbb{E}[W]$, which is equivalent to $|w^{F}| < |w^{ICE}| < |\mathbb{E}[W]|$. Here Bob can then be considered as \emph{risk-seeking}, because he is hopeful/optimistic about having the chance of paying less than the expected debt.

Taking into account the utility function is still an strictly increasing function for negative wealth, together with the definition of the certainty equivalent we get:
\begin{align}
	\mathbb{E}[u(W)] &< u(\mathbb{E}[W]), \label{eq:Dp1} \\
	\mathbb{E}[u(W)] &> u(\mathbb{E}[W]), \label{eq:Dp2} \\
	\mathbb{E}[u(W)] &= u(\mathbb{E}[W]). \label{eq:Dp3}
\end{align}
This means that in an loss game we can also characterise the risk tendencies of a Gambler in terms of the concavity/convexity/linearity of his utility function as: risk-averse (concavity \eqref{eq:Dp1}), risk-seeking (convexity \eqref{eq:Dp2}), risk-neutral (linear \eqref{eq:Dp3}). This characterisation of risk tendencies and the types of games are going to be useful later on when introducing more elaborate games involving the discrimination or exclusion of quantum states. We now move on to the quantification of risk.

\subsubsection{Quantifying risk tendencies}

We can go one step further, and not only classify whether Bob (the Gambler) is risk-averse, risk-seeking, or risk-neutral, but moreover \emph{quantify} how risk-averse he is. Let us start by addressing a \emph{gain game}, which means we are interested in analysing Bob being represented by an utility function on positive wealth. Since Bob's attitude toward risk relates to the concavity/convexity/linearity of the utility function $u$, it is natural that the second derivative of the function is going to play a role. This, because $u$ is concave on an interval if and only if its second derivative is non-positive on that interval. However, it is also desirable for measures representing risk to be invariant under \emph{affine transformations} of the utility function, which in this context means that they are invariant under transformations of the form $u \rightarrow a+b u$, with $a, b \in \mathds{R}$. This is because the actual values of utility aren't themselves physical, but only the \emph{comparison} between values, and therefore rescaling or displacing the utility should not alter how risk-averse we quantify Bob to be. Given these requirements, a natural measure that emerges is the so-called Relative Risk Aversion (RRA) measure\footnote{An additional benefit of this quantifier is that it is dimensionless, which is not satisfied by all quantifiers of risk-aversion}:
\begin{align}
    RRA(
    w
    )
    \coloneqq
    -
    w
    \frac{
    u^{''}
    (w)
    }{
    u'(w)
    }.
    \label{e:RRA}
\end{align}
This measure assigns \emph{positive} values for risk-averse players in a gain game (\emph{concave} utility functions of positive wealth) because we have: i) $w>0$, because we are considering the player \emph{receiving} money ii) $u^{''}(w)<0$, $\forall w$, because a risk-averse player in a gain game is represented by a concave function, and iii) $u^{'}(w)>0$, because the utility function is a strictly increasing function. An analysis of signs then yields $RRA(w)>0$.

Similarly, we now also analyse this measure of risk-aversion when Bob plays a \emph{loss game}. A loss game is characterised by \emph{negative wealth}, and we have already derived the fact that that a risk-averse Gambler is also characterised by a \emph{concave} utility function. We now want to quantify the degree of risk-aversion of a Gambler playing the loss game, and therefore we then can proceed in a similar fashion as before, and define the risk-aversion measure RRA. 

We now check that this measure assigns \emph{negative} values for risk-averse players in a loss game (\emph{concave} utility functions of negative wealth) because we have: i) $w<0$ because we are considering the player \emph{paying} money ii) $u^{''}(w)<0$, $\forall w$, because a risk-averse player in a loss game is represented by a concave function, and iii) $u^{'}(w)>0$, because the utility function is a strictly increasing function. An analysis of signs yields $RRA(w)<0$. We can see that this is the opposite to what happens in gain games, where $RRA(w)>0$ represents risk-averse players. We highlight this fact in Table~\ref{tab:tab1}, and present an analysis of the sign of the RRA measure for the two types of players (risk-averse or risk-seeking) and the two types of games (gain game or loss game).
\begin{table}[h!]
    \centering
    \begin{tabular}{|c||c|c|}
    \hline
        & 
        Risk-averse player 
        & 
        Risk-seeking player
        \\
        & $u^{''}(w)<0$ & $u^{''}(w)>0$
        \\
        \hline \hline
        $w>0$ &  $RRA(w)>0$ & $RRA(w)<0$
        \\
        \hline
        $w<0$ &  $RRA(w)<0$ & $RRA(w)>0$
        \\
        \hline
    \end{tabular}
    \caption{Analysis of the sign of the quantity $RRA(w)$ for the different regimes being considered. We have that the utility function is always strictly increasing, meaning that $u^{'}(w)>0$, and therefore we then only need to analyse the signs of $w$ and $u^{''}(w)$. In particular, we have that risk-averse players are represented by positive RRA when dealing with positive wealth, and by negative RRA when dealing with negative wealth.}
    \label{tab:tab1}
\end{table}

\subsubsection{The isoelastic certainty equivalent}

We now note that the RRA measure does not assign a global value for how risk averse Bob is, but allows this to depend upon the wealth $w$, i.e. Bob may be more or less risk averse depending on the wealth that is at stake. In order to remove this, it is usual to consider those utility functions where Bob's relative risk aversion is \emph{constant}, independent of wealth. In this case, \eqref{e:RRA} can be solved assuming $RRA(w) = R$, which leads to the so-called \emph{isoelastic utility function} for positive and negative wealth as:
\begin{align}
    u_R(w)
    \coloneqq
    \begin{cases}
        \sgn(w)
        \frac{|w|^{1-R}-1}{1-R}, 
        & \text{if}\ R \neq 1 \\
        \sgn(w)
        \ln(|w|), 
        & \text{if}\ R = 1
    \end{cases}
    ,
    \label{eq:isoelastic}
\end{align}  
with the auxiliary ``sign" function:
\begin{align}
    \sgn(
    w
    )
    \coloneqq
    \begin{cases}
        1, & 
        w
        \geq 0;
        \\
        -1,& 
        w
        < 0.
    \end{cases}
    \label{eq:sgn}
\end{align}
The parameter $R$ varies from minus to plus infinity, describing all possible risk tendencies of Bob, for either positive or negative wealth. For positive wealth for instance, $R$ goes from maximally risk-seeking at $R=-\infty$, passing through risk-neutral at $R=0$, to maximally risk-averse at $R=\infty$. In Fig.~\ref{fig:fig0} we can see the behaviour of the isoelastic function for positive wealth and different values of $R$.
\begin{figure}[h!]
    \centering
    \includegraphics[scale=0.57]{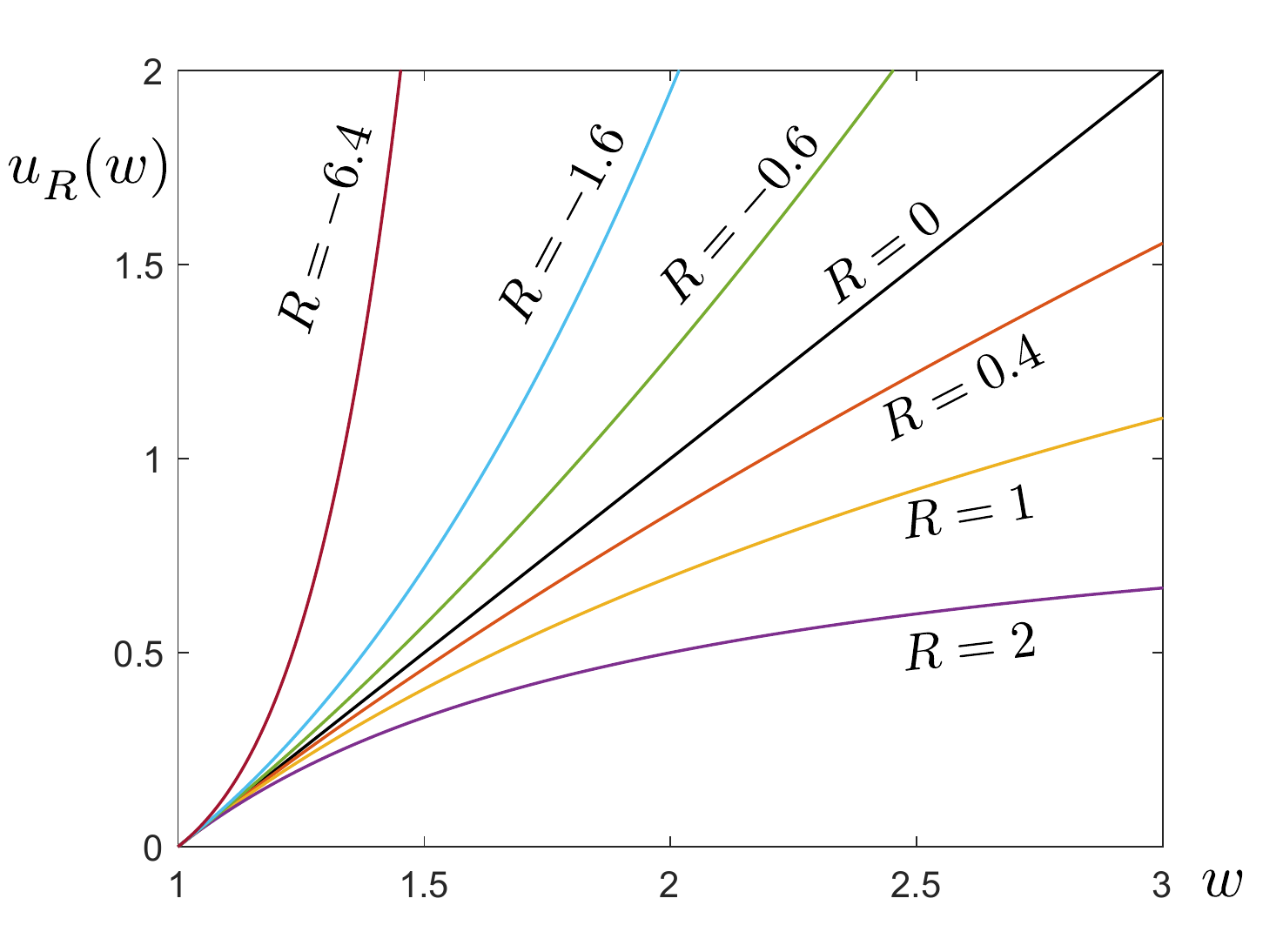}
    \vspace{-0.5cm}
    \caption{
    Isoelastic utility function $u_R(w)$ \eqref{eq:isoelastic} as a function of positive wealth ($1\leq w \leq 3$) for players with different risk tendencies (different values of $R$). The risk parameter $R$ quantifies different types of risk tendencies: i) $R<0$ risk-seeking players (convex) ii) $R=0$ risk-neutral players (linear), and iii) $R>0$ risk-averse players (concave). Risk-aversion for positive wealth then increases from $-\infty$ to $\infty$.
    }
    \label{fig:fig0}
\end{figure}

The certainty equivalent \eqref{e:wCE} for this setup can be calculated for either positive or negative wealth as:
\begin{align}
    w^{ICE}_R
    =
    u^{-1}_R
    \left(
    \mathbb{E}[u_R(W)]
    \right)
    =
    \left(
    \sum_{w\in\mathcal{I}}
    w^{1-R}
    \,
    p(w)
    \right)^{\frac{1}{1-R}}
    .
    \label{eq:wCEF}
\end{align}
The certainty equivalent (CE) of the isoelastic function, or \emph{isoelastic certainty equivalent} (ICE), is going to be the figure of merit in the next section, and it is going to play an important role in this paper. As we have already seen, the CE stands out as an important quantity because it: i) determines the choice of a Gambler when playing either a gain or loss game, helping to establish the characterisation of risk tendencies of said Gambler and ii) optimising the CE is equivalent to optimising the \emph{expected utility}, given that the utility function is a strictly increasing function and that $u(w^{ICE}) = \mathds{E}[u(W)]$. One may be tempted here to propose the \emph{expected utility} function $\mathds{E}[u(W)]$ as the figure of merit instead of the CE, but the expected utility unfortunately suffers from having the rather awkward set of units $[w]^{1-R}$, whilst the certainty equivalent on the other hand has simply units of wealth $[w]$ (\$, \pounds, ...).

\subsection{Arimoto's $\alpha$-mutual information and R\'enyi channel capacity}\label{s:arimoto}

We start this subsection by introducing the \emph{$\alpha$-mutual information measures} of interest (also known as dependence measures \cite{BLP1,review_RMI}) and particularly, Arimoto's $\alpha$-mutual information \cite{arimoto}. A reminder note on notation before we start: we consider random variables (RVs) ($X, G,...$) on a finite alphabet $\mathcal{X}$, and the probability mass function (PMF) of $X$ represented as $p_X$ satisfying: $p_X(x)\geq 0$, $\forall x\in \mathcal{X}$, and $\sum_{x\in \mathcal{X}}p_X(x)=1$. For simplicity, we omit the alphabet when summing, and write $p_X(x)$ as $p(x)$ when evaluating. The support of $p_X$ as ${\rm supp} (p_X) \coloneqq \{x\,|\,p(x)>0\}$, the cardinality of the support as $|{\rm supp}(p_X)|$, and the extended line of real numbers as $ \mathds{ \overline R}\coloneqq \mathds{R} \cup \{\infty,-\infty\}$. We now start by considering the R\'enyi entropy. 

\begin{definition} (R\'enyi entropy \cite{renyi})
	The R\'enyi entropy of order $\alpha \in \mathds{\overline R}$ of a PMF $p_X$ is denoted as $H_{\alpha}(X)$. The orders $\alpha \in(-\infty,0)\cup(0,1)\cup(1,\infty)$ are defined as:
	\begin{align}
		H_{\alpha}(X)
		& \coloneqq
		\frac{1}{1-\alpha}
		\log
		\left(
		\sum_x
		p(x)^{\alpha}
		\right)
		.
		\label{eq:RE}
	\end{align}
	The orders $\alpha \in\{0,1,\infty,-\infty\}$ are defined by continuous extension of \eqref{eq:RE} as: $H_0(X)\coloneqq \log |{\rm supp}(p_X)|$, $H_1(X)\coloneqq H(X)$, with $H(X) \coloneqq -\sum_x p(x)\log p(x)$ the Shannon entropy \cite{CT}, $H_{\infty}(X)\coloneqq-\log \max_x p(x)=-\log p_{\rm max}$, and $H_{-\infty}(X)\coloneqq -\log \min_x p(x) = - \log p_{\rm min}$. The R\'enyi entropy is a function of the PMF $p_X$ and therefore, one can alternatively write $H_\alpha(p_X)$. However, we keep the convention of writing $H_{\alpha}(X)$.
\end{definition}
The R\'enyi entropy is mostly considered for positive orders, but it is also sometimes explored for negative values \cite{NV1, NV2, SR1, SR2}. In this work we use the whole spectrum $\alpha \in \mathds{\overline R}$. We now consider the Arimoto-R\'enyi extension of the conditional entropy.
\begin{definition} (Arimoto-R\'enyi conditional entropy \cite{arimoto}) The Arimoto-R\'enyi conditional entropy of order $\alpha \in \mathds{\overline R}$ of a joint PMF $p_{XG}$ is denoted as $H_{\alpha}(X|G)$. The orders $\alpha \in (-\infty,0) \cup (0,1) \cup (1,\infty)$ are defined as:
\begin{align}
	H_{\alpha}(X|G)
	& \coloneqq
	\frac{\alpha}{(1-\alpha)}
	\log
	\left[
	\sum_g
	\left(
	\sum_x
	p(x,g)^\alpha
	\right)^\frac{1}{\alpha}	
	\right]
	.
	\label{eq:ARCE}
\end{align}
	The orders $\alpha \in\{0,1,\infty,-\infty\}$ are defined by continuous extension of \eqref{eq:ARCE} as: $H_{0}(X|G) \coloneqq \log \max_g |{\rm supp}(p_{X|G=g})|$, $H_{1}(X|G) \coloneqq H(X|G)$, with $H(X|G) \coloneqq -\sum_{x,g}p(x,g)\log p(x|g)$ the conditional entropy \cite{CT}, $H_{\infty}(X|G)
	\coloneqq
	-\log 
	\sum_g
	\max_x
	p(x,g)$, and $H_{-\infty}
	(X|G)
	\coloneqq
	-
	\log 
	\sum_g
	\min_x
	p(x,g)$. Arimoto-R\'enyi conditional entropy is a function of the joint PMF $p_{XG}$ and therefore, one can alternatively write $H_\alpha(p_{XG})$. However, we keep the convention of writing $H_{\alpha}(X|G)$.
\end{definition}
We remark that there are alternative ways to R\'enyi-extend the conditional entropy \cite{review_RCE}. The Arimoto-R\'enyi conditional entropy is however, the only one (amongst five alternatives \cite{review_RCE}) that simultaneously satisfy the following desirable properties for a conditional entropy \cite{review_RCE}: i) monotonicity, ii) chain rule, iii) consistency with the Shannon entropy, and iv) consistency with the $\infty$ conditional entropy (also known as min-entropy). Consistency with the conditional entropy means that $\lim_{\alpha \rightarrow 1}H_{\alpha}(X|G) = H(X|G)$, and similarly for property iv). In this sense, one can think about the Arimoto-R\'enyi conditional entropy as the ``most appropriate" R\'enyi-extension (if not the outright ``proper" R\'enyi extension) of the conditional entropy. We now consider Arimoto's mutual information, and its associated R\'enyi channel capacity 
\begin{definition} (Arimoto's $\alpha$-mutual information \cite{arimoto})
	Arimoto's mutual information of order $\alpha \in\mathds{\overline R}$ of a joint PMF $p_{XG}$ is given by:
	\begin{align}
		I
		_\alpha(X;G)
		&\coloneqq
		\sgn(\alpha)
		\left[
		H_{\alpha}(X)
		-
		H
		_{\alpha}(X|G)
		\right]
		,
		\label{eq:DMA}
	\end{align}	
	with the R\'enyi entropy \eqref{eq:RE} and the Arimoto-R\'enyi conditional entropy \eqref{eq:ARCE}. The case $\alpha=1$ reduces to the standard mutual information \cite{CT} $I_1(X;G) = I(X;G)$, with $I(X;G) \coloneqq H(X)-H(X|G)$. Arimoto's $\alpha$-mutual-information is a function of the joint PMF $p_{XG}$ and therefore, one can alternatively write $I_\alpha(p_{XG})$ or $I_\alpha(p_{G|X}p_X)$, the latter taking into account that $p_{XG} = p_{G|X}p_X$. We use these three different notations interchangeably. 
\end{definition}
\begin{definition} (R\'enyi channel capacity \cite{arimoto, csiszar, remarks, Nakiboglu}) 
The R\'enyi channel capacity of order $\alpha \in \mathds{\overline R}$,  of a conditional PMF $p_{G|X}$ is given by:
\begin{align}
    C_{\alpha}
		(p_{G|X})
    \coloneqq
      \max_{p_X}
		I_{\alpha}
		(p_{G|X}p_X)
    \label{eq:iso}
\end{align}
The case $\alpha=1$ reduces to the standard channel capacity \cite{CT} $C_1(p_{G|X})=C(p_{G|X})=\max_{p_X}I(X;G)$.
\end{definition}
We remark that there are alternative candidates as R\'enyi-extensions of the mutual information \cite{review_RCE, review_RMI}. In particular, we highlight the $\alpha$-mutual information measures of Sibson \cite{sibson}, Csisz\'ar \cite{csiszar}, and Bleuler-Lapidoth-Pfister \cite{BLP1}, which we address in the appendices as $I^{\rm V}_\alpha(X;G)$ with the label $\rm V\in\{S,C,BLP\}$ representing each case. These $\alpha$-mutual informations are going to be useful, in particular, due to their connection to conditional R\'enyi divergences. We address these information-theoretic quantities in \cref{AA}. We now extend these information-theoretic quantities to the quantum domain.

\subsection{Arimoto's $\alpha$-mutual information and R\'enyi channel capacity in a quantum setting}
\label{s:arimoto2}

We now move on to describe Arimoto's $\alpha$-mutual information in a quantum setting, as well as the R\'enyi channel capacity.
\begin{remark}(Arimoto's $\alpha$-mutual information in a quantum setting)
	We address Arimoto's $\alpha$-mutual information between two classical random variables encoded into quantum objects. Explicitly, the random variable $X$ is  encoded in an ensemble of states $\mathcal{E}=\{\rho_x,p(x)\}$ and therefore, we address it as $X_\mathcal{E}$. On the other hand, $G$ is considered as the random variable obtained from a decoding measurement $\mathds{D}=\{D_g = \ketbra{g}{g}\}$ and therefore, we address it as $G_\mathbb{D}$. We consider a conditional PMF as $p_{G|X}^{(\mathbb{M},\mathcal{S})}$, given by $p(g|x)\coloneqq \tr [D_g \Lambda_\mathbb{M}(\rho_x)]$, $\mathcal{S} \coloneqq \{\rho_x\}$ a set of states, and the quantum-to-classical (measure-prepare) channel associated to the measurement $\mathbb{M}$ given by:
	\begin{align}
    	\Lambda_\mathbb{M}(\sigma)
    	\coloneqq
    	\sum_a
    	\tr
    	[M_a\sigma]
    	\ketbra{a}{a},
    	\label{eq:qc}
	\end{align} 
	with $\{\ket{a}\}$ an orthonormal basis. We effectively have $p(g|x)\coloneqq \tr [M_g \rho_x]$ and therefore we can think about the decoding variable $G_{\mathbb{D}, \mathcal{E}}$ as $G_{\mathbb{M}, \mathcal{E}}$. We are now interested in the $\alpha$-mutual information quantifying the dependence between variables $X_\mathcal{E}$ and $G_{\mathbb{M}, \mathcal{E}}$, when encoded and decoded in the quantum setting described previously. We then consider Arimoto's $\alpha$-mutual information:
	\begin{align}
		I_\alpha
		(X;G)_{\mathcal{E}, \mathbb{M}}
		&\coloneqq 
		\sgn(\alpha)
		\left[
		H_{\alpha}
		(X)_\mathcal{E}
		-
		H_{\alpha}
		(X|G)_{\mathcal{E}, \mathbb{M}}
		\right]
		,
	\end{align}
	with the standard R\'enyi entropy \eqref{eq:RE} and the Arimoto-R\'enyi conditional entropy \eqref{eq:ARCE} for the quantum conditional PMF described above. In similar manner, we are also interested in a noisy Arimoto's $\alpha$-mutual information for any quantum channel $\mathcal{N}(\cdot)$, which we write as 
	$I_\alpha (X;G)_{\mathcal{E}, \mathbb{M}, \mathcal{N}}$, where the conditional PMF is now given by $p(g|x) = \tr[ M_g \mathcal{N} (\rho_x) ]$. In particular, we are going to be interested in the quantity
	\begin{align}
	    I_\alpha (X;G)_{\mathcal{E}, \mathcal{N}}
	    \coloneqq
	    \max_{\mathbb{M}}
	    I_\alpha (X;G)_{\mathcal{E}, \mathbb{M}, \mathcal{N}}
	    .
	\end{align}
\end{remark}
We now consider the R\'enyi capacity in this quantum setting.
\begin{remark} (R\'enyi capacity of a quantum conditional PMF)
	The R\'enyi capacity of order $\alpha \in \mathds{\overline R}$ of a quantum conditional PMF $p_{G|X}^{(\mathbb{M},\mathcal{S})}$ is given by:
	\begin{align}
        C_{\alpha}
    		\left(
    		p_{G|X}^{(\mathbb{M},\mathcal{S})}
    		\right)
        \coloneqq
          \max_{p_X}
    		I_{\alpha }
    		\left(
    		p_{G|X}^{(\mathbb{M},\mathcal{S})}
    		p_X
    		\right).
        \label{eq:iso}
    \end{align}
\end{remark}
The quantity we are interested in the quantum domain is the R\'enyi capacity of order $\alpha$ of a quantum-classical channel.

\begin{definition}(R\'enyi capacity of a quantum-classical channel) The R\'enyi capacity of order $\alpha \in \mathbb{\overline R}$ of a quantum-classical channel $\Lambda_\mathbb{M}$ associated to the measurement $\mathbb{M}$ is given by:
\begin{align}
    C_{\alpha}
	(\Lambda_\mathbb{M})
    \coloneqq
        \max_{\mathcal{S}}
        C_{\alpha}
        \left(
        p_{G|X}^{(\mathbb{M},\mathcal{S})}
        \right)
        =
        \max_\mathcal{E}
        	I_{\alpha}
        	\left(
        	p_{G|X}^{(\mathbb{M},\mathcal{S})}
        	p_X
        	\right)
        ,
\end{align}
with the maximisation over all sets of states $\mathcal{S}=\{\rho_x\}$ or over all ensembles $\mathcal{E}=\{\rho_x, p(x)\}$.
\end{definition}
We now address a resource-theoretic approach for measurement informativeness and non-constant channels.

\subsection{The quantum resource theories of measurement informativeness and non-constant channels}
\label{s:RoMI}

The framework of quantum resource theories (QRTs) has proven a fruitful approach towards quantum theory \cite{QRT_QP, RT_review}. In this work we particularly deal with convex QRTs of measurements, channels. We start with the QRT of measurement informativeness \cite{SL}.
\begin{definition} (QRT of measurement informativeness \cite{SL}) Consider the set of Positive-Operator Valued Measures (POVMs) acting on a Hilbert space of dimension $d$. A POVM $\mathbb{M}$ is a collection of POVM elements  $\mathbb{M}=\{M_a\}$ with $a\in \{1,...,o\}$ satisfying $M_a\geq 0$ $\forall a$ and $\sum_a M_a=\mathds{1}$.  We now consider the resource of informativeness \cite{SL}. We say a measurement $\mathbb{N}$ is uninformative when there exists a PMF $q_A$ such that $N_a=q(a)\mathds{1}$, $\forall a$. We say that the measurement is informative otherwise, and denote the set of all uninformative measurements as ${\rm UI}$. 
\end{definition}
The set of uninformative measurements forms a convex set and therefore, defines a convex QRT of measurements. We now introduce the notion of simulability of measurements, which is also called classical post-processing (CPP).
\begin{definition}(Simulability of measurements \cite{simulability, SL})
A measurement $\mathbb{N}=\{N_x\}$, $x\in \{1,...,k\}$ is simulable by the measurement $\mathbb{M}=\{M_a\}$, $a\in \{1,...,o\}$ when there exists a conditional PMF $q_{X|A}$ such that:
$N_x=\sum_a q(x|a)M_a$, $\forall x$. The simulability of measurements defines a partial order for the set of measurements which we denote as $\mathbb{N} \preceq \mathbb{M}$, meaning that $\mathbb{N}$ is simulable by $\mathbb{M}$. Simulability of the measurement $\mathbb{N}$ can alternatively be understood as a classical post-processing of the measurement $\mathbb{M}$.
\end{definition} 
Two quantifiers for informativeness are the following.
\begin{definition} (Generalised robustness and weight of informativeness)
The generalised robustness \cite{GRoE, SL} and the weight \cite{EPR2, DS} of informativeness of a measurement $\mathbb{M}$ are given by:
\begin{align}
{\rm R}\left(\mathbb{M}\right)
&\coloneqq
{\scriptsize
	\begin{matrix}
	\text{\small \rm min}\\
	r \geq 0\\
	\mathbb{N} \in {\rm UI} \\
	\mathbb{M}^G \\
	\end{matrix}
}
\left\{ 
\rule{0cm}{0.6cm} r\,\bigg| \, M_a+rM^G_a=(1+r)N_a
\right\},
\label{eq:RoI}\\
{\rm W}\left(\mathbb{M}\right)
&\coloneqq
{\scriptsize
	\begin{matrix}
	\text{\small \rm min}\\
	w \geq 0\\
	\mathbb{N} \in {\rm UI} \\
	\mathbb{M}^G \\
	\end{matrix}
}
\left\{ 
\rule{0cm}{0.6cm} w\,\bigg| \, M_a=wM^G_a+(1-w)N_a
\right\}.
\label{eq:WoI}
\end{align}
The generalised robustness quantifies the minimum amount of a general measurement $\mathbb{M}^G$ that has to be added to $\mathbb{M}$ such that we get an uninformative measurement $\mathbb{N}$. The weight on the other hand, quantifies the minimum amount of a general measurement $\mathbb{M}^G$ that has to be used for recovering the measurement $\mathbb{M}$.
\end{definition}
These resource quantifiers are going to be useful later on. We now introduce the QRT of non-constant channels.

\begin{definition}
    (QRT of non-constant channels) Consider the set of completely-positive trace-preserving (CPTP) maps acting on a Hilbert space of dimension $d$. We now consider the resource of non-constant channels. We say that a channel $\mathcal{N}(\cdot)$ is constant, when there exist a state $\rho_\mathcal{N}$ such that $\mathcal{N}(\rho) = \rho_\mathcal{N}$, $\forall \rho \in D(\mathds{H})$. We say that a channel is non-constant otherwise, and denote the set of all constant channels as $\mathcal{C}$.
\end{definition}
We now consider information-theoretic quantities for various general QRTs.

\subsection{Arimoto-type information-theoretic quantities for general QRTs of measurements, channels, states, and state-measurement pairs}
\label{s:gap}

We now address a generalisation of Arimoto's $\alpha$-mutual information to the concept of Arimoto's gap for general resources of measurements, channels, states, and state-measurement pairs. In order to introduce the concept of Arimoto's gap, let us first fix some notation.  In this subsection we consider general QRTs with arbitrary resources, meaning that we address a set of free measurements as $\mathbb{F}$, and a set of free channels as $\mathcal{F}$, which are usually assumed to be convex and closed sets \cite{TR1, TR2, DS}. We now introduce the concept of \emph{Arimoto's gap}, which is defined in terms of the standard Arimoto's $\alpha$-mutual information, and for which we introduce here two variants as follows.

\begin{definition} (Arimoto's gap for measurements and channels \cite{TR2,MO})
Consider a set of free measurements as $\mathbb{F}$, and a pair $(\mathcal{E}, \mathbb{M})$, \emph{Arimoto's gap on POVMs} of order $\alpha \in \mathds{\overline{R}}$ for such a pair is given by:
\begin{align}
    G_{\alpha}^{\mathbb{F}}
	(X;G)_{\mathcal{E},\mathbb{M}}
	\coloneqq
	I_{\alpha}
	(X;G)_{\mathcal{E},\mathbb{M}}
	-
	\max_{\mathbb{N}\in \mathds{F}}
	I_{\alpha}
	(X;G)_{\mathcal{E},\mathbb{N}}
	.
\end{align}
Similarly, consider a set of free channels $\mathcal{F}$ and a triple $(\mathcal{E}, \mathbb{M}, \mathcal{N})$, \emph{Arimoto's gap on channels} of order $\alpha \in \mathds{\overline{R}}$ for such a triple is given by:
\begin{multline}
    G_{\alpha}^{\mathcal{F}}
	(X;G)_{\mathcal{E},\mathbb{M},\mathcal{N}}
	\\ 	
	\coloneqq
	I_{\alpha}
	(X;G)_{\mathcal{E},\mathbb{M},\mathcal{N}}
	-
	\max_{\mathcal{\widetilde{N}} \in \mathcal{F}}
	\max_{\mathbb{N}}
	I_{\alpha}
	(X;G)_{\mathcal{E}, \mathbb{N}, \mathcal{\widetilde{N}}}	.
\end{multline}
Similarly to the previous section, we also address a more refined quantity as:
\begin{align}
    G_{\alpha}^{\mathcal{F}}
	(X;G)_{\mathcal{E},\mathcal{N}}
	\coloneqq
	\max_{\mathbb{M}}
	G_{\alpha}^{\mathcal{F}}
	(X;G)_{\mathcal{E},\mathbb{M},\mathcal{N}}
	.
\end{align}
\end{definition}

These quantities are information-theoretic in nature, being defined in terms of Arimoto's $\alpha$-mutual information. We can think about them as the maximum \emph{gap}, in terms of the Arimoto's $\alpha$-mutual information, between the free set $\mathbb{F}$ $(\mathcal{F})$ and the fixed object of interest $\mathbb{M}$ $(\mathcal{N})$. These two measures can be thought of as generalisations of Arimoto's noisy $\alpha$-mutual information and Arimoto's $\alpha$-mutual information, respectively. This can be checked by setting $(\mathbb{F} = \mathbb{UI})$ and $(\mathcal{F} = \mathcal{C})$, for which we get:
\begin{align}
    G_{\alpha}^{\mathbb{UI}}
	(X;G)_{\mathcal{E},\mathbb{M}}
	&=
	I_{\alpha}
	(X;G)_{\mathcal{E},\mathbb{M}}
	,
    \\
    G_{\alpha}^{\mathcal{C}}
	(X;G)_{\mathcal{E},\mathbb{M},\mathcal{N}}
	&=
	I_{\alpha}
	(X;G)_{\mathcal{E},\mathbb{M},\mathcal{N}}
	.
\end{align}
This is because uninformative measurements achieve $p(g|x) = \tr[M_g\rho_x] = p(g)\tr[\rho_x] = p(g)$, and similarly for constant channels $p(g|x) = \tr[M_g\mathcal{\widetilde{N}}(\rho)] = \tr[M_g\rho_\mathcal{\widetilde{N}}] = p(g)$, meaning that random variables $G$ and $X$ are independent from each other in both cases and therefore
\begin{align}
    \max_{\mathbb{N}\in \mathds{UI}}
	I_{\alpha}
	(X;G)_{\mathcal{E},\mathbb{N}}
    =
    \max_{\mathcal{\widetilde{N}} \in \mathcal{C}}
	\max_{\mathbb{N}}
	I_{\alpha}
	(X;G)_{\mathcal{E},\mathbb{N},	\mathcal{\widetilde{N}}}
	=
	0
	.
\end{align}

Inspired by these information-theoretic quantities for measurements and channels, we now also consider Arimoto-type gaps for states as well as for a hybrid scenario with state-measurements pairs. Similarly for the case of measurements and channels, we address a set of free states as $\rm F$, which is usually assumed to be convex and closed \cite{TR1, TR2}. We now define two variants of the concept of Arimoto's gap for QRTs of states as well as for QRTs of state-measurement pairs.

\begin{definition}(Arimoto's gap for states and for state-measurement pairs)
Consider a set of free states ${\rm F}$, and a triple $(\Lambda, \mathbb{M}, \rho)$, then, \emph{Arimoto's gap on states} of order $\alpha \in \mathds{\overline{R}}$ for such a triple
is given by:
\begin{align}
    G_{\alpha}^{{\rm F}}
	(X;G)_{\Lambda,\mathbb{M},\rho}
	\coloneqq
	I_{\alpha}
	(X;G)_{\Lambda,\mathbb{M},\rho}
	-
	\max_{\sigma \in {\rm F}}
	I_{\alpha}
	(X;G)_{\Lambda,\mathbb{M},\sigma}
	.
\end{align}
Similarly, consider a set of free states ${\rm F}$, a set of free measurements $\mathbb{F}$, and a triple $(\Lambda, \mathbb{M}, \rho)$, then, \emph{Arimoto's gap on state-measurement pairs} of order $\alpha \in \mathds{\overline{R}}$ for such a triple is given by:
\begin{multline}
    G_{\alpha}^{{\rm F},\mathbb{F}}
	(X;G)_{\Lambda,\mathbb{M},\rho}
		\\	
	\coloneqq
	I_{\alpha}
	(X;G)_{\Lambda,\mathbb{M},\rho}
	-
	\displaystyle
	\max_{\substack{
	\sigma \in {\rm F}
	\\
	\mathbb{N}\in \mathds{F}
	}
	}
	I_{\alpha}
	(X;G)_{\Lambda,\mathbb{N},\sigma}
	.\end{multline}
\end{definition}

Similarly to the previous variants on Arimoto's gaps, we have that these information-theoretic measures can be understood as quantifying the maximum \emph{gap}, in terms of the standard Arimoto's $\alpha$-mutual information, between the set of free objects and a fixed triple $(\Lambda, \mathbb{M}, \rho)$. The first variant was first introduced in \cite{TR2} whilst the second multi-object variant was first introduced in \cite{MO}.

Here we finish with the preliminary concepts and theoretical tools needed to describe our main results which we do next.

\section{Quantum betting tasks with risk aversion}
\label{s:QBT}

We now introduce the main new operational tasks that we consider in this work. We start by describing quantum betting tasks being played by gamblers with different risk tendencies. This is inspired by both standard quantum state discrimination and horse betting games in classical information theory.

Horse betting (HB) games were first introduced by Kelly in 1956 \cite{kelly}, a modern introduction can be found, for instance, in Cover \& Thomas \cite{CT}, as well as in the lectures notes by Moser \cite{LN_moser}. Recently, Bleuler, Lapidoth, and Pfister generalised HB games in order to include a factor $\beta = 1-R$ \cite{BLP1}, representing the risk-aversion of the Gambler (Bob) playing these games, with standard HB games being recovered by setting $\beta=0$, corresponding to $R = 1$, i.e. a risk-averse Bob.

Inspired by this, here we introduce three types of quantum betting tasks. First, we introduce quantum state betting (QSB) games. Specifically, we will introduce two variants of QSB games in the form of quantum state discrimination (QSD) with risk, and quantum state exclusion (QSE) with risk.  We will then introduce the central figure of merit for QSB games -- the isoelastic certainty equivalent (ICE), and show how it generalises the quantification of standard quantum state discrimination and exclusion. We then introduce important variants of this first game. In particular, we introduce noisy quantum state betting (nQSB) games and quantum channel betting (QCB) games, which generalises both quantum channel discrimination and exclusion.  The tasks considered in this section, and the way they relate to each other is depicted in \autoref{fig:tasks}.

\begin{figure}[h!]
    \centering
    \includegraphics[scale=0.67]{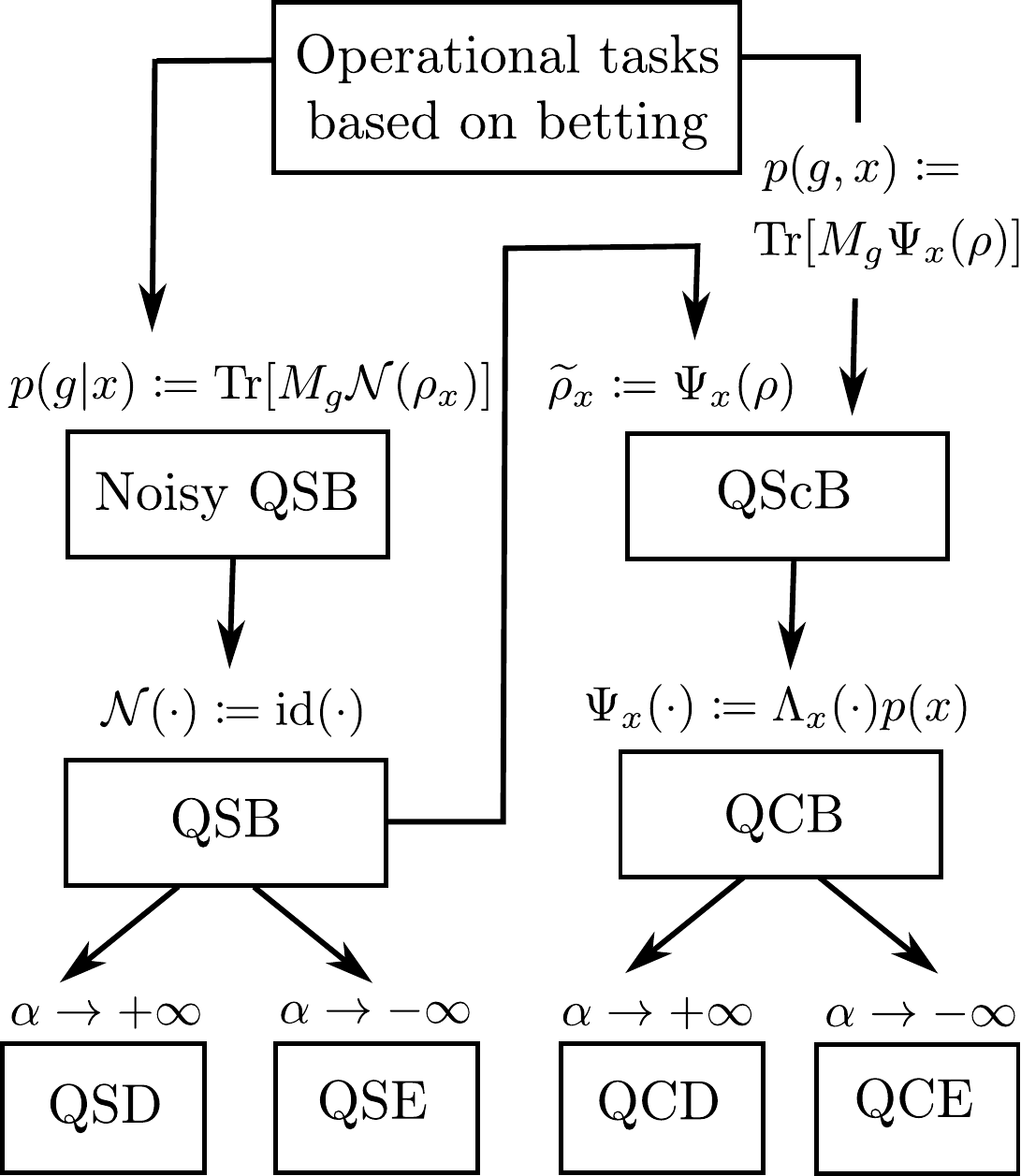}
    \caption{Operational tasks based on betting and risk-aversion. Quantum state betting (QSB), quantum subchannel betting (QScB), quantum channel betting (QCB), quantum state discrimination/exclusion (QSD/QSE), quantum channel discrimination/exclusion (QCD/QCE). $A \rightarrow B$ means that the task $A$ is more general than $B$.} 
    \label{fig:tasks}
\end{figure}

\subsection{Quantum state betting games}
\label{ss:QSB}

Consider two rational agents, a Referee (Alice) and a Gambler (Bob). Alice is in possession of an ensemble of quantum states $\mathcal{E} = \{\rho_x, p(x)\}$, $x\in\{1,...,K\}$, and is going to send one of these states to Bob, say $\rho_x$. We address here a \emph{quantum state}, or \emph{state} for short, as a positive semidefinite ($\rho_x \geq 0$) and trace one ($\tr(\rho_x)=1$) operator in an finite-dimensional Hilbert space.

As above, we will consider two different classes of state betting games, gain games, and loss games. In a gain game, Alice offers Bob odds $o(x)$, which is a positive function ($o(x)>0$, $\forall x$) but not necessarily a PMF, such that if Bob places a unit bet on the state being $\rho_x$, and this is the correct state, then Alice will pay out $o(x)$ to Bob. In a loss game, on the contrary, we take the `odds' to be \emph{negative}, $o(x) < 0$, for all $x$, such that if Bob places a unit bet on $\rho_x$, then he will have to pay out to Alice an amount $|o(x)|$.\footnote{That is, similarly to in thermodynamics, we take the sign of the odds to signify whether this is a gain or a loss for Bob.}

In order to decide how to place his bets, Bob is allowed to first perform a \emph{quantum measurement} on the state given to him by Alice. In general, this will be a positive operator-valued measure (POVM), $\mathbb{M}=\{M_g\}$, $M_g\geq 0$ $\forall g$, $\sum_g M_g=\mathds{1}$,  which will allow him to (hopefully) extract some useful information from the state. 

Let us assume that Bob measures the state he receives from Alice using a measurement $\mathbb{M}=\{M_g\}$, producing a measurement result $g$, with probability given by the Born rule, $p(g|x)=\tr[M_g\rho_x]$. Bob will then use this result to decide on his \emph{betting strategy}. We assume that he bets all of his wealth, and divides this in some way amongst all the possible options $x\in \{1,...K\}$. That is, Bob's strategy is a PMF $b_{X|G}$, such that Bob bets the proportion $b(x|g)$ of his wealth on state $x$ being the sent state, when his measurement outcome was $g$.\footnote{Note that for loss games, Bob can end up having to pay out \emph{more} than the wealth he bet (similarly to how in a game gain Bob can walk away with more wealth than he started with).} We note that Bob's overall strategy is then defined by the pair $(b_{X|G}, \mathbb{M})$. We also note that the PMF $p_X$ from the ensemble of states together with the conditional PMF $p_{G|X}$ from the measurement implemented by Bob, defines the joint PMF $p_{XG} \coloneqq p_{G|X}p_X$.

Therefore, when the quantum state was $\rho_x$, and Bob obtained the measurement outcome $g$, he bet the proportion of his wealth $b(x|g)$ on the actual state, and hence Alice either pays out $w(x,g) = o(x)b(x|g)$ in the case of a gain game, or Bob has to pay Alice the amount $|w(x,g)|$ (i.e. he loses $|w(x,g)|$) in a loss game. We can view gain games as a generalisation of \emph{state discrimination}. Here, since Bob is winning money, it is advantageous, in general, for him to correctly \emph{identify} the state that was sent. On the other hand, we see that loss games can be viewed as a generalisation of \emph{state exclusion}, since now in order to minimise his losses, it is useful for Bob to be able to \emph{avoid} or \emph{exclude} the state that was sent. 

Finally, we note that the settings of the game are specified by the pair $(o_X,\mathcal{E})$. It is important to stress that by assumption Bob is fully aware  of the settings of the game, meaning that the pair $(o_X,\mathcal{E})$ is known to him prior to playing the game, and therefore he can use this knowledge in order to select an optimal betting strategy $b_{X|G}$.
 
\subsection{Figure of merit for quantum state betting games}
\label{ss:QSB1}

Given these two variants of QSB games, we now want analyse the behaviour of different types of Gamblers (represented by different utility functions), according to their risk tendencies. We will consider quantities of interest like in the previous sections such as: expected wealth, expected utility, and similar. In particular, we model Gamblers with utility functions displaying constant relative risk aversion (CRRA) and therefore, the utility functions we consider are isoelastic functions $u_R(w)$ \eqref{eq:isoelastic}. The figure of merit we are interested in is then the \emph{isoelastic certainty equivalent (ICE)} $w^{ICE}_R$ with $R \in \mathds{\overline R}$. For risk $R\in (-\infty,1) \cup (1,\infty)$, this quantity is given by:
\begin{align}
		w^{ICE}_R&
		(b_{X|G}, \mathbb{M}, o_X,\mathcal{E})
		\nonumber
		\\
		&=
    	u_R^{-1}
    	\left(
    	\mathbb{E}_{p_{XG}}
    	\left[
    	u_R
    		(
    		w_{XG}
    		)
    	\right]
    	\right)
    	,
    	\nonumber
    	\\
		&=
		\left[
		\sum_{g,x}
		\big[
		b(x|g)
		o(x)\big]^{1-R}
		p(g|x)
		p(x)
		\right]^\frac{1}{1-R}
		.
		\label{eq:wCE_QSB}
\end{align}	
The cases $R \in\{1,\infty,-\infty\}$ are defined by continuous extension of \eqref{eq:wCE_QSB}. 
In summary, the game is specified by the pair $(o_{X},\mathcal{E})$, the behavioural tendency of Bob is represented by the utility function $u_R(w_{XG})$ with a fixed $R \in \mathds{\overline R}$, the overall strategy of Bob is specified by the pair $(b_{X|G}, \mathbb{M})$, and the figure of merit here considered is the isoelastic certainty equivalent (ICE) \eqref{eq:wCEF}. We can alternatively address these operational tasks as horse betting games with risk and quantum side information, or quantum horse betting (QHB) games for short, and we describe this in more detail later on.

Bob is in charge of the measurement and the betting strategy ($b_{X|G}, \mathds{M}$), so in particular, for a fixed measurement $\mathds{M}$, Bob is interested in maximising the ICE (maximising gains in a gain game, and minimising losses in a loss game) so we are going to be interested in the following quantity:
\begin{align*}
    \max_{b_{X|G}}
	\,
	w^{ICE}_R
	\left(
	b_{X|G}, \mathbb{M}, o_X,\mathcal{E}
	\right)
    ,
\end{align*}
for a fixed QSB game $(o_X,\mathcal{E})$ with either positive or negative odds, and Bob's risk tendencies being fixed, and specified by an isoelastic utility function $u_R$.

\subsection{Quantum state betting games generalise discrimination and exclusion games}
\label{ss:QSB2}

We will now show that quantum state betting games with risk can indeed be seen as generalisations of standard quantum state discrimination and exclusion games. We can see this by considering a risk-neutral ($R=0$) Bob playing a gain game (positive odds) which are constant: $o^{c}(x) \coloneqq C$, $C>0$, $\forall x$, in which case we find that the quantity of interest becomes:
\begin{align}
	\max_{
			b_{X|G}
		}
		\,
	w^{ICE}_{0}
	(b_{X|G}, \mathbb{M}, o^{c}_X,\mathcal{E})
	&
	=
	C
	\max_{
			b_{X|G}
		}
		\,
	\sum_{g,x}
	b(x|g)
	p(g|x)
	p(x)
	,\nonumber 
	\\
	&=
	C \,
	P^{\rm QSD}_{\rm succ}(\mathcal{E},\mathbb{M})
	.
\end{align}	
For more details on standard quantum state discrimination games we refer to \cite{SL, TR2}. Therefore, standard quantum state discrimination can be seen as as special instance of quantum state betting games with constant odds, and played by a risk-neutral player. Similarly, for a loss game, with negative constant odds $o^{-c}(x) \coloneqq -C$, $C>0$, $\forall x$:
\begin{align}
	\max_{
			b_{X|G}
		}
		\,
	w^{ICE}_{0}
	(b_{X|G}, \mathbb{M}, &o^{-c}_X,\mathcal{E})
	\nonumber
	\\
	&
	=
	C
	\max_{
			b_{X|G}
		}
		\,
	-
	\sum_{g,x}
	b(x|g)
	p(g|x)
	p(x)
	,\nonumber
	\\
	&=
	-
	C \,
	P^{\rm QSE}_{\rm err}(\mathcal{E},\mathbb{M})
	.
\end{align}	
For more details on standard quantum state exclusion games we refer to \cite{DS, uola2020}. Therefore, standard quantum state exclusion can be seen as a quantum state betting game constant negative odds, again played by a risk-neutral player. 

\subsection{Noisy quantum state betting games}
\label{ss:nQSB}

We now introduce noisy quantum state betting (nQSB) games. We first note that standard QSB games (from the previous section) are implicitly assuming that the states that Alice (referee) sends to Bob (player) are perfectly transmitted, meaning that they are not affected by undesired interactions due to the environment. This is an idealised situation, and a more realistic scenario including such effects can be addressed by considering a completely-positive trace-preserving (CPTP) map (or quantum channel) $\mathcal{N}$, so that the probability of obtaining an outcome $g$ after receiving the state $\rho_x$ is now given by $p(g|x)=\tr[M_g\, \mathcal{N}(\rho_x)]$. We refer to this more general and realistic scenario as \emph{noisy} QSB (nQSB) games.

\begin{definition}(Noisy quantum state betting games)
The isoelastic certainty equivalent (ICE) for a \emph{noisy quantum state betting} (nQSB) game is given by:
\begin{multline}
		w^{\rm nQSB}_R
		(b_{X|G}, \mathbb{M}, o_X,  \mathcal{E}, \mathcal{N})
		\\
		\coloneqq
		w^{ICE}_R
		(b_{X|G}, \mathcal{N}^\dagger(\mathbb{M}), o_X, \mathcal{E})
		\label{eq:wICE_nQSB}
\end{multline}	
with $p(g|x) = \tr[\mathcal{N}^\dagger(M_g)\, \rho_x] = \tr[M_g\, \mathcal{N}(\rho_x)]$, $\mathcal{N}(\cdot)$ a completely-positive trace-preserving (CPTP) map, $\mathbb{M}=\{M_g\}$ a POVM, and the POVM $\mathcal{N}^\dagger(\mathbb{M})  \coloneqq \{\mathcal{N}^\dagger(M_g)\}$. The cases $R \in\{1,\infty,-\infty\}$ are defined by continuous extension of \eqref{eq:wICE_nQSB}.
\end{definition}

We note that we recover standard QSB games by considering a noiseless scenario $\mathcal{N}(\cdot)={\rm id}(\cdot)$. Whilst noisy QSB games can be seen as noiseless QSB games by considering the POVM $\mathcal{N}^\dagger(\mathbb{M})  \coloneqq \{\mathcal{N}^\dagger(M_g)\}$, it is still important from a physical point to view to make the distinction between both noisy and noiseless scenarios. Later on we will see how this is relevant for the resource theory of non-constant channels.

\subsection{Quantum channel betting (QCB) games}
\label{ss:QCB}

In this subsection we introduce quantum channel betting (QCB) games. Taking inspiration from the previous QSB games, where Bob (player) is asked to bet on an ensemble of states, we now consider Bob being asked to bet instead on a set of channels $\Lambda = \{\Lambda_x\}$, distributed according to a PMF $p_X$. In this scenario, Bob is in possession of a quantum state $\rho$, which he would consequently send to Alice (referee). Alice then proceeds to generate the ensemble $\{\Lambda_x(\rho), p(x)\}$, and send back one of these states to Bob. Bob then proceeds to measure the received state with a fixed POVM $\mathbb{M} = \{M_g\}$, and use the extracted information $g$ in order to place a bet $b_{X|G}$ and effectively play the game. Following a similar logic to the case for QSB games, we can formalise and derive a figure of merit for QCB games in terms of the isoelastic certainty equivalent as follows.
\begin{definition}(Quantum channel betting)
The isoelastic certainty equivalent (ICE) for a \emph{quantum channel betting (QCB)} game is given by:
\begin{multline}
		w^{\rm QCB}_R
		(b_{X|G}, o_X, p_X, \Lambda, \rho,\mathbb{M})
		\\
		\coloneqq
		w^{ICE}_R
		(b_{X|G}, \mathbb{M}, o_X, \mathcal{E}_{\Lambda, \rho})
		\label{eq:wICE_QCB}
\end{multline}	
with $p(g|x)=\tr[M_g \Lambda_x(\rho)]$, $\Lambda = \{\Lambda_x(\cdot)\}$ a set of completely-positive trace-preserving (CPTP) maps, $\mathbb{M}=\{M_g\}$ a POVM, and $\mathcal{E}_{\Lambda,\rho}\coloneqq \{\Lambda_x(\rho), p(x)\}$. The cases $R \in\{1,\infty,-\infty\}$ are defined by continuous extension of \eqref{eq:wICE_QCB}. 
\end{definition}

First, we note here that these tasks can be further extended to \emph{quantum subchannel betting (QScB)} games where we address  a set of subchannels $\Psi=\{\Psi_x(\cdot)\}$, or set of completely-positive trace-nonincresing (CPTNI) maps, with $p(x,g)=\tr[M_g \Psi_x(\rho)]$. Second, whilst QCB can be seen as noiseless QSB games with the ensemble given by $\mathcal{E}_{\Lambda,\rho}\coloneqq \{\Lambda_x(\rho), p(x)\}$, it is still important to distinguish these two cases from a physical point of view, this, because in a QCB game Bob (player) is now allowed to have an influence on the ensemble of states as $\mathcal{E} =  \mathcal{E}_{\Lambda,\rho}$. Third, we can see that QCB games generalise standard channel discrimination and standard channel exclusion as follows. Consider a risk-neutral ($R=0$) Bob playing a gain game (positive odds) which are constant: $o^{c}(x) \coloneqq C$, $C>0$, $\forall x$, in which case we find that the ICE becomes: 
\begin{align}
	\max_{
			b_{X|G}
		}&
		\,
	w^{\rm QCB}_{0}
	(b_{X|G}, o_X^c, p_X, \Lambda,\rho,\mathbb{M})
	\nonumber
	\\
	&
	=
	C
	\max_{
			b_{X|G}
		}
		\,
	\sum_{g,x}
	b(x|g)\,
	\tr[M_g \Lambda_x(\rho)]
	\,
	p(x)
	,\nonumber 
	\\
	&=
	C \,
	P^{\rm QCD}_{\rm succ}
	(\Lambda,\rho,\mathbb{M})
	,
\end{align}	
with $\Lambda=\{\Lambda_x(\cdot)\}$ a set CPTP maps, $\mathbb{M}=\{M_g\}$ a POVM. Therefore, standard quantum channel discrimination can be seen as as special instance of quantum subchannel betting games with constant odds, and played by a risk-neutral player. For more details on standard quantum channel discrimination (QCD) games we refer the reader to \cite{TR1, TR2}. Similarly, for a loss game, with negative constant odds $o^{-c}(x) \coloneqq -C$, $C>0$, $\forall x$: 
\begin{align}
	\max_{
			b_{X|G}
		}&
		\,
	w^{\rm QCB}_{0}
	(b_{X|G}, o_X^{-c}, p_X, \Lambda, \rho,\mathbb{M})
	\nonumber
	\\
	&
	=
	C
	\max_{
			b_{X|G}
		}
		\,
	-
	\sum_{g,x}
	b(x|g)\,
	\tr[M_g \Lambda_x(\rho)]
	p(x)
	,\nonumber
	\\
	&=
	-
	C \,
	P^{\rm QCE}_{\rm err}
	(\Lambda,\rho,\mathbb{M})
	,
\end{align}	
with $\Lambda=\{\Lambda_x(\cdot)\}$ a set of CPTP maps, $\mathbb{M}=\{M_g\}$ a POVM. Therefore, standard quantum channel exclusion can be seen as a quantum channel betting game with constant negative odds, again played by a risk-neutral gambler. For more details on standard quantum channel exclusion (QCE) games we refer the reader to \cite{DS, uola2020}. We now proceed to address our main results.

\section{Main Results}
\label{s:results}

We are now ready to present the main results of this work.

\subsection{Arimoto's $\alpha$-mutual information and quantum state betting games} 
\label{ss:result1}

The main motivation now is to compare the performance of two gamblers via the maximised isoelastic certainty equivalent (ICE)
$\max_{
b_{X|G}
}
w^{ICE}_R
\left(
b_{X|G}, \mathbb{M}, o_X,\mathcal{E}
\right)$. Specifically, we want to compare: i) a general gambler using a fixed measurement $\mathbb{M}$ with ii) the best uninformative gambler, meaning a gambler who can implement any uninformative measurement $\mathbb{N}\in{\rm UI}$, or equivalently, a gambler described by the quantity
$
\max_{\mathds{N} \in {\rm UI}}
\max_{
b_{X|G}
}
w^{ICE}_R
\left(
b_{X|G}, \mathbb{N}, o_X,\mathcal{E}
\right)
$.
We have the following main result.
\begin{result} \label{R_uninformative}
	Consider the a QSB game defined by the pair $(o^{\sgn(\alpha)c}_X, \mathcal{E})$ with constant odds as $o^{\sgn(\alpha)c}(x) \coloneqq \sgn(\alpha)C$, $C>0$, $\forall x$, and an ensemble of states $\mathcal{E}=\{\rho_x,p(x)\}$. Consider a Gambler playing this game using a fixed measurement $\mathbb{M}$ in comparison to a Gambler being allowed to implement any uninformative measurement $\mathbb{N}\in{\rm UI}$. Consider both Gamblers with the same attitude to risk, meaning that they are represented by isoelastic functions $u_R(W)$ with the risk parametrised as $R(\alpha) \coloneqq 1/\alpha$. Each Gambler is allowed to play the game with the optimal betting strategies, meaning they can each propose a betting strategy independently from each other. Remembering that the Gamblers are interested in maximising the isoelastic certainty equivalent (ICE), we have the following relationship:
	\begin{align}
    	&I_{\alpha}
    	(X;G)_{\mathcal{E},\mathbb{M}}
    	\label{eq:result1}
        \\
        \nonumber
        &=
            \sgn(\alpha)
            \log
        	\left[
        	\frac{
        		\displaystyle
        		\max_{
        			b_{X|G}
        		}
        		\,
        		w^{ICE}_{1/\alpha}
        		\left(
        		b_{X|G}
        		,
        		\mathbb{M}
        		,
        		o_X^{\sgn(\alpha)c}
        		,
        		\mathcal{E}
        		\right)
        	}{
        		\displaystyle
        		\max_{\mathbb{N}\in {\rm UI}}
        		\max_{
        			b_{X|G}
        		}
        		\,
        		w^{ICE}_{1/\alpha}
        		\left(
        		b_{X|G}
        		,
        		\mathbb{N}
        		,
        		o_X^{\sgn(\alpha)c}
        		,
        		\mathcal{E}
        		\right)
        	}
        	\right]
        	.
    \end{align}
	This shows that Arimoto's $\alpha$-mutual information quantifies the ratio of the isoelastic certainty equivalent with risk $R(\alpha) \coloneqq 1/\alpha$ of the game defined by $(o^{\sgn(\alpha)c}_X, \mathcal{E})$, when the QSB game is played with the best betting strategy, and when we compare a Gambler implementing a fixed measurement $\mathbb{M}$ against a Gambler using any uninformative measurement $\mathbb{N}\in {\rm UI}$.
\end{result}

The full proof of \cref{R_uninformative} is in \cref{AR_uninformative}. We now analyse two cases of particular interest ($\alpha \in\{\infty, -\infty\}$), as the following corollaries.
\begin{corollary}
	In the case $\alpha\rightarrow\infty$ we recover the result found in \cite{SL}. Explicitly, we have:
	\begin{align}
		C_{\infty}(\Lambda_\mathbb{M})
		&= \max_\mathcal{E} I_{\infty}		(X;G)_{\mathcal{E},\mathbb{M}}, \nonumber	\\ &
		=
		\log
		\left[
		\max_\mathcal{E}
		\frac{
			P^{\rm QSD}_{\rm succ}(\mathcal{E},\mathbb{M})	
		}
		{
			\max_{\mathbb{N}\in {\rm UI}}
			P^{\rm QSD}_{\rm succ}(\mathcal{E},\mathbb{N})
		}
		\right],
	\end{align}
	where  $P^{\rm QSD}_{\rm succ}(\mathcal{E},\mathbb{M})$ is the probability of success in the quantum state discrimination (QSD) game defined by $\mathcal{E}$, with the Gambler using the measurement $\mathbb{M}$, given explicitly by:
	\begin{align}
		P^{\rm QSD}_{\rm succ}(\mathcal{E},\mathbb{M})
		\coloneqq
		\max_{q_{G|A}}
		\sum_{g,a,x}
		\delta^g_x\,
		q(g|a)\,
		p(a|x)\,
		p(x),
	\end{align}
	with $p(a|x)\coloneqq \tr [M_a\rho_x]$, and the maximisation over all classical post-processing $q_{G|A}$. We remark that the R\'enyi capacity of order $\infty$ has also been called as the accessible min-information of a channel, and denoted as $I^{\rm acc}_{\infty}(\Lambda_\mathbb{M})$ \cite{SL, Wilde_book}. This shows that quantum state betting with risk (QSB$_{R(\alpha)}$) becomes equivalent to quantum state discrimination (QSD) when $\alpha \rightarrow  \infty$.
\end{corollary}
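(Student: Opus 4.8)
The plan is to derive the corollary from \cref{R_uninformative} by taking the limit $\alpha\to\infty$. The first equality, $C_\infty(\Lambda_\mathbb{M})=\max_\mathcal{E} I_\infty(X;G)_{\mathcal{E},\mathbb{M}}$, is just the definition of the R\'enyi capacity of the quantum-classical channel $\Lambda_\mathbb{M}$ at order $\alpha=\infty$, so the only content is the second equality, which I would first establish for a fixed ensemble $\mathcal{E}$ and then maximise.

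Since $\alpha\to+\infty$ we have $\sgn(\alpha)=1$, the odds reduce to the positive constant $o^c(x)=C$, and the risk parameter is $R(\alpha)=1/\alpha\to 0^+$. Specialising \eqref{eq:result1} then expresses $I_\alpha(X;G)_{\mathcal{E},\mathbb{M}}$ as the logarithm of the ratio between the maximised ICE for $\mathbb{M}$ and the maximised ICE optimised over uninformative measurements $\mathbb{N}\in{\rm UI}$, both at risk $R=1/\alpha$. As $\alpha\to\infty$ the gambler becomes risk-neutral, and I would invoke the risk-neutral identity established in \cref{ss:QSB2}, $\max_{b_{X|G}} w^{ICE}_0(b_{X|G},\mathbb{M},o^c_X,\mathcal{E})=C\,P^{\rm QSD}_{\rm succ}(\mathcal{E},\mathbb{M})$. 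The constant $C$ then cancels between numerator and denominator, leaving the ratio of discrimination success probabilities.

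To finish, I would simplify the denominator and pass $\max_\mathcal{E}$ through the logarithm. For any uninformative $\mathbb{N}\in{\rm UI}$ one has $N_a=q(a)\mathds{1}$, so $p(a|x)=\tr[N_a\rho_x]=q(a)$ is independent of $x$: the outcome carries no information about the state, the best strategy is to always bet on the most likely label, and hence $P^{\rm QSD}_{\rm succ}(\mathcal{E},\mathbb{N})=\max_x p(x)=p_{\rm max}$, giving $\max_{\mathbb{N}\in{\rm UI}} P^{\rm QSD}_{\rm succ}(\mathcal{E},\mathbb{N})=p_{\rm max}$. Since $\log$ is monotone increasing it commutes with $\max_\mathcal{E}$, and combined with the capacity definition this produces exactly the claimed expression.

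The main obstacle is the interchange $\lim_{\alpha\to\infty}\max_{b_{X|G}} w^{ICE}_{1/\alpha}=\max_{b_{X|G}} w^{ICE}_0$, i.e. swapping the limit in $\alpha$ with the maximisation over betting strategies; this rests on continuity of $w^{ICE}_R$ in $R$ (the value $R=0$ being reached by the same continuous extension used to define the edge cases) together with compactness of the simplex of strategies $b_{X|G}$, and should be argued rather than assumed. A cleaner route that sidesteps this altogether is to evaluate $I_\infty$ directly from the continuous-extension definitions: with $H_\infty(X)=-\log p_{\rm max}$ and $H_\infty(X|G)=-\log\sum_g\max_x p(x,g)$, and using that the optimal classical post-processing in quantum state discrimination is the maximum-a-posteriori rule so that $\sum_g\max_x p(x,g)=P^{\rm QSD}_{\rm succ}(\mathcal{E},\mathbb{M})$, one gets $I_\infty(X;G)_{\mathcal{E},\mathbb{M}}=\log\!\big(P^{\rm QSD}_{\rm succ}(\mathcal{E},\mathbb{M})/p_{\rm max}\big)$ at once, in agreement with the limiting computation.
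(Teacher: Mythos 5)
Your proposal is correct and matches the paper's own proof in substance: the paper likewise specialises \cref{R_uninformative} at $\alpha\to\infty$ and reduces everything to showing $\max_{b_{X|G}} w^{ICE}_{0}(b_{X|G},\mathbb{M},o^{c}_X,\mathcal{E})=C\,P^{\rm QSD}_{\rm succ}(\mathcal{E},\mathbb{M})$, which it obtains from the closed form $C\,p_\alpha(X|G)$ of \cref{lemma5} evaluated as $\alpha\to\infty$, i.e.\ $C\sum_g\max_x p(x,g)$, identified with the success probability via $\max_{q}\sum_x q(x)f(x)=\max_x f(x)$ --- exactly your ``cleaner route.'' The limit/maximisation interchange you flag is indeed the only delicate point, and the paper's use of the closed-form expression for the optimised ICE is precisely how it is sidestepped.
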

\begin{corollary}
	In the case $\alpha\rightarrow -\infty$ we recover the result found in \cite{DS}. Explicitly, we have:
	\begin{align}
		C_{-\infty}(\Lambda_\mathbb{M})
		&= \max_\mathcal{E} I_{-\infty}		(X;G)_{\mathcal{E},\mathbb{M}}, \nonumber	\\ &
		=
		-
		\log
		\left[
		\min_\mathcal{E}
		\frac{
			P^{\rm QSE}_{\rm err}(\mathcal{E},\mathbb{M})	
		}
		{
			\min_{\mathbb{N}\in {\rm UI}}
			P^{\rm QSE}_{\rm err}(\mathcal{E},\mathbb{N})
		}
		\right],
	\end{align}
	where $P^{\rm QSE}_{\rm err}(\mathcal{E},\mathbb{M})$ is the probability of error in the quantum state exclusion (QSE) game defined by $\mathcal{E}$, with the Gambler using the measurement $\mathbb{M}$ explicitly given by:
	\begin{align}
		P^{\rm QSE}_{\rm err}(\mathcal{E},\mathbb{M})
		\coloneqq
		\min_{q_{G|A}}
		\sum_{g,a,x}
		\delta^g_x\,
		q(g|a)\,
		p(a|x)\,
		p(x).
	\end{align}
	with $p(a|x)\coloneqq \tr [M_a\rho_x]$, and the minimisation being performed over all classical post-processing $q_{G|A}$. We remark that the R\'enyi capacity of order $-\infty$ has also been called the excludible information of a channel, and denoted as $I^{\rm exc}_{-\infty}(\Lambda_\mathbb{M})$ \cite{DS, MO}. This shows that quantum state betting with risk (QSB$_{R(\alpha)}$) becomes equivalent to quantum state exclusion (QSE) when $\alpha \rightarrow  -\infty$.
\end{corollary}
In \cref{Acorollaries} we provide further details on these two corollaries.

Result 1 establishes a connection between Arimoto's $\alpha$-mutual information and QSB games, which recovers two known cases at $\alpha \in \{\infty, -\infty\}$ \cite{SL, DS}. We emphasise that the right hand side of \eqref{eq:result1} is a completely \emph{operational} quantity, which represents the advantage that an informative measurement provides when being used as a resource for QSB games, whilst the left hand side is the raw \emph{information-theoretic} mutual information measure proposed by Arimoto and consequently, this result provides an operational interpretation of Arimoto's $\alpha$-mutual information in the quantum domain. 

Furthermore, it shows that the R\'enyi parameter can be interpreted as characterising the risk tendency of the Gamblers as $R = 1/\alpha$. It is also interesting to note that this works for all ensembles $\mathcal{E} = \{\rho_x,p(x)\}$, all measurements $\mathds{M} = \{M_g\}$, as well as for the whole range of the R\'enyi parameter $\alpha \in \mathds{\overline R}$, including negative values. We summarise the interpretation of this result in Fig.~\ref{fig:grid}.
\begin{figure}[h!]
    \centering
    \includegraphics[scale=0.47]{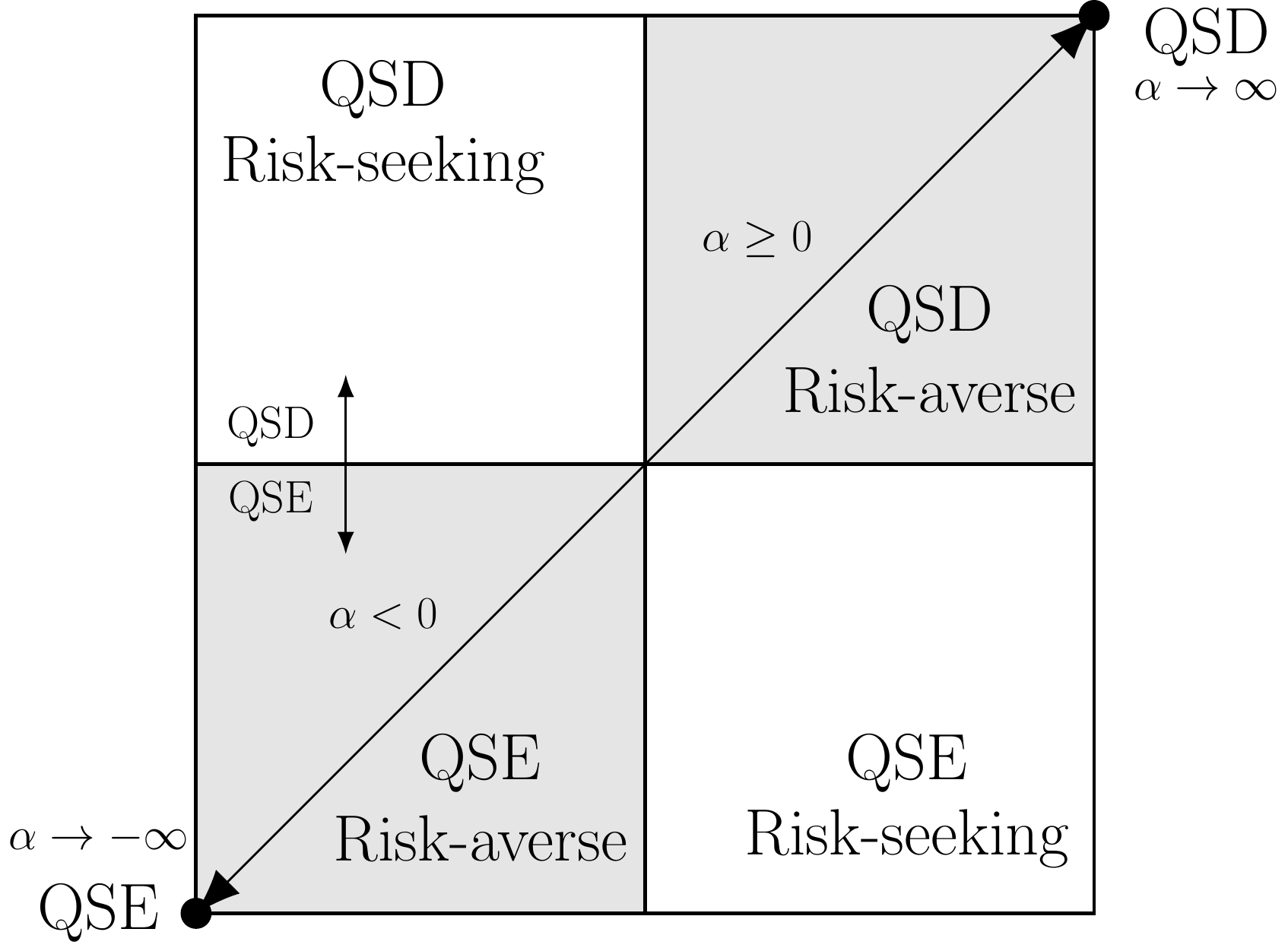}
    \vspace{-0.2cm}
    \caption{
    Possible scenarios for quantum state discrimination (QSD) and quantum state exclusion (QSE) games being played by Gamblers with different risk tendencies: risk-averse, risk-seeking, or risk-neutral, with the risk being parametrised as $R(\alpha)=1/\alpha$. Result 1 establishes that Arimoto's mutual information quantifies the shaded region for $\alpha \in \mathds{\overline R}$, meaning that it characterises risk-averse Gamblers playing either QSD ($\alpha \geq 0$) and QSE games ($\alpha<0$). The left bottom corner $(\alpha \rightarrow -\infty)$ and the top-right corner $(\alpha \rightarrow \infty)$ represent a risk-neutral Gambler $R=0$ playing either standard exclusion or discrimination games, respectively. This means that standard QSD games can be understood as a risk-neutral Gambler playing QSD games with risk. Similarly, standard QSE games can be understood as a risk-neutral Gambler playing QSE games with risk. The middle point at $\alpha \rightarrow 0$ represents the transition between a maximally risk-averse Gambler playing QSD games and a maximally risk-averse Gambler playing QSE games.
    }
    \label{fig:grid}
\end{figure}

We also highlight here that Result 1 lies at the intersection of three major fields: quantum theory, information theory, and the theory of games and economic behaviour. We believe that this result has the potential to spark further cross-fertilisation of ideas between these three major areas of knowledge, with only these particular examples currently being unfolded. We now address the characterisation of additional tasks based on betting and risk-aversion.

\subsection{Arimoto's mutual information and noisy quantum state betting games}
\label{ss:result2}

We now naturally would like to address a characterisation for nQSB games in the same vein that their standard counterpart. Intuitively, we are now addressing a general quantum channel $\mathcal{N}(\cdot)$ as a new ingredient, and that Bob is still in charge of the decoding measurement $\mathbb{M}$. From the noiseless scenario, we understand that Arimoto-like quantities are giving account for the amount of \emph{side information} being conveyed to Bob. When we consider Bob using a fixed measurement, a decisive factor that naturally emerges is the resource of \emph{informativeness}, because this resources defines the frontier for the cases when side information can or cannot be transmitted. In noisy QSB games the other hand, with a general channel $\mathcal{N}(\cdot)$, the same reasoning leads to consider the resource of \emph{non-constant channels}, this, because they will effectively \emph{destroy} the side information carried by the state since $p(g|x) = \tr[N_g \mathcal{N'}(\rho_x)] = \tr(N_g \rho_\mathcal{N'}) = p(g)$, for all constant channels $\mathcal{N'}$, and for all measurements $\mathbb{N}$. The following result confirms this intuition, and consequently characterises nQSB games.
\begin{result}\label{R_nonconstant}
	Consider a nQSB game defined by the pair $(o^{\sgn(\alpha)c}_X, \mathcal{E})$ with constant odds as $o^{\sgn(\alpha)c}(x) \coloneqq \sgn(\alpha) C$, $C>0$, $\forall x$, an ensemble of states $\mathcal{E} = \{\rho_x,p(x)\}$. Consider a Gambler playing this game being able to implement any measurement $\mathbb{M}$, and having access to a fixed channel $\mathcal{N}$. We want to compare this first Gambler against a second Gambler also being allowed to implement any measurement $\mathbb{N}$, but now having access only to constant channels $\mathcal{N}' \in \mathcal{C}$. Consider both Gamblers with the same attitude to risk, meaning that they are represented by isoelastic functions $u_R(W)$ with the risk parametrised as $R(\alpha) \coloneqq 1/\alpha$. Each Gambler is allowed to play the game with optimal betting strategies, meaning they can each propose a betting strategy $b_{X|G}$ independently from each other. Remembering that the Gamblers are interested in maximising the isoelastic certainty equivalent (ICE), we have:
	{\small
	\begin{align}
    	&I_{\alpha}
    	(X;G)_{\mathcal{E},\mathcal{N}}
    	=
    	\label{eq:result5}
        \\
        \nonumber
        &
            \sgn(\alpha)
            \log
        	\left[
        	\frac{
        		\displaystyle
        		\max_{\mathbb{M}}
        		\max_{
        			b_{X|G}
        		}
        		\,
        		w^{\rm nQSB}_{1/\alpha}
        		\left(
        		b_{X|G}
        		,
        		\mathbb{M}
        		,
        		o_X^{\sgn(\alpha)c}
        		,
        		\mathcal{E}
        		,
        		\mathcal{N}
        		\right)
        	}{
        		\displaystyle
        		\max_{\substack{
        		\mathcal{N'}\in
        		\mathcal{C}
        		}}
        		\max_{\mathbb{N}}
        		\max_{
        			b_{X|G}
        		}
        		\,
        		w^{\rm nQSB}_{1/\alpha}
        		\left(
        		b_{X|G}
        		,
        		\mathbb{N}
        		,
        		o_X^{\sgn(\alpha)c}
        		,
        		\mathcal{E}
        		,
        		\mathcal{N}'
        		\right)
        	}
        	\right]
        	.
    \end{align}}
	This means that Arimoto's noisy mutual information quantifies the ratio of the ICE with risk $R(\alpha) \coloneqq 1/\alpha$ of the nQSB game defined by $(o^{\sgn(\alpha)c}_X, \mathcal{E})$, when the nQSB games are being played with the best betting strategy, and when we compare a Gambler implementing a fixed channel $\mathcal{N}$ against a Gambler using any constant channel $\mathcal{N}'\in \mathcal{C}$.
\end{result}

The proof of this result follows a similar argument than that of result 1. We have seen that two natural resources have emerged, or equivalently, two sets of free objects: i) the set of uninformative measurements and ii) the set of constant channels. We then wonder whether the results so far presented are unavoidably linked to these particular resources or, on the other hand, whether they are particular cases of a more general underlying structure governing the relationship between information-theoretic quantities and operational tasks for general QRTs. We address such a question in the next subsection, where we address an extension of these results to general QRTs of measurements and channels with arbitrary resources. 

\subsection{QSB and noisy QSB games for general QRTs of measurements and channels}
\label{ss:result3}

We have seen that both uninformative measurements and non-constant channels are related to Arimoto's mutual information, and we now want to address general resources. In order to do this we can expect to need quantities which are more general than Arimoto's mutual information. We now consider the \emph{Arimoto's gaps} introduced in the previous sections, and provide operational characterisations for these information-theoretic quantities in terms of QSB and nQSB games as follows.

\begin{result}\label{R_measurements_channels}
Consider a set of free measurements as $\mathbb{F}$ and a couple $(\mathcal{E}, \mathbb{M})$, then, \emph{Arimoto's gap on POVMs} of order $\alpha \in \mathds{\overline{R}}$ for such a couple can be written as:
\begin{align}
	&
	G_{\alpha}^{\mathbb{F}}
	(X;G)_{\mathcal{E},\mathbb{M}}
	=
	\label{eq:61}
	\\
    &
        \sgn(\alpha)
        \log
    	\left[
    	\frac{
    		\displaystyle
    		\max_{
    			b_{X|G}
    		}
    		\,
    		w^{\rm QSB}_{1/\alpha}
    		\left(
    		b_{X|G}
    		,
    		o_X^{\sgn(\alpha)c}
    		,
    		\mathcal{E}, \mathbb{M}
    		\right)
    	}{
    		\displaystyle
    		\max_{\mathbb{N}\in \mathbb{F}}
    		\max_{
    			b_{X|G}
    		}
    		\,
    		w^{\rm QSB}_{1/\alpha}
    		\left(
    		b_{X|G}
    		,
    		o_X^{\sgn(\alpha)c}
    		,
    		\mathcal{E}, \mathbb{N}
    		\right)
    	}
    	\right]
    	.
    	\nonumber
\end{align}
Similarly, consider a set of free channels $\mathcal{F}$ and a triple $(\mathcal{E}, \mathbb{M}, \mathcal{N})$, then, \emph{Arimoto's gap on channels} of order $\alpha \in \mathds{\overline{R}}$ for such a triple can be written as:
{\small\begin{align}
	&
	G_{\alpha}^{\mathcal{F}}
	(X;G)_{\mathcal{E},\mathcal{N}}
	=
	\label{eq:62}
	\\
    &
        \sgn(\alpha)
        \log
    	\left[
    	\frac{
    		\displaystyle
    		\max_{\mathbb{M}}
    		\max_{
    			b_{X|G}
    		}
    		\,
    		w^{\rm nQSB}_{1/\alpha}
    		\left(
    		b_{X|G}
    		,
    		o_X^{\sgn(\alpha)c}
    		,
    		\mathcal{E}, 
    		\mathbb{M}, 
    		\mathcal{N}
    		\right)
    	}{
    		\displaystyle
    		\max_{
    		\mathcal{\widetilde{N}}
    		\in
    		\mathcal{F}
    		}
            \max_{\mathbb{N}}
    		\max_{
    			b_{X|G}
    		}
    		\,
    		w^{\rm nQSB}_{1/\alpha}
    		\left(
    		b_{X|G}
    		,
    		o_X^{\sgn(\alpha)c}
    		,
    		\mathcal{E},
    		\mathbb{N},
    		\mathcal{\widetilde{N}}
    		\right)
    	}
    	\right]
    	.
    	\nonumber
\end{align}}
This means that Arimoto-type gaps quantify the usefulness of a given measurement (channel) $\mathbb{M}$ $(\mathcal{N})$ when playing QSB (nQSB) games, in comparison with the best free measurements (channels) $\mathbb{N}\in\mathbb{F}$ $(\mathcal{\widetilde{N}} \in \mathcal{F})$.
\end{result}

The proof of \cref{R_measurements_channels} follows a similar logic to that of result 1 but, for completeness, we present its proof in \cref{AR_measurements_channels}. It is interesting to note the level of generality of this result. This results holds true for any $\alpha \in \mathds{\overline{R}}$, any ensemble $\mathcal{E}$, any measurement $\mathbb{M}$, any channel $\mathcal{N}$, as well as any reasonable and physically motivated choices of sets of free measurements $\mathbb{F}$ and free channels $\mathcal{F}$. In particular, by specifying the sets of free objects we can recover some of the previous results as corollaries.

\begin{corollary}
    Imposing the set of free measurements to be the set of uninformative measurements in \eqref{eq:61} $(\mathbb{F} = \mathbb{UI})$, we recover result 1 \eqref{eq:result1}. Similarly, imposing the set of free channels to be the set of constant channels in \eqref{eq:62} $(\mathcal{F}=\mathcal{C})$, we recover result 2 \eqref{eq:result5}.
\end{corollary}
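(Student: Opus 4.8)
The plan is to observe that this corollary requires essentially no new computation: once Result~3 is established (independently, in \cref{AR_measurements_channels}), both specialisations follow by direct substitution together with the two reductions of Arimoto's gap already recorded in Sec.~\ref{s:gap}. There is no circularity, since \eqref{eq:61} and \eqref{eq:62} are proved on their own and we are merely evaluating them at particular free sets.

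For the measurement case I would begin from the identity $G_{\alpha}^{\mathbb{UI}}(X;G)_{\mathcal{E},\mathbb{M}} = I_{\alpha}(X;G)_{\mathcal{E},\mathbb{M}}$ established in Sec.~\ref{s:gap}, which holds because every $\mathbb{N}\in{\rm UI}$ renders $X$ and $G$ independent, so that $\max_{\mathbb{N}\in{\rm UI}} I_{\alpha}(X;G)_{\mathcal{E},\mathbb{N}}=0$ and the subtracted term in the definition of the gap vanishes. Substituting $\mathbb{F}=\mathbb{UI}$ into \eqref{eq:61}, the left-hand side collapses to $I_{\alpha}(X;G)_{\mathcal{E},\mathbb{M}}$, matching the left-hand side of \eqref{eq:result1}. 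On the right-hand side I would use that a noiseless QSB game is by definition the original QSB game evaluated with the isoelastic certainty equivalent, i.e. $w^{\rm QSB}_{1/\alpha}(b_{X|G},o_X,\mathcal{E},\mathbb{M}) = w^{ICE}_{1/\alpha}(b_{X|G},\mathbb{M},o_X,\mathcal{E})$, so that the numerator and denominator of \eqref{eq:61} become exactly those of \eqref{eq:result1} once the free maximisation $\max_{\mathbb{N}\in\mathbb{F}}$ is restricted to $\max_{\mathbb{N}\in{\rm UI}}$. This recovers Result~1 verbatim.

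For the channel case I would proceed identically, now using $G_{\alpha}^{\mathcal{C}}(X;G)_{\mathcal{E},\mathbb{M},\mathcal{N}} = I_{\alpha}(X;G)_{\mathcal{E},\mathbb{M},\mathcal{N}}$, which holds because any constant channel $\mathcal{N}'\in\mathcal{C}$ destroys the side information and forces $\max_{\widetilde{\mathcal{N}}\in\mathcal{C}}\max_{\mathbb{N}} I_{\alpha}(X;G)_{\mathcal{E},\mathbb{N},\widetilde{\mathcal{N}}}=0$. Taking the maximum over $\mathbb{M}$ and invoking the definition $I_{\alpha}(X;G)_{\mathcal{E},\mathcal{N}} = \max_{\mathbb{M}} I_{\alpha}(X;G)_{\mathcal{E},\mathbb{M},\mathcal{N}}$ turns the left-hand side of \eqref{eq:62} into $I_{\alpha}(X;G)_{\mathcal{E},\mathcal{N}}$, the left-hand side of \eqref{eq:result5}; on the right-hand side, setting $\mathcal{F}=\mathcal{C}$ restricts the outer maximisation $\max_{\widetilde{\mathcal{N}}\in\mathcal{F}}$ to $\max_{\mathcal{N}'\in\mathcal{C}}$, reproducing \eqref{eq:result5} exactly.

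The only point demanding care --- and the closest thing to an obstacle --- is bookkeeping rather than mathematics: one must verify that the gap reduction is applied at the correct level (the single-object identity for POVMs, and the $\mathbb{M}$-maximised refined identity for channels), and that the argument orderings in $w^{\rm QSB}$, $w^{\rm nQSB}$ and $w^{ICE}$ are aligned so that the ratios in \eqref{eq:61}--\eqref{eq:62} coincide term-by-term with those in \eqref{eq:result1} and \eqref{eq:result5}. Since both reductions $G_{\alpha}^{\mathbb{UI}}=I_{\alpha}$ and $G_{\alpha}^{\mathcal{C}}=I_{\alpha}$ were already derived in Sec.~\ref{s:gap}, no further estimate or optimisation is required.
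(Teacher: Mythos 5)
Your proposal is correct and follows essentially the same route as the paper: the corollary is a direct substitution into \eqref{eq:61} and \eqref{eq:62}, combined with the reductions $G_{\alpha}^{\mathbb{UI}}(X;G)_{\mathcal{E},\mathbb{M}} = I_{\alpha}(X;G)_{\mathcal{E},\mathbb{M}}$ and $G_{\alpha}^{\mathcal{C}}(X;G)_{\mathcal{E},\mathbb{M},\mathcal{N}} = I_{\alpha}(X;G)_{\mathcal{E},\mathbb{M},\mathcal{N}}$ already established in Sec.~\ref{s:gap} via the vanishing of the free-set maximum for independent $X$ and $G$. Your bookkeeping remarks (applying the $\mathbb{M}$-maximised refined identity in the channel case, and identifying $w^{\rm QSB}$ with $w^{ICE}$ up to argument ordering) match exactly what the paper's definitions require, so no gap remains.
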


We have so far addressed QSB games and more generally nQSB games. The main idea behind these operational tasks is the inclusion of the concept of \emph{betting}, which is represented by the constant relative risk aversion (CRRA) coefficient $R$, and which is ultimately related to the R\'enyi parameter as $R=1/\alpha$. We now address the fact that the concept of \emph{betting} is an useful and powerful concept that allows for the generalisation of additional operational tasks. In particular, we now address the characterisation of quantum channel betting (QCB) games.

\subsection{QCB games and QRTs of states and state-measurement pairs}
\label{ss:result4}

Similarly to the case for noisy quantum state betting (nQSB) games, we would now like to characterise quantum subchannel betting (QCB) games in terms of information-theoretic quantities. We now provide an operational interpretation for Arimoto-type quantities for QRTs of states and hybrid multi-object scenarios, in terms of quantum channel betting (QCB) games.

\begin{result} \label{R_states}
Consider a set of free states as ${\rm F}$ and a triple $(\Lambda, \mathbb{M}, \rho)$, then, \emph{Arimoto's gap on states} of order $\alpha \in \mathds{\overline{R}}$ for such a triple can be written as:
\begin{align}
	&
	G_{\alpha}^{{\rm F}}
	(X;G)_{\Lambda,\mathbb{M},\rho}
	=
	\\
    &
        \sgn(\alpha)
        \log
    	\left[
    	\frac{
    		\displaystyle
    		\max_{
    			b_{X|G}
    		}
    		\,
    		w^{\rm QCB}_{1/\alpha}
    		\left(
    		b_{X|G}
    		,
    		o_X^{\sgn(\alpha)c}
    		,
    		\Lambda,\rho,\mathbb{M}
    		\right)
    	}{
    		\displaystyle
    		\max_{\sigma \in {\rm F}}
    		\max_{
    			b_{X|G}
    		}
    		\,
    		w^{\rm QCB}_{1/\alpha}
    		\left(
    		b_{X|G}
    		,
    		o_X^{\sgn(\alpha)c},
    		\Lambda,\sigma,\mathbb{M}
    		\right)
    	}
    	\right]
    	.
    	\nonumber
\end{align}
Similarly, consider a set of free states  ${\rm F}$, a set of free measurements $\mathbb{F},$ and a triple $(\Lambda, \mathbb{M}, \rho)$, then, \emph{Arimoto's gap on state-measurement pairs} of order $\alpha \in \mathds{\overline{R}}$ for such a triple can be written as:
\begin{align}
	&
	G_{\alpha}^{{\rm F}, \mathbb{F}}
	(X;G)_{\Lambda,\mathbb{M},\rho}
	=
	\\
	&
    \sgn(\alpha)
    \log
	\left[
	\frac{
		\displaystyle
		\max_{
			b_{X|G}
		}
		\,
		w^{\rm QCB}_{1/\alpha}
		\left(
		b_{X|G}
		,
		o_X^{\sgn(\alpha)c}
		,
		\Lambda,\rho,\mathbb{M}
		\right)
	}{
		\displaystyle
		\max_{\substack{
		\sigma \in {\rm F}
		\\
        \mathbb{N}\in \mathbb{F}
        }
        }
        \max_{b_{X|G}}
		\,
		w^{\rm QCB}_{1/\alpha}
		\left(
		b_{X|G}
		,
		o_X^{\sgn(\alpha)c}
		,
		\Lambda,\sigma,\mathbb{N}
		\right)
	}
	\right]
	.
	\nonumber
\end{align}
These two statements mean that Arimoto's gap quantifies the usefulness of resourceful objects when compared to gamblers only having access to free objects.
\end{result}

The proof of this result follows a similar logic than that of result 1 but, for completeness, we present its proof in \cref{AR_states}. Similarly to the case for QSB and nQSB games, we have that quantum channel betting (QCB) games can also be characterised by means of Arimoto-type information-theoretic quantities, for single-object QRTs of states with arbitrary resources, but also for more exotic scenarios as the case of multi-object QRTs of state-measurement pairs. The second statement generalises some of the multi-object results presented in \cite{MO}, which considered the cases for $\alpha \in \{+\infty, -\infty\}$, and so this result generalises this to the whole extended line of real numbers $\alpha \in \mathds{\overline{R}}$.

\subsection{Arimoto's mutual information and horse betting games in the classical regime}
\label{ss:result5}

We now consider operational tasks based on betting and risk-aversion in the form of horse betting (HB) games with risk and side information, without making reference to quantum theory, and derive a result interpreting Arimoto's mutual information as quantifying the advantage provided by side information when playing such horse betting games.

We consider here the Gambler now having access to a random variable $G$, which is potentially correlated with the outcome of the `horse race' $X$ and therefore, the Gambler can try to use this for her/his advantage. This means that these horse betting games are defined by the pair $(o_X,p_{GX})$, and the Gambler is in charge of proposing the betting strategy $b_{X|G}$. We highlight here that this contrasts the case of QSB games, because there the Gambler could in principle be in charge of intervening in the conditional PMF $p_{G|X}$, as the Gambler had access to a measurement and $p_{G|X}=\tr(M_g\rho_x)$, whilst here on the other hand, $p_{GX}=p_{G|X}p_X$ is a given, and the Gambler cannot in principle influence the PMF $p_{G|X}$. However, the figure of merit is still the isoelastic certainty equivalent for risk $R \in (-\infty,1) \cup (1,\infty)$ which is now written as:
\begin{multline}
	w^{ICE}_R
	(b_{X|G},o_X,p_{XG})\\
	\coloneqq
	\left[
	\sum_{g,x}
	\big[
	b(x|g)
	o(x)\big]^{1-R}
	p(x,g)
	\right]^\frac{1}{1-R}.
	\label{eq:HB_SI_W}
\end{multline}
The cases $R \in\{1,\infty,-\infty\}$ are defined again by continuous extension of \eqref{eq:HB_SI_W}. A HB game is then specified by the pair $(o_X,p_{GX})$, and the Gambler plays this game with a betting strategy $b_{X|G}$.

Horse betting games were characterised by Bleuler, Lapidoth, and Pfister (BLP), in terms of the BLP-CR divergence \cite{BLP1} (see \cref{AA} for more details on this). We now modify these tasks in order to consider both gain games (when the odds are positive) and loss games (when the odds are negative), and relate Arimoto's mutual information to HB games with the following result, which can be derived in a similar manner as the previous ones.

\begin{result} \label{R5}
	Consider a horse betting game defined by the pair $(o^{\sgn(\alpha)c}_X, p_{XG})$ with constant odds as $o^{\sgn(\alpha)c}(x) \coloneqq \sgn(\alpha)C$, $C>0$, $\forall x$, and a joint PMF $p_{XG}$. Consider a Gambler playing this game having access to the side information $G$, against a Gambler without access to any side information. Consider both Gamblers with the same attitude to risk, meaning they are represented by isoelastic functions $u_R(w)$ with the risk parametrised as $R(\alpha) \coloneqq 1/\alpha$. The Gamblers are allowed to play these games with the optimal betting strategies, which they can each choose independently from each other. Remembering that the Gamblers are interested in maximising the isoelastic certainty equivalent (ICE), we have the following relationship:
	\begin{multline}
    	I_{\alpha}(X;G)
    	\\
    	=
    	\sgn(\alpha)
        \log
    	\left[
    	\frac{
    		\displaystyle
    		\max_{b_{X|G}}
    		\,
    		w^{ICE}_{1/\alpha}
    		(
    		b_{X|G}
    		,
    		o^{\sgn(\alpha)c}_X
    		,
    		p_{XG}
    		)
    	}{
    		\displaystyle
    		\max_{b_{X}}
    		\,
    		w^{ICE}_{1/\alpha}
    		(
    		b_X
    		,
    		o^{\sgn(\alpha)c}_X
    		,
    		p_X
    		)
    	}
    	\right]
    	.
    \end{multline}
	This means that Arimoto's mutual information quantifies the ratio of the isoelastic certainty equivalent with risk $R(\alpha) \coloneqq 1/\alpha$ of the games defined by $(o^{\sgn(\alpha)c}_X, p_{XG})$, when each HB game is played with the best betting strategy, and we compare the performance of a first Gambler who makes use of the side information $G$, against a second gambler which has no access to side information.
\end{result}
We emphasise that this result is purely ``classical", as it does not invoke any elements from quantum theory. This result also complements a previous relationship between HB games and the BLP-CR divergence \cite{BLP1}. Here on the other hand we characterise instead \emph{the ratio} between the two HB scenarios, where we compare a first gambler with access to side information against a second gambler having no access to side information. We now address a particular known case as the following corollary. 

\begin{corollary}
	In the case $\alpha=1$, which means HB games with risk aversion given by $R=1$, we get:
	\begin{multline}
	I(X;G)
	=
	\max_{b_{X|G}}
	U_0
	(
	b_{X|G},o_X^{c},p_{XG}
	)\\
	-
	\max_{b_X}
	U_0
	(
	b_{X},o_X^{c},p_{X}
	),
	\end{multline}
	with $I_{1}(X;G)= I(X;G)$ the standard mutual information, and $U_0 \coloneqq \log w^{ICE}_0$ the logarithm of the isoelastic certainty equivalent. This is a particular case of a relationship known to hold for all odds $o(x)$ \cite{CT, LN_moser}. 
\end{corollary}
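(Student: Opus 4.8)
The plan is to read the corollary as the $\alpha\to 1$ endpoint of Result~5, and then to recognise the resulting identity as the classical Kelly gambling relation, which in fact holds for arbitrary odds. Putting $\alpha=1$ gives risk parameter $R(\alpha)=1/\alpha=1$ and $\sgn(\alpha)=+1$, so the right-hand side of Result~5 collapses to $\log\max_{b_{X|G}}w^{ICE}_1-\log\max_{b_X}w^{ICE}_1$; since $I_1(X;G)=I(X;G)$ this is already the claimed identity, once one identifies $U_0$ --- the logarithm of the isoelastic certainty equivalent at the log-utility point $R=1$ (equivalently $\beta=1-R=0$) --- with the growth (doubling) rate. The only delicate point on this route is that both $I_\alpha$ and $w^{ICE}_{1/\alpha}$ are defined at $\alpha=1$ only by continuous extension, so I would first record the explicit limit $\log w^{ICE}_1=\sum_{g,x}p(x,g)\log\!\big(b(x|g)o(x)\big)$, i.e.\ the logarithmic utility replaces the power-law one.

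A cleaner and more informative route, which simultaneously proves the ``all odds'' statement, is to evaluate the two optimised certainty equivalents by hand. Expanding the logarithm,
\begin{align}
\log w^{ICE}_1(b_{X|G},o_X,p_{XG})
=\sum_{g,x}p(x,g)\log o(x)
+\sum_{g,x}p(x,g)\log b(x|g),
\end{align}
I would maximise over the betting strategy $b_{X|G}$. The first term is independent of $b$, and for each fixed $g$ the remaining sum $\sum_x p(x|g)\log b(x|g)$ is maximised, by Gibbs' inequality, by \emph{proportional betting} $b(x|g)=p(x|g)$, contributing $-H(X|G=g)$. Hence $\max_{b_{X|G}}\log w^{ICE}_1=\sum_x p(x)\log o(x)-H(X|G)$. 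The identical argument with no side information (optimal strategy $b(x)=p(x)$) gives $\max_{b_X}\log w^{ICE}_1=\sum_x p(x)\log o(x)-H(X)$.

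Subtracting, the odds-dependent term $\sum_x p(x)\log o(x)$ cancels exactly, because it depends only on the marginal $p_X$, which is shared by both scenarios; what survives is $H(X)-H(X|G)=I(X;G)$. This proves the corollary for \emph{every} choice of odds $o_X$, with the displayed constant-odds game $o_X^{c}$ as a special case. The main obstacle is conceptual rather than computational: the optimisation is a one-line Gibbs argument, but one must note that proportional betting is optimal for both gamblers and, above all, that the cancellation of the odds term is exactly what makes the value of the side information equal to $I(X;G)$ regardless of the payoffs. On the limiting route via Result~5 the analogous obstacle is checking that the continuous extensions of $I_\alpha$ and $w^{ICE}_{1/\alpha}$ at $\alpha=1$ coincide, which reduces to the very same logarithmic-utility computation.
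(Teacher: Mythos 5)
Your proof is correct, and your second route is genuinely different from (and slightly stronger than) what the paper does. The paper offers no standalone argument for this corollary: it is obtained by setting $\alpha=1$ in Result~5, whose proof runs through Lemma~4 and Lemma~5 of the appendix (the operational interpretations of $p_\alpha(X)$ and $p_\alpha(X|G)$ via the optimal strategy $b^*(x|g)=p(x|g)^\alpha/\sum_{x'}p(x'|g)^\alpha$), and the all-odds statement is simply cited to Cover--Thomas and Moser. Your first route is exactly this specialisation, and you correctly flag the only delicate point, namely that both $I_\alpha$ and $w^{ICE}_{1/\alpha}$ are defined at $\alpha=1$ by continuous extension, with $\log w^{ICE}_1=\sum_{g,x}p(x,g)\log\bigl(b(x|g)o(x)\bigr)$; note that your optimal strategy $b(x|g)=p(x|g)$ is precisely the $\alpha=1$ case of the paper's $b^*$, so the two computations coincide at this point. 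Your second route — splitting off the odds term and applying Gibbs' inequality for each $g$ — is more elementary (no R\'enyi machinery, no appeal to the BLP theorems) and buys the stronger conclusion that the identity holds for \emph{arbitrary} positive odds $o_X$, since the term $\sum_x p(x)\log o(x)$ depends only on the shared marginal $p_X$ and cancels in the difference; this is exactly the classical Kelly "conservation" result the paper cites rather than proves. Your reading of the subscript in $U_0$ as referring to $\beta=1-R=0$ (i.e.\ log-utility, $R=1$) rather than $R=0$ is the right way to reconcile the corollary's notation with the appendix definition $U_R\coloneqq\sgn(o)\log|w^{ICE}_R|$.
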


We now come back to the QRT of measurement informativeness, and explore further connections between Arimoto's mutual information, QSB games, and additional information-theoretic quantities in the form of quantum R\'enyi divergences and resource monotones.

\subsection{Quantum R\'enyi divergences}
\label{e:divergences} 

Considering that the KL-divergence is of central importance in classical information theory, it is natural to consider quantum-extensions of such quantity. There are many ways to define quantum R\'enyi divergences \cite{review_qrd, petz-renyi, sandwiched1, sandwiched2, geometric, sharp, measured1}, with most of the effort being concentrated on divergences as a functions of quantum states. Recently however, divergences and entropies for additional objects like channels and measurements have been started to be explored \cite{qrd_channels1, qrd_channels2, channel_entropy}. We are now interested in addressing quantum R\'enyi divergences for measurements. The approach we take here takes inspiration from both: measured R\'enyi divergences for states \cite{measured3, measured2, measured1}, as well as R\'enyi conditional divergences in the classical domain \cite{sibson, csiszar, BLP1}. Explicitly, we invoke the measures for R\'enyi conditional  divergences, and use them to define  measured R\'enyi divergences for measurements.
\begin{definition} (Measured quantum R\'enyi divergence of Sibson)
	The measured R\'enyi divergence of Sibson of order $\alpha \in \mathds{\overline R}$ and a set of states $\mathcal{S}=\{\rho_x\}$ of two measurements $\mathds{M}=\{M_g\}$ and $\mathds{N}=\{N_g\}$ is given by:
	\begin{align}
    	D_{\alpha}^{\mathcal{S}} 
    	(\mathds{M}||\mathds{N})
        \coloneqq
            \max_{p_X}
            D_{\alpha}
            \left(
            p_{G|X}^{({\mathbb{M}},\mathcal{S})}
            \Big|\Big|
            q_{G|X}^{({\mathbb{N}},\mathcal{S})}
            \Big|
            \,
            p_X
            \right).
    \end{align}
	with the maximisation over all PMFs $p_X$, and the conditional PMFs $p_{G|X}^{({\mathbb{M}},\mathcal{S})}$ and $q_{G|X}^{({\mathbb{N}},\mathcal{S})}$ given by $p(g|x)\coloneqq \tr (M_g \rho_x)$, $q(g|x)\coloneqq \tr (N_g \rho_x)$, respectively, and $D(\cdot||\cdot|\cdot)$ the conditional R\'enyi divergence of Sibson \cite{sibson} which is defined in \cref{AA}.
\end{definition}

We now use this measured R\'enyi divergence in order to define a distance measure with respect to a free set of interest, the set of uninformative measurements in this case.
\begin{definition} (Measurement informativeness measure of Sibson)
The measurement informativeness measure of Sibson of order $\alpha \in \mathds{\overline R}$ and set of states $\mathcal{S}$ of a measurement $\mathbb{M}$ is given by:
	\begin{align}
	    E_{\alpha}^{\mathcal{S}} 
		(\mathds{M})
        \coloneqq
          \min_{\mathds{N}\in {\rm UI}}
		D_{\alpha}^{\mathcal{S}} 
		(\mathds{M}||\mathds{N}),
        \label{eq:MIM}
    \end{align}
	with the minimisation over all uninformative measurements.
\end{definition}

Interestingly, it turns out that this quantity becomes equal to a quantity which we have already introduced.
\begin{result}\label{Rdivergences}
	The informativeness measure of Sibson is equal to the R\'enyi capacity of order $\alpha \in \mathds{\overline R}$ of the measurement $\mathbb{M}$ as:
	\begin{align}
		E_{\alpha}^{\mathcal{S}}
		(\mathds{M})
		=
		C_\alpha
		\left(
		p_{G|X}^{(\mathds{M}, \mathcal{S})}
		\right),
	\end{align}
	with the quantum-classical channel associated to the measurement $\mathbb{M}$ \eqref{eq:qc}. 
\end{result}
The proof of this result is in \cref{ARdivergences}. This result establishes a connection between R\'enyi mutual information (which are used to define the R\'enyi channel capacity) and quantum R\'enyi divergences of measurements (which are used to define the measurement informativeness measure). We now consider the quantity $E_{\alpha}(\mathds{M}) \coloneqq \max_{\mathcal{S}} E_{\alpha}^{\mathcal{S}} (\mathds{M})=C_\alpha(\Lambda_\mathds{M})$ and analyse the particular cases of $\alpha \in\{\infty,-\infty\}$.
\begin{corollary}
	The measurement informativeness measure of Sibson recovers the generalised robustness and the weight of resource at the extremes $\alpha \in\{\infty,-\infty\}$ as:
	\begin{align}
		E_{\infty}(\mathds{M})
		&=
		\log
		\left[
		1+{\rm R}(\mathbb{M})
		\right],\\
		E_{-\infty}(\mathds{M})
		&=
		-
		\log
		\left[
		1-{\rm W}(\mathbb{M})
		\right],
	\end{align}
	with the generalised robustness of informativeness \eqref{eq:RoI} \cite{SL}, and the weight of informativeness \eqref{eq:WoI} \cite{DS}.
\end{corollary}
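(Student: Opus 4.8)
The plan is to reduce the two limiting identities to results already available in the excerpt, so that the corollary becomes a short substitution. First I would record that, by Result~\ref{Rdivergences}, the informativeness measure of Sibson coincides with a R\'enyi capacity for every state set, $E_{\alpha}^{\mathcal{S}}(\mathds{M}) = C_{\alpha}(p_{G|X}^{(\mathds{M},\mathcal{S})})$, and hence, maximising over $\mathcal{S}$ and using the definition of the R\'enyi capacity of the quantum-classical channel, $E_{\alpha}(\mathds{M}) = C_{\alpha}(\Lambda_{\mathds{M}})$ for all $\alpha \in \mathds{\overline R}$. The problem then reduces to evaluating the two endpoints $C_{\pm\infty}(\Lambda_{\mathds{M}})$.

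Second, I would read off the values of these endpoints from the two corollaries to Result~\ref{R_uninformative}, which already express them as logarithms of optimised figure-of-merit ratios,
\begin{align}
C_{\infty}(\Lambda_{\mathds{M}})
&=
\log\left[
\max_{\mathcal{E}}
\frac{P^{\rm QSD}_{\rm succ}(\mathcal{E},\mathds{M})}{\max_{\mathds{N}\in{\rm UI}} P^{\rm QSD}_{\rm succ}(\mathcal{E},\mathds{N})}
\right],
\\
C_{-\infty}(\Lambda_{\mathds{M}})
&=
-\log\left[
\min_{\mathcal{E}}
\frac{P^{\rm QSE}_{\rm err}(\mathcal{E},\mathds{M})}{\min_{\mathds{N}\in{\rm UI}} P^{\rm QSE}_{\rm err}(\mathcal{E},\mathds{N})}
\right].
\end{align}

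Third, and this is the one genuinely external input, I would invoke the robustness--discrimination and weight--exclusion correspondences for the QRT of measurement informativeness. The correspondence of \cite{SL} identifies the bracketed quantity in the first line with $1+{\rm R}(\mathds{M})$, the generalised robustness \eqref{eq:RoI}, while the correspondence of \cite{DS} identifies the bracketed quantity in the second line with $1-{\rm W}(\mathds{M})$, the weight \eqref{eq:WoI}. Substituting these two identities into the displays above yields $E_{\infty}(\mathds{M}) = \log[1+{\rm R}(\mathds{M})]$ and $E_{-\infty}(\mathds{M}) = -\log[1-{\rm W}(\mathds{M})]$, as claimed.

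The hard part is entirely contained in that last step: namely the proof that the ensemble-optimised discrimination (exclusion) advantage collapses to an ensemble-independent monotone. Were one to redo it rather than cite it, the route is convex (semidefinite) duality---for robustness one dualises $\max_{\mathcal{E}}$ of the normalised success probability and builds a dual-feasible point from the optimal decomposition $M_a + r M_a^G = (1+r)N_a$ in \eqref{eq:RoI}, and symmetrically for the weight using \eqref{eq:WoI}. Within the present paper, however, no new duality computation is required: the corollaries to Result~\ref{R_uninformative} already carry out the continuous extension of $C_{\alpha}$ to $\alpha = \pm\infty$ and the exchange of the limit with the optimisation over $\mathcal{E}$, so the remaining work is the single substitution just described.
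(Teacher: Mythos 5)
Your proposal is correct and follows essentially the same route as the paper: the paper likewise reduces the corollary to the identity $E_{\alpha}(\mathds{M})=C_{\alpha}(\Lambda_{\mathds{M}})$, notes that the capacity becomes the accessible min-information (resp.\ excludible information) at $\alpha=\pm\infty$, and then cites \cite{SL} and \cite{DS} for the identification of the optimised discrimination (exclusion) ratio with $1+{\rm R}(\mathds{M})$ (resp.\ $1-{\rm W}(\mathds{M})$). Your extra remark that the cited step rests on convex/SDP duality applied to the decompositions in \eqref{eq:RoI} and \eqref{eq:WoI} is an accurate account of the external input and does not change the argument.
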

This result follows from the fact that the R\'enyi channel capacity becomes the accessible min-information and the excludible information at the extremes  $\alpha \in\{\infty,-\infty\}$, together with the results from \cite{SL} and \cite{DS}. Result 3 therefore establishes a connection between R\'enyi mutual informations and quantum R\'enyi divergences of measurements. Inspired by these results, we now proceed to propose a family of resource monotones.

\subsection{Resource monotones}
\label{e:monotones} 

Resource quantifiers are special cases of resource monotones, which are central objects of study within QRTs \cite{RT_review, TG}. Two common families of resource monotones are the so-called \emph{robustness-based} \cite{RoE, RoNL_RoS_RoI, RoS, RoA, RoC, SL, RoNL_RoS_RoI, RoT, RoT2, RT_magic, citeme1, citeme2} and \emph{weight-based} \cite{WoE, EPR2, WoS, RoNL_RoS_RoI, WoAC} resource monotones. Inspired by the previous results, we now define measures which turn out to be monotones for the order induced by the simulability of measurements and furthermore, that this new family of monotones recover, at its extremes, the generalised robustness and the weight of informativeness.
\begin{definition} ($\alpha$-measure of informativeness)
	The $\alpha$-measure of informativeness of order $\alpha \in \mathbb{\overline R}$ of a measurement $\mathbb{M}$ is given by:
	\begin{align}
	{\rm M}_{\alpha}
	(\mathds{M})
    \coloneqq
      \sgn(\alpha)
		2^{
			\sgn(\alpha)
			E_{\alpha}
			(\mathds{M})
		}
		-
		\sgn(\alpha)
		,
    \label{eq:aR}
    \end{align}
	with $E_{\alpha}(\mathds{M}) \coloneqq \max_{\mathcal{S}} E_{\alpha}^{\mathcal{S}} (\mathds{M})$ and the measurement informativeness measure defined in \eqref{eq:MIM}.
\end{definition}
The motivation behind the proposal of this resource measure is because: i) it recovers the generalised robustness and the weight of resource as ${\rm M}_{\infty}(\mathds{M})
={\rm R}(\mathds{M})$ and ${\rm M}_{-\infty}(\mathds{M})
={\rm W}(\mathds{M})$ and, ii) it allows the following operational characterisation.

\begin{remark}
	They $\alpha$-measure of informativeness of order $\alpha \in \mathbb{\overline R}$ of a measurement $\mathbb{M}$ characterises the performance of the measurement $\mathbb{M}$, when compared to the performance of all possible uninformative measurements, when playing the same QSB game as:
	\begin{align}
	\max_{\mathcal{E}}
	\frac{
		\displaystyle
		\max_{b_{X|G}}
		\,
		w^{ICE}_{1/\alpha}
		\left(
		b_{X|G}
		,
		\mathbb{N}
		,
		o^{c}_X
		,
		\mathcal{E}
		\right)
	}{
		\displaystyle
		\max_{\mathbb{N}\in {\rm UI}}
		\max_{
			b_{X|G}
		}
		\,
		w^{ICE}_{1/\alpha}
		\left(
		b_{X|G}
		,
		\mathbb{N}
		,
		o^{c}_X
		,
		\mathcal{E}
		\right)
	}
	&=
	1+{\rm M}_{\alpha}(\mathbb{M}),
	\\
	\min_{\mathcal{E}}
	\frac{
		\displaystyle
		\max_{b_{X|G}}
		\,
		w^{ICE}_{1/\alpha}
		\left(
		b_{X|G}
		,
		\mathbb{N}
		,
		o^{-c}_X
		,
		\mathcal{E}
		\right)
	}{
		\displaystyle
		\max_{\mathbb{N}\in {\rm UI}}
		\max_{
			b_{X|G}
		}
		\,
		w^{ICE}_{1/\alpha}
		\left(
		b_{X|G}
		,
		\mathbb{N}
		,
		o^{-c}_X
		,
		\mathcal{E}
		\right)
	}
	&=
	1-{\rm M}_{\alpha}(\mathbb{M})
	,
	\end{align}
	for $\alpha\geq 0$ and $\alpha<0$, respectively. These two equalities follow directly from the definitions and the previous results. 
\end{remark}
This result is akin to the connections between generalised robustness characterising discrimination games, and the weight of resource characterising exclusion games. We now also have that the $\alpha$-measure of informativeness defines a resource monotone for the simulability of measurements.
\begin{result} \label{Rmonotones} (The $\alpha$-measure of informativeness is a resource monotone)
	The $\alpha$-measure of informativeness \eqref{eq:aR} defines a resource monotone for the simulability of measurements, meaning that it satisfies the following properties. (i) Faithfulness: ${\rm M_\alpha} (\mathbb{M}) = 0 
	\leftrightarrow \mathbb{M} = \{M_a=q(a )\mathds{1}\}$ and (ii) Monotonicity under measurement simulation: $\mathbb{N} \preceq \mathbb{M} \rightarrow {\rm M_\alpha}(\mathbb{N})\leq {\rm M_\alpha}(\mathbb{M})$.
\end{result}
The proof of this result is in \cref{ARmonotones}. It would be interesting to find a geometric interpretation of this measure, in a similar manner that its two extremes admit a geometric interpretation as in \eqref{eq:RoI} and \eqref{eq:WoI}, as well as to explore additional properties, like convexity, in order to talk about it being a resource quantifier. It would also be interesting to explore additional monotones, in particular, whether the isoelastic certainty equivalent forms a complete set of monotones for the simulability of measurements, this, given that this holds for the two extremes at plus and minus infinity. 

Altogether, the above results establish a four-way task-mutual information-divergence-monotone correspondence for the QRT of measurement informativeness, by means of a risk aversion factor parametrised by the R\'enyi parameter $\alpha$ as $R(\alpha)=1/\alpha$, as quantitatively summarised as four chains of equalities below,  and qualitatively depicted in \cref{fig:fig}.

\begin{figure}[h!]
	\includegraphics[scale=0.5]{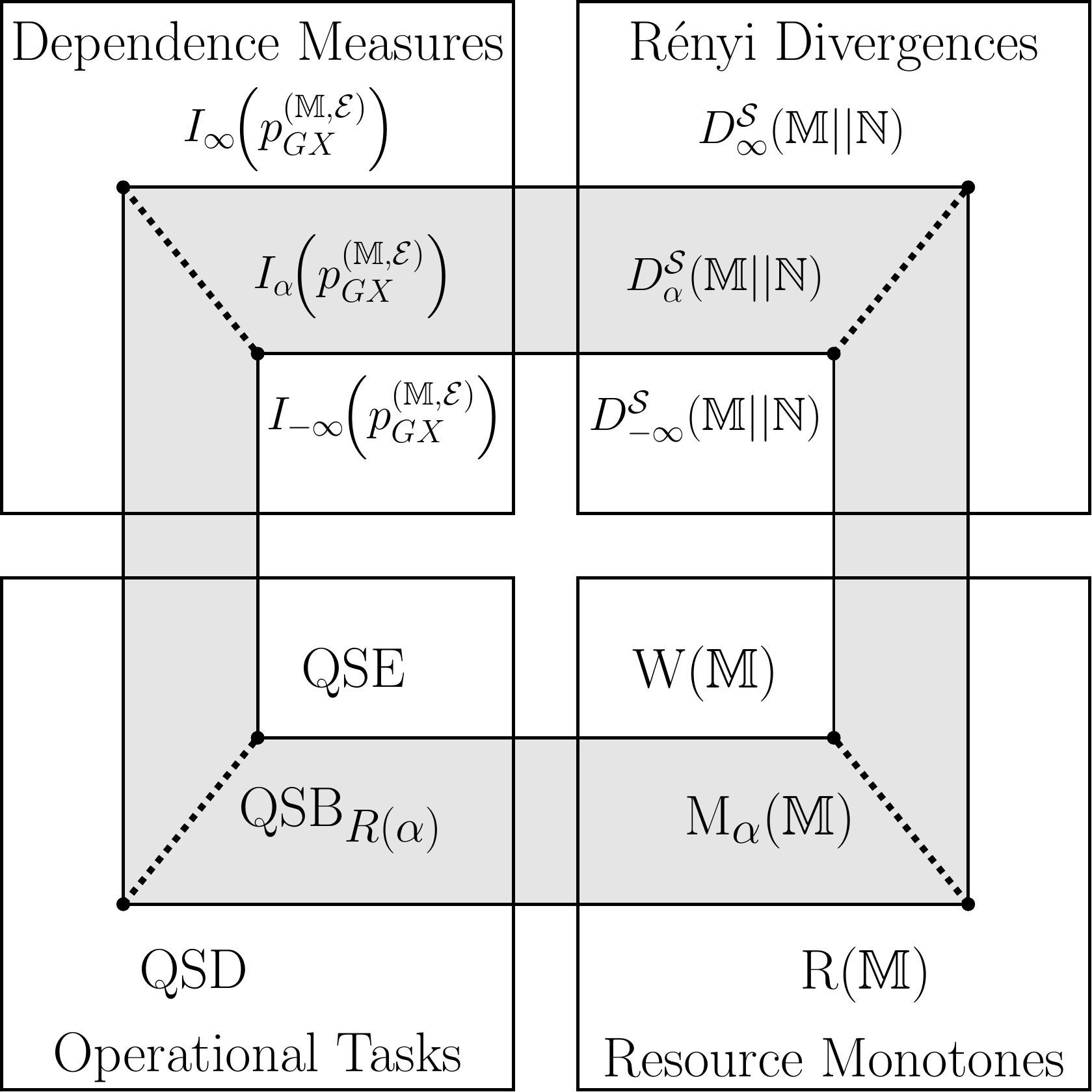}
	\caption{\label{fig:fig}
		A four-way correspondence for the QRT of measurement informativeness. The correspondence is parametrised by the R\'enyi parameter $\alpha \in \mathds{R}\cup\{\infty, -\infty\}$. The outer rectangle represents $\alpha=\infty$, the inner rectangle represents $\alpha=-\infty$, and the shaded region in-between represents the values $\alpha \in \mathds{R}$. This four-way correspondence links: operational tasks, dependence measures, R\'enyi divergences, and resource monotones. The operational task is quantum state betting (QSB) played by a Gambler with risk aversion $R(\alpha)=1/\alpha$. This task generalises quantum state discrimination (QSD) (recovered when $\alpha\rightarrow \infty$), and quantum state exclusion (QSE) (recovered when  $\alpha \rightarrow -\infty$). $I_\alpha(p_{GX}^{(\mathbb{M},\mathcal{E})})$ is Arimoto's dependence measure, from which we recover the accessible information $I_{\infty}^{\rm acc}(\Lambda_\mathds{M})$ when $\alpha\rightarrow \infty$, and the excludible information $I_{-\infty}^{\rm exc}(\Lambda_\mathds{M})$ when $\alpha\rightarrow -\infty$, with $\Lambda_{\mathds{M}}$ the measure-prepare channel of the measurement $\mathds{M}$.
		We introduce $D_{\alpha}^{\mathcal{S}} ( \mathds{M} || \mathds{N})$, the quantum R\'enyi divergence of two measurements $\mathbb{M}$ and $\mathbb{N}$ for a given set of states $\mathcal{S}=\{\rho_x\}$.  We also introduce ${\rm M}_\alpha(\mathds{M})$, a new family of resource monotones, which generalise the robustness of informativeness $\rm R(\mathds{M})$ (when $\alpha\rightarrow \infty)$ and the weight of informativeness $\rm W(\mathds{M})$ (when $\alpha\rightarrow -\infty$). The outer rectangle was uncovered in \cite{SL}, whilst the inner rectangle was first uncovered in \cite{DS}. The main set of results of this paper is to fill the shaded region, and connect these two correspondences for all $\alpha\in \mathds{R}$.
	}
\end{figure}

\onecolumngrid
\begin{summary*}
	Four-way quantum correspondence between operational tasks, mutual information measures, quantum R\'enyi divergences, and resource monotones for the QRT of measurement informativeness. This is quantitatively summarised in the following four chains of equalities, going from $\alpha \rightarrow \infty$ passing through $\alpha=0$ until $\alpha \rightarrow -\infty$ as follows (definitions and proofs in the main text and appendices):
	\begin{align*}
	\max_\mathcal{E}
	I_{\infty}(X;G)_{\mathcal{E},\mathbb{M}}
	&=
	\log
	\left[
	\max_\mathcal{E}
	\frac{
		p^{\rm QSD}_{\rm \, succ}
		\left(
		\mathcal{E}, \mathbb{M}
		\right)
	}{
		\max_{\mathbb{N}\in {\rm UI}}
		P^{\rm QSD}_{\rm succ}(\mathcal{E},\mathbb{N})
	}
	\right]
	=	
	\log
	\left[1+{\rm R}(\mathbb{M})\right]
	=
	\max_{\mathcal{S}}
	\min_{\mathbb{N}\in {\rm UI}}
	D_{\infty}^{\mathcal{S}}
	(\mathbb{M}||\mathbb{N}),
	\\
	\max_\mathcal{E}
	I_{\alpha \geq 0}(X;G)_{\mathcal{E},\mathbb{M}}
	&=
	\log
	\left[
	\max_{\mathcal{E}}
	\frac{
		\displaystyle
		\max_{b_{X|G}}
		\,
		w^{ICE}_{1/\alpha}
		\left(
		b_{X|G}, 
		\mathbb{M}
		, 
		o^{c}_X
		,
		\mathcal{E}
		\right)
	}{
		\displaystyle
		\max_{\mathbb{N}\in {\rm UI}}
		\max_{
			b_{X|G}
		}
		\,
		w^{ICE}_{1/\alpha}
		\left(
		b_{X|G}
		, 
		\mathbb{N}
		, 
		o^{c}_X
		,
		\mathcal{E}
		\right)
	}
	\right]
	=
	\log
	\left[1+{\rm M}_{\alpha}
	(\mathbb{M})\right]
	=
	\max_{\mathcal{S}}
	\min_{\mathbb{N}\in {\rm UI}}
	D_{\alpha}^{\mathcal{S}}
	(\mathbb{M}||\mathbb{N}),
	\end{align*}
	\begin{align*}
	\max_\mathcal{E}
	I_{\alpha<0}(X;G)_{\mathcal{E},\mathbb{M}}
	&=
	-
	\log
	\left[
	\min_{\mathcal{E}}
	\frac{
		\displaystyle
		\max_{b_{X|G}}
		\,
		w^{ICE}_{1/\alpha}
		\left(
		b_{X|G}
		, 
		\mathbb{M}
		, 
		o_X^{-c}
		,
		\mathcal{E}
		\right)
	}{
		\displaystyle
		\max_{\mathbb{N}\in {\rm UI}}
		\max_{
			b_{X|G}
		}
		\,
		w^{ICE}_{1/\alpha}
		\left(
		b_{X|G}
		,
		\mathbb{N}
		, 
		o_X^{-c}
		,
		\mathcal{E}
		\right)
	}
	\right]
	=
	-
	\log
	\left[1-{\rm M}_{\alpha}
	(\mathbb{M})\right]
	=
	\max_{\mathcal{S}}
	\min_{\mathbb{N}\in {\rm UI}}
	D_{\alpha}^{\mathcal{S}} 
	(\mathbb{M}||\mathbb{N}),
	\\
	\max_\mathcal{E}
	I_{-\infty}(X;G)_{\mathcal{E},\mathbb{M}}
	&=
	-
	\log
	\left[
	\min_\mathcal{E}
	\frac{
		P^{\rm QSE}_{\rm err}(\mathcal{E},\mathbb{M})	
	}
	{
		\min_{\mathbb{N}\in {\rm UI}}
		P^{\rm QSE}_{\rm err}(\mathcal{E},\mathbb{N})
	}
	\right]
	=
	-
	\log
	\left[1-{\rm W}(\mathbb{M})\right]
	=
	\max_{\mathcal{S}}
	\min_{\mathbb{N}\in {\rm UI}}
	D_{-\infty}^{\mathcal{S}}
	(\mathbb{M}||\mathbb{N})
	.
	\end{align*}
\end{summary*}

\twocolumngrid
\section{Conclusions}
\label{s:conclusions}

In this work, we have proposed that using the ideas of betting, risk-aversion, and utility theory are a powerful way of extending the well studied tasks of quantum state discrimination and quantum state exclusion. We have used this to introduce various quantum operational tasks based on betting and risk-aversion, or \emph{quantum betting tasks}. In particular, we have shown that this places two recently discovered four-way correspondences \cite{SL, DS} into a much broader continuous family of correspondences. For the first time, this shows that there exist deep connections between operational state identification tasks, mutual information measures, R\'enyi divergences, and resource monotones.

The seven main results in this manuscript are the following. First, we relate Arimoto's $\alpha$-mutual information (in the quantum domain) to the quantum state betting games with risk, for the QRT of measurement informativeness. As corollaries of this result, we recover the previous two known relationships relating: i) the accessible information to quantum state discrimination, and ii) the excludible information to quantum state exclusion. Second, we characterise nQSB games for the QRT of non-constant channels in terms of Arimoto's mutual information. Third, we consider a generalisation of the two previous scenarios to general QRTs of measurements with arbitrary resources (beyond that of informativeness) and QRTs of channels with general resources (beyond that of non-constant channels), and relate QSB/nQSB games to Arimoto-type measures. Fourth, we address quantum channel betting (QCB) games, and consider this task for QRTs of states with arbitrary resources, as well as an hybrid scenario in a multi-object regime, addressing QRTs of state-measurements pairs, in which states and measurements are simultaneously considered  in possession of valuable resources. Fifth, we relate  Arimoto's mutual information to horse betting (HB) games with side information in the classical regime, without invoking quantum theory. This result can be seen as giving a very clean operational interpretation of Arimoto's mutual information, showing that it exactly quantifies the advantage provided by \emph{side information}, and that the R\'enyi parameter can be understood operationally as quantifying the risk aversion of a gambler. Sixth, using the insights from the results on the QRT of measurement informativeness, we derive new quantum measured R\'enyi divergences for measurements. Seventh, we introduce resource monotones for the order generated by the simulability of measurements, which additionally recover the resource monotones of \emph{generalised robustness}, as well as the \emph{weight} of informativeness. Finally, results 1, 6, and 7 are elegantly connected via a four-way correspondence, which substantially extended the two correspondences previously uncovered \cite{SL, DS}, which we now understand to be the two extremes of a continuous spectrum.

We believe our results are the start of a much broader and deeper investigation into the use of betting, risk-aversion, utility theory, and other ideas from economics, to obtain a broader unified understanding of many topics in quantum information theory. Our results raise many questions and open up various avenues for future research, a number of which we briefly describe below.

\subsection{Open problems, perspectives, and avenues for future research}\label{s:perspective}

\begin{enumerate}
    \item An exciting broad possibility, is to explore more generally the concept of risk aversion in quantum information theory. This is a concept which we are just starting to understand and incorporate into the theory of information and therefore, we believe this is an exciting avenue of research which could have far-reaching implications when considered for additional operational tasks, like Bell-nonlocal games, and interactive proof systems.
    \item Similarly, the scenario here considered represents the convergence of three major research fields: i) quantum theory, ii) information theory and iii) the theory of games and economic behaviour. Specifically, we borrowed the concept of risk aversion from the economic sciences in order to solve an open problem in quantum information theory. We believe that this is just an example of the benefits that can be obtained from considering the cross-fertilisation of ideas between these three major current research fields. Consequently, it would be interesting to keep importing further concepts (in addition to risk aversion), as well as to explore the other direction, i. e., whether quantum information theory can provide insights into the theory of games and economic behaviour. We believe this can be a fruitful approach for future research. In particular, horse betting games are a particular family of a larger family of tasks which are related to the investment in portfolios \cite{CT}, and it therefore would be interesting to explore quantum versions of the operational tasks that emerge in these scenarios. 
    \item The set of connections we have established here are by means of the R\'enyi entropies, and we have seen that the parameter $\alpha$ is intimately linked to the risk aversion of a gambler. It is interesting to speculate whether other types of connections might be possible. For example, Brandao \cite{WoE_Brandao} previously found a family of entanglement witnesses that encompassed both the generalised robustness and the weight of entanglement. We do not know if this is intimately related with our findings here, or whether our insights might shed further light, e.g.~operational significance, on these entanglement witnesses and their generalisations. 
    \item We were led to introduce new measured quantum R\'enyi divergences for measurements in this work. We believe that they should find relevance and application in settings far removed from the specific setting we considered here. It would also be interesting to further explore their relevance in other other areas within quantum information theory.
    \item We have also introduced new resource monotones, for which we do not yet have a full understanding. In particular, unlike numerous other monotones, these do not yet have an obvious geometric interpretation. It would be interesting to develop such ideas further. 
    \item It would be interesting to explore additional monotones, in particular, whether the isoelastic certainty equivalent $w^{ICE}_{R(\alpha)}$ forms (for all $\alpha$) a complete set of monotones for the order induced by the simulability of measurements, this, given that this is the case for the two extremes at $\alpha \in \{\infty,-\infty\}$ \cite{SL, DS}.
    \item We point out that we have used information-theoretic quantities with the R\'enyi parameter $\alpha$ taking both positive and negative values. Whilst negative values have been explored in the literature, it is fair to say that they have not been the main focus of attention. Here we have proven that information-theoretic quantities with negative orders posses a descriptive power \emph{different} from their positive counterparts and therefore, it would be interesting to explore their usefulness in other information-theoretic scenarios.
\end{enumerate}

\emph{Note added:} After the completion of this manuscript, we became aware of the related work by S. Tirone, et. al. \cite{salvatore}, where they address Kelly betting with a quantum payoff in a continuous variable setting.

\section*{Acknowledgements}

We thank Patryk Lipka-Bartosik, Tom Purves, Noah Linden, and Roope Uola for insightful discussions. We also thank Ryuji Takagi and two anonymous referees for valuable comments on the first version of this manuscript. A.F.D. acknowledges support from COLCIENCIAS 756-2016 and the UK EPSRC (EP/L015730/1). P.S. acknowledges support from a Royal Society URF (UHQT).

\onecolumngrid
\appendix

\section{R\'enyi divergence, conditional-R\'enyi divergences, mutual informations, and R\'enyi channel capacity} \label{AA}

In this Appendix we address the information-theoretic quantities represented in Fig.~\ref{fig:fig2} namely, the R\'enyi divergence, the conditional R\'enyi (CR) divergences of Sibson, Csisz\'ar, and Bleuler-Lapidoth-Pfister, their respective mutual informations, and the R\'enyi channel capacity. 
\begin{figure}[h!]
	\includegraphics[scale=0.58]{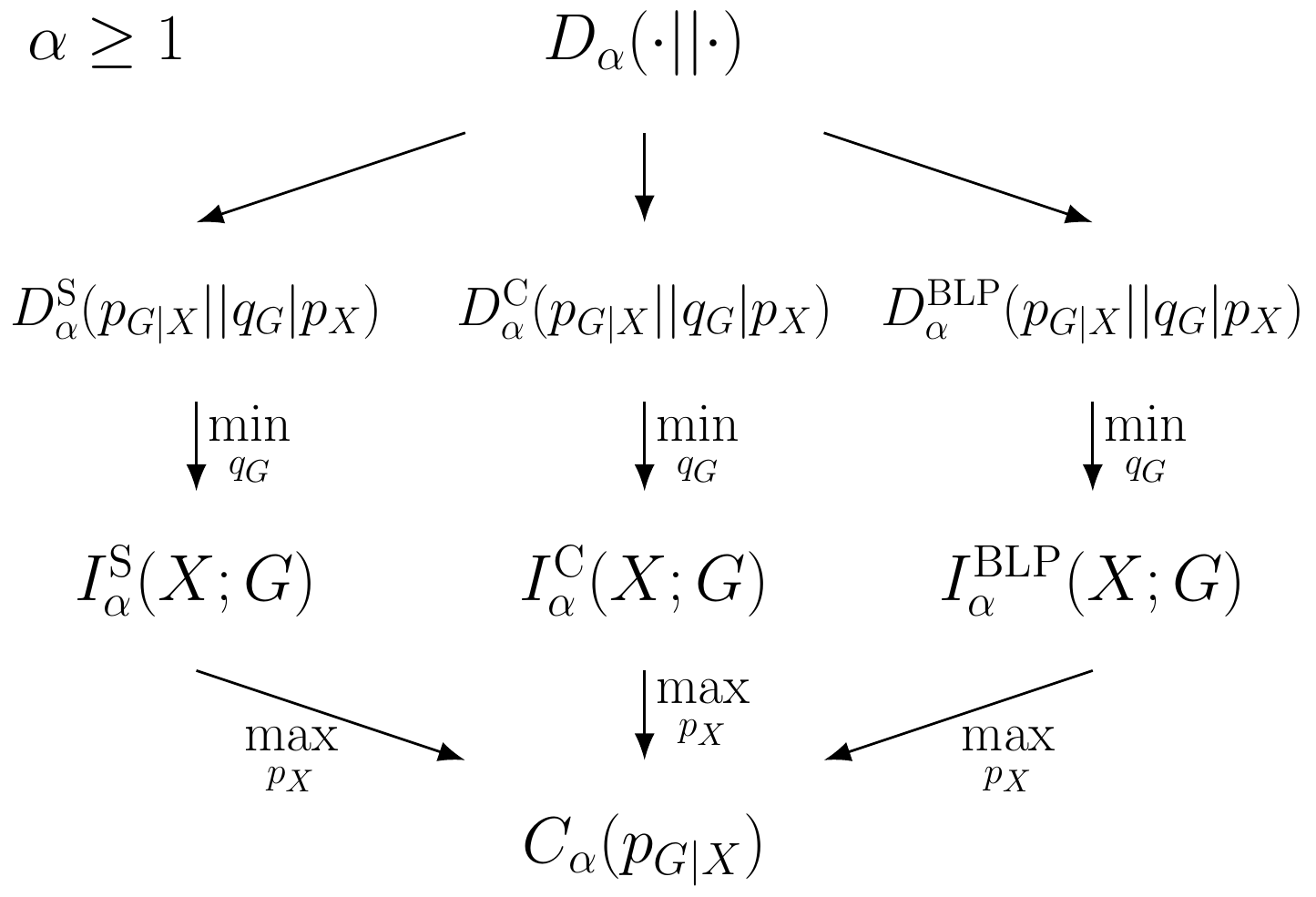}
	\caption{
	Hierarchical relationship between the R\'enyi divergence $D_\alpha(\cdot||\cdot)$, conditional R\'enyi divergences $D^{\rm V}_\alpha(\cdot||\cdot|\cdot)$, mutual informations $I^{\rm V}_\alpha(X;G)$, and the R\'enyi channel capacity $C_\alpha (p_{G|X})$ with $\alpha \geq 1$, and $\rm V \in\{S,C,BLP\}$ a label specifying the measures of Sibson \cite{sibson}, Csisz\'ar \cite{csiszar}, and Bleuler-Lapidoth-Pfister \cite{BLP1}. The mutual information associated to the BLP-conditional-R\'enyi divergence was independently derived by Lapidoth-Pfister \cite{LP} and Tomamichel-Hayashi \cite{TH}. In this work we address the capacities generated by Sibson and Arimoto.
	}
	\label{fig:fig2}
\end{figure}

\subsection{The R\'enyi divergence}

\begin{definition} (R\'enyi divergence \cite{renyi, RD})
	The R\'enyi divergence (R-divergence) of order $\alpha \in \mathds{\overline R}$ of PMFs $p_X$ and $q_X$ is denoted as $D_\alpha(p_X||q_X)$. The orders $\alpha \in (-\infty,0)\cup(0,1)\cup(1,\infty)$ are defined as:
	\begin{align}
	D_\alpha(p_X||q_X)
	&\coloneqq
	\frac{
	\sgn(\alpha)
	}{
	\alpha-1
	}
	\log
	\left[
	\sum_x
	p(x)^\alpha
	q(x)^{1-\alpha}
	\right]
	.
	\label{eq:RD}
	\end{align}
	The orders $\alpha \in\{1,0,\infty,-\infty\}$ are defined define by continuous extension of \eqref{eq:RD} as:
	\begin{align}
	D_1(p_X||q_X)
	&\coloneqq D(p_X||q_X),
	\\
	D_0(p_X||q_X)
	&\coloneqq
	-\log 
	\sum_{x \in {\rm supp}(p_X)}
	q(x)
	,
	\\
	D_{\infty}(p_X||q_X)
	&\coloneqq
	\log \max_x \frac{p(x)}{q(x)},
	\\
	D_{-\infty}(p_X||q_X)
	&\coloneqq
	-
	\log \min_x \frac{p(x)}{q(x)}.
	\end{align}
	with the standard Kullback-Leibler (KL) divergence given by $D(p_X||q_X) \coloneqq \sum_x p(x) \log \frac{p(x)}{q(x)}$ \cite{KL1951, CT}.
\end{definition}

\subsection{Conditional-R\'enyi divergences}

\begin{definition} (Sibson's conditional-R\'enyi divergence \cite{sibson}) 
	The Sibson's conditional-R\'enyi divergence (S-CR-divergence) of order $\alpha \in \mathds{\overline R}$ of PMFs $p_{X|G}$, $q_{X|G}$, and $p_X$ is denoted as $D^{\rm S}_\alpha
	(
	p_{G|X}
	||
	q_{G|X}
	|
	p_X
	)$. The orders $\alpha \in (-\infty,0)\cup(0,1)\cup(1,\infty)$ are defined as:
	\begin{equation}
	D^{\rm S}_\alpha
	(
	p_{G|X}
	||
	q_{G|X}
	|
	p_X
	)
	\coloneqq
	\frac{
	\sgn(\alpha)
	}{
	\alpha-1
	}
	\log
	\sum_x
	p(x)
	\sum_g
	p(g|x)^\alpha
	q(g|x)^{1-\alpha}
	.
	\label{eq:SCRD}
	\end{equation}	
	The orders $\alpha \in\{1,0,\infty,-\infty\}$ are defined by continuous extension of \eqref{eq:SCRD} as:
	\begin{align}
		D^{\rm S}_1
		(
		p_{G|X}
		||
		q_{G|X}
		|
		p_X
		)
		&\coloneqq
		D
		(
		p_{G|X}
		||
		q_{G|X}
		|
		p_X
		),
		\\
		D^{\rm S}_0
		(
		p_{G|X}
		||
		q_{G|X}
		|
		p_X
		)
		&\coloneqq
		-\log
		\sum_{x\in {\rm supp}(p_X)}
		p(x)
		\sum_{g\in {\rm supp}(p_{G|X=x})}
		q(g|x),
		\\
		D^{\rm S}_{\infty}
		(
		p_{G|X}
		||
		q_{G|X}
		|
		p_X
		)
		&\coloneqq
		\log 
		\max_{x\in {\rm supp}(p_X)}
		\max_g
		\frac{p(g|x)}{q(g|x)},
		\\
		D^{\rm S}_{-\infty}
		(
		p_{G|X}
		||
		q_{G|X}
		|
		p_X
		)
		& \coloneqq
		-
		\log 
		\min_{x\in {\rm supp}(p_X)}
		\min_g
		\frac{p(g|x)}{q(g|x)},
	\end{align}
with the conditional R\'enyi divergence given by $D
	\left(
	p_{G|X}
	||
	q_{G|X}
	|
	p_X
	\right)
	\coloneqq
	D
	\left(
	p_{G|X}
	p_X
	||
	q_{G|X}
	p_X
	\right)$, the latter being the standard KL-divergence \cite{KL1951, CT}.
\end{definition}
\begin{definition} (Csisz\'ar's conditional-R\'enyi divergence \cite{csiszar}) 
	The Csisz\'ar's conditional-R\'enyi divergence (C-CR-divergence) of order $\alpha \in \mathds{\overline R}$ of PMFs $p_{X|G}$, $q_{X|G}$, and $p_X$ is denoted as $D^{\rm C}_\alpha
	(
	p_{G|X}
	||
	q_{G|X}
	|
	p_X
	)$. The orders $\alpha \in (-\infty,0)\cup(0,1)\cup(1,\infty)$ are defined as:
	\begin{equation}
	D^{\rm C}_\alpha
	(
	p_{G|X}
	||
	q_{G|X}
	|
	p_X
	)
	\coloneqq
	\frac{
	\sgn(\alpha)
	}{
	\alpha-1
	}
	\sum_x
	p(x)
	\log
	\left[
	\sum_g
	p(g|x)^\alpha
	q(g|x)^{1-\alpha}
	\right]
	.
	\label{eq:CCRD}
	\end{equation}	
	The orders $\alpha \in \{1,0,\infty,-\infty\}$ are defined by continuous extension of \eqref{eq:CCRD} as:
	\begin{align}
		D^{\rm C}_1
		(
		p_{G|X}
		||
		q_{G|X}
		|
		p_X
		)
		&\coloneqq
		D
		(
		p_{G|X}
		||
		q_{G|X}
		|
		p_X
		),
		\\
		D^{\rm C}_0
		(
		p_{G|X}
		||
		q_{G|X}
		|
		p_X
		)
		&\coloneqq
		-
		\sum_{x\in {\rm supp}(p_X)}
		p(x)
		\log
		\sum_{g\in {\rm supp}(p_{G|X=x})}
		q(g|x)
		,
		\\
		D^{\rm C}_{\infty}
		(
		p_{G|X}
		||
		q_{G|X}
		|
		p_X
		)
		&\coloneqq
		\sum_{x\in {\rm supp}(p_X)}
		p(x)
		\log 
		\left[
		\max_g
		\frac{p(g|x)}{q(g|x)}
		\right],
		\\
		D^{\rm C}_{-\infty}
		(
		p_{G|X}
		||
		q_{G|X}
		|
		p_X
		)
		& \coloneqq
		-
		\sum_{x\in {\rm supp}(p_X)}
		p(x)
		\log 
		\left[
		\min_g
		\frac{p(g|x)}{q(g|x)}
		\right],
		\end{align}
	with the conditional R\'enyi divergence given by $D
	\left(
	p_{G|X}
	||
	q_{G|X}
	|
	p_X
	\right)
	\coloneqq
	D
	\left(
	p_{G|X}
	p_X
	||
	q_{G|X}
	p_X
	\right)$, the latter being the standard KL-divergence \cite{KL1951, CT}.
\end{definition}
\begin{definition} (Bleuler-Lapidoth-Pfister conditional-R\'enyi divergence \cite{BLP1, thesis_CP}) 
	The Bleuler-Lapidoth-Pfister conditional-R\'enyi divergence (BLP-CR-divergence) of order $\alpha \in \mathds{\overline R}$ of PMFs $p_{X|G}$, $q_{X|G}$, and $p_X$ is denoted as $D^{\rm BLP}_\alpha
	(
	p_{G|X}
	||
	q_{G|X}
	|
	p_X
	)$. The orders $\alpha\in(-\infty,0)\cup(0,1)\cup(1,\infty)$ are defined as:
	\begin{equation}
	D^{\rm BLP}_\alpha
	(
	p_{G|X}
	||
	q_{G|X}
	|
	p_X
	)
	\coloneqq
	\frac{|\alpha|}{\alpha-1}
	\log
	\sum_x
	p(x)
	\left[
	\sum_g
	p(g|x)^\alpha
	q(g|x)^{1-\alpha}
	\right]^{\frac{1}{\alpha}}	
	.
	\label{eq:BLPCRD}
	\end{equation}	
	The orders $\alpha \in\{1,0,\infty,-\infty\}$ are defined by continuous extension of \eqref{eq:BLPCRD} as:
\begin{align}
		D^{\rm BLP}_1
		(
		p_{G|X}
		||
		q_{G|X}
		|
		p_X
		)
		&\coloneqq
		D
		(
		p_{G|X}
		||
		q_{G|X}
		|
		p_X
		),
		\\
		D^{\rm BLP}_0
		(
		p_{G|X}
		||
		q_{G|X}
		|
		p_X
		)
		&\coloneqq
		-\log
		\max_{x\in {\rm supp}(p_X)}
		\sum_{g\in {\rm supp}(p_{G|X=x})}
		q(g|x)
		,
		\\
		D^{\rm BLP}_{\infty}
		(
		p_{G|X}
		||
		q_{G|X}
		|
		p_X
		)
		&\coloneqq
		\log 
		\sum_{x\in {\rm supp}(p_X)}
		p(x)
		\max_g
		\frac{p(g|x)}{q(g|x)},
		\\
		D^{\rm BLP}_{-\infty}
		(
		p_{G|X}
		||
		q_{G|X}
		|
		p_X
		)
		& \coloneqq
		-
		\log 
		\sum_{x\in {\rm supp}(p_X)}
		p(x)
		\min_g
		\frac{p(g|x)}{q(g|x)}.
		\end{align}
	with the conditional R\'enyi divergence given by $D
	\left(
	p_{G|X}
	||
	q_{G|X}
	|
	p_X
	\right)
	\coloneqq
	D
	\left(
	p_{G|X}
	p_X
	||
	q_{G|X}
	p_X
	\right)$, the latter being the standard KL-divergence \cite{KL1951, CT}.
\end{definition}

\subsection{Relationship between the R\'enyi divergence and CR divergences}

\begin{remark} (\cite{BLP1, thesis_CP})
    Relating conditional R\'enyi divergences to the R\'enyi divergence. For any conditional PMFs $p_{G|X}$, $q_{G|X}$, and any PMF $p_X$ we have:
    \begin{align}
	D^{\rm S}_\alpha
	(
	p_{G|X}
	||
	q_{G|X}
	|
	p_X
	)
    &
    =
    D_\alpha(p_{G|X}p_X||q_{G|X}p_X)
    ,
	\label{eq:SR}
	\\
	D^{\rm C}_\alpha
	(
	p_{G|X}
	||
	q_{G|X}
	|
	p_X
	)
	&
	=
	\sum_x
	p(x)
	D_\alpha(p_{G|X=x}||q_{G|X=x})
	.
	\label{eq:CR}
	\end{align}
\end{remark}

\subsection{Mutual informations}

\begin{definition} (Mutual informations of: Sibson \cite{sibson}, Csisz\'ar \cite{csiszar}, and Bleuler-Lapidoth-Pfister \cite{BLP1})
	The mutual information of Sibson, Csisz\'ar, and Bleuler-Lapidoth-Pfister of order $\alpha \in \mathds{\overline R}$ of a joint PMF $p_{XG}$ are defined as:
	\begin{align}
    I^{\rm V}_{\alpha}
		(X;G)
    \coloneqq
      \min_{q_G}
		D^{\rm V}_{\alpha}
		\left(
		p_{G|X}
		||
		q_G
		|
		p_X
		\right),
    \label{eq:DM}
    \end{align}
	with the label $\rm V\in\{S,C,BLP\}$ denoting each case, the minimisation being performed over all PMFs $q_G$, and $D^{\rm V}_\alpha(\cdot||\cdot|\cdot)$ the conditional R\'enyi (CR) divergences of: Sibson, Csisz\'ar, and Bleuler-Lapidoth-Pfister, of order $\alpha \in \mathds{\overline R}$, as defined previously. The case $\alpha=1$ reduces, for all three cases, to the standard mutual information \cite{CT} $I^{\rm V}_1(X;G) = I(X;G)$. Similarly to Arimoto's measure, we also use the notation $I^{\rm V}_\alpha(p_{XG})$ and $I_\alpha^{\rm V}(p_{G|X}p_X)$ interchangeably.
\end{definition}

\subsection{Relationship between CR-divergences}

\begin{lemma} 
	Consider the conditional-R\'enyi divergences of Sibson, Csisz\'ar, and Bleuler-Lapidoth-Pfister, then: 
	\begin{align}
	\alpha \in [-\infty,0],
	\hspace{0.1cm}
	D^{\rm BLP}_\alpha
	\left(
	\cdot
	\cdot
	\cdot
	\right)
	\leq
	D^{\rm C}_\alpha
	\left(
	\cdot
	\cdot
	\cdot
	\right)
	\leq
	D^{\rm S}_\alpha
	\left(
	\cdot
	\cdot
	\cdot
	\right),
	\\
	\alpha \in [0,1],
	\hspace{0.1cm}
	D^{\rm BLP}_\alpha
	\left(
	\cdot
	\cdot
	\cdot
	\right)
	\leq
	D^{\rm S}_\alpha
	\left(
	\cdot
	\cdot
	\cdot
	\right)
	\leq
	D^{\rm C}_\alpha
	\left(
	\cdot
	\cdot
	\cdot
	\right),
	\\
	\alpha \in [1,\infty],
	\hspace{0.1cm}
	D^{\rm C}_\alpha
	\left(
	\cdot
	\cdot
	\cdot
	\right)
	\leq
	D^{\rm BLP}_\alpha
	\left(
	\cdot
	\cdot
	\cdot
	\right)
	\leq
	D^{\rm S}_\alpha
	\left(
	\cdot
	\cdot
	\cdot
	\right),
	\end{align}
\end{lemma}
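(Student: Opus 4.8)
The plan is to exhibit all three conditional-R\'enyi divergences as a single monotone function of a \emph{power mean}, so that the entire lemma collapses to the generalized-mean inequality together with a bookkeeping of signs. First I would fix the order $\alpha \in (-\infty,0)\cup(0,1)\cup(1,\infty)$ and set $f(x) \coloneqq \sum_g p(g|x)^\alpha q(g|x)^{1-\alpha}$, the common ``Hellinger integrand'' appearing in \eqref{eq:SCRD}, \eqref{eq:CCRD}, and \eqref{eq:BLPCRD}. Writing the power mean of $f$ with respect to $p_X$ as $M_\beta(f) \coloneqq \left(\sum_x p(x) f(x)^\beta\right)^{1/\beta}$ (with $M_0$ the geometric mean $\exp\sum_x p(x)\log f(x)$ by continuity), a direct rearrangement of the three definitions gives the uniform expression
\begin{equation}
D^{\rm V}_\alpha(p_{G|X}\|q_{G|X}|p_X) = \frac{\sgn(\alpha)}{\alpha-1}\,\log M_{\beta_{\rm V}}(f), \qquad \beta_{\rm S}=1,\ \beta_{\rm C}=0,\ \beta_{\rm BLP}=\tfrac{1}{\alpha}. \nonumber
\end{equation}
Here Sibson corresponds to the arithmetic mean $M_1(f)=\sum_x p(x)f(x)$, Csisz\'ar to the geometric mean $M_0(f)$, and BLP to $M_{1/\alpha}(f)=\left(\sum_x p(x)f(x)^{1/\alpha}\right)^{\alpha}$, which matches \eqref{eq:BLPCRD} after using $|\alpha|=\sgn(\alpha)\,\alpha$.

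The only analytic input I would then need is the generalized-mean inequality: for positive $f$ the map $\beta \mapsto M_\beta(f)$ is non-decreasing on $\mathds{\overline R}$, hence so is $\beta \mapsto \log M_\beta(f)$. Consequently the ordering of the three divergences is dictated entirely by the ordering of the exponents $\{0,\tfrac{1}{\alpha},1\}$, followed by multiplication by the common prefactor $c \coloneqq \sgn(\alpha)/(\alpha-1)$, whose sign is positive for $\alpha<0$ and for $\alpha>1$, and negative for $\alpha\in(0,1)$. I would then run the three cases. For $\alpha>1$ one has $0<\tfrac{1}{\alpha}<1$, so $\log M_0\le \log M_{1/\alpha}\le \log M_1$; multiplying by $c>0$ gives $D^{\rm C}_\alpha\le D^{\rm BLP}_\alpha\le D^{\rm S}_\alpha$. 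For $\alpha\in(0,1)$ one has $0<1<\tfrac{1}{\alpha}$, so $\log M_0\le \log M_1\le \log M_{1/\alpha}$; multiplying by $c<0$ reverses the chain to $D^{\rm BLP}_\alpha\le D^{\rm S}_\alpha\le D^{\rm C}_\alpha$. For $\alpha<0$ one has $\tfrac{1}{\alpha}<0<1$, so $\log M_{1/\alpha}\le \log M_0\le \log M_1$; multiplying by $c>0$ gives $D^{\rm BLP}_\alpha\le D^{\rm C}_\alpha\le D^{\rm S}_\alpha$. Each of these is exactly one line of the lemma.

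The main obstacle is not the inequality itself but the boundary and degeneracy issues. The representation above assumes $f(x)>0$ and finite, which can fail through the usual support mismatches between $p_{G|X}$ and $q_{G|X}$ (especially for $\alpha<0$, where $M_{1/\alpha}$ carries a negative exponent and $p(g|x)^\alpha$ can diverge); I would dispose of these by restricting to the supports fixed in the definitions and treating infinite values as limits. The endpoints $\alpha\in\{0,1,\infty,-\infty\}$, which are included in the closed intervals of the statement, are defined by continuous extension, so I would obtain the corresponding inequalities by taking one-sided limits $\alpha\to\beta^{\pm}$ within each interval of the chains already established, noting that $\log M_\beta(f)$ extends continuously to $\beta\in\{0,\pm\infty\}$ (geometric mean, and $\max/\min$ of $f$). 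The single point $\alpha=1$ is degenerate ($f\equiv 1$, so all three divergences equal the KL divergence and the chains hold with equality), and the shared endpoint $\alpha=0$ must be read as the appropriate one-sided extension consistent with the R\'enyi convention, which is the only place where the closed-interval phrasing requires mild care.
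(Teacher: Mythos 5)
Your proof is correct, and it takes a genuinely different and more unified route than the paper's. The paper only proves the $\alpha\in[-\infty,0]$ chain (citing the literature for $[0,1]$ and $[1,\infty]$), and does so via two separate applications of Jensen's inequality to the concave logarithm: one comparing $\log\bigl(\sum_x p(x)f(x)\bigr)$ with $\sum_x p(x)\log f(x)$ to get $D^{\rm C}_\alpha\le D^{\rm S}_\alpha$, and one comparing $\sum_x p(x)\log f(x)^{1/\alpha}$ with $\log\bigl(\sum_x p(x)f(x)^{1/\alpha}\bigr)$ to get $D^{\rm BLP}_\alpha\le D^{\rm C}_\alpha$, each time tracking the sign of the prefactor. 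Your representation $D^{\rm V}_\alpha=\frac{\sgn(\alpha)}{\alpha-1}\log M_{\beta_{\rm V}}(f)$ with $\beta_{\rm S}=1$, $\beta_{\rm C}=0$, $\beta_{\rm BLP}=1/\alpha$ checks out against \eqref{eq:SCRD}, \eqref{eq:CCRD}, \eqref{eq:BLPCRD} (the $|\alpha|=\sgn(\alpha)\alpha$ bookkeeping for BLP is right), and the paper's two Jensen steps are then recognisable as the special cases $M_0\le M_1$ and $M_0\le M_{1/\alpha}$ of the power-mean monotonicity you invoke. What your approach buys is a single argument covering all three $\alpha$ regimes at once, and a structural explanation of \emph{why} the orderings permute: only the position of $1/\alpha$ relative to $\{0,1\}$ and the sign of $\sgn(\alpha)/(\alpha-1)$ change between regimes. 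Your sign analysis of the prefactor ($c>0$ for $\alpha<0$ and $\alpha>1$, $c<0$ for $\alpha\in(0,1)$) and the resulting three chains all match the lemma. The support/degeneracy caveats and the continuous-extension treatment of the endpoints are handled with the same level of care as (indeed slightly more explicitly than) the paper, which simply appeals to continuity at the extremes.
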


\begin{proof}
	The cases $\alpha \in [0,1]$ and $\alpha \in [1,\infty]$  have already been proven in the literature \cite{BLP1}. A similar argument can be followed in order to prove the cases $\alpha \in [-\infty,0]$. For completeness, we address it in what follows.
	\\
	\textbf{Part i)}
		We start by proving that for $\alpha \in [-\infty,0]$ we have 
		$
		D^{\rm C}_\alpha
		\left(
		\cdot
		||
		\cdot
		|
		\cdot
		\right)
		\leq
		D^{\rm S}_\alpha
		\left(
		\cdot
		||
		\cdot
		|
		\cdot
		\right)$. We prove it for $\alpha \in (-\infty,0)$ and the extremes follow because of continuity. Starting from the Sibson's measure times the positive factor $(\alpha-1)\sgn(\alpha)$ we get:
		\begin{align}
		(\alpha-1)
		\sgn(\alpha)
		D^{\rm S}_\alpha
		\left(
		p_{G|X}
		||
		q_G
		|
		p_X
		\right)
		&
		=
		\log
		\left[
		\sum_x
		p(x)
		\sum_g
		p(g|x)^\alpha
		q(g|x)^{1-\alpha}
		\right],
		\\
		&
		\geq
		\sum_x
		p(x)
		\log
		\left[
		\sum_g
		p(g|x)^\alpha
		q(g|x)^{1-\alpha}
		\right],
		\\
		&
		=
		\sgn(\alpha)
		(\alpha-1)
		D^{\rm C}_\alpha
		\left(
		p_{G|X}
		||
		q_G
		|
		p_X
		\right).
		\end{align}
		In the first equality we use the definition of the Sibson's conditional R\'enyi divergence \eqref{eq:SCRD}. The inequality follows because of Jensen's inequality \cite{jensen}, and because $\log(\cdot)$ is a concave function. In the last equality we use the definition of the Csisz\'ar's conditional R\'enyi divergence \eqref{eq:CCRD}. Dividing both sides by $\sgn(\alpha)(\alpha-1)$, which is positive because $\alpha \in (-\infty,0)$, proves the claim. 
		\\	
		\textbf{Part ii)} 
		We now want to prove that for $\alpha \in [-\infty,0]$, we have $
		D^{\rm BLP}_\alpha
		\left(
		\cdot
		||
		\cdot
		|
		\cdot
		\right)
		\leq
		D^{\rm C}_\alpha
		\left(
		\cdot
		||
		\cdot
		|
		\cdot
		\right)$. Similarly, we prove it for cases $\alpha \in (-\infty, 0)$ with the extremes following because of continuity. Starting from Csisz\'ar's measure:
		\begin{align}
D^{\rm C}_\alpha
		\left(
		p_{G|X}
		||
		q_G
		|
		p_X
		\right)
		&
		=
		\frac{
		\sgn(\alpha)
		}{\alpha-1}
		\sum_x
		p(x)
		\log
		\left[
		\sum_g
		p(g|x)^\alpha
		q(g|x)^{1-\alpha}
		\right]
		,
		\\
		&
		=
		\frac{
		|\alpha|
		}{\alpha-1}
		\sum_x
		p(x)
		\log
		\left[
		\sum_g
		p(g|x)^\alpha
		q(g|x)^{1-\alpha}
		\right]^{\frac{1}{\alpha}}
		,
		\\
		&
		\geq
		\frac{
		|\alpha|
		}{\alpha-1}
		\log
		\left[
		\sum_x
		p(x)
		\left(
		\sum_g
		p(g|x)^\alpha
		q(g|x)^{1-\alpha}
		\right)^{\frac{1}{\alpha}}
		\right]
		,
		\\
		&
		=
		D^{\rm BLP}_\alpha
		\left(
		p_{G|X}
		||
		q_G
		|
		p_X
		\right)
		.
		\end{align}
		The first equality we use the definition of Csisz\'ar's conditional R\'enyi divergence \eqref{eq:CCRD}. In the second equality we multiply by one $1=\frac{\alpha}{\alpha}$ and re-organise conveniently. The inequality follows because of Jensen's inequality \cite{jensen}, because $\log(\cdot)$ is a concave function, and because the coefficient $\frac{\sgn(\alpha)\alpha}{\alpha-1}$ is negative for $\alpha \in (-\infty,0)$. In the last equality we use the definition of the Bleuler-Lapidoth-Pfister conditional R\'enyi divergence \eqref{eq:BLPCRD}. 
\end{proof}

\subsection{Relationship between mutual informations}

\begin{lemma} 
	Consider the mutual informations of Sibson, Csisz\'ar, and Bleuler-Lapidoth-Pfister, then:
	\begin{align}
	\alpha \in [-\infty,0],
	\hspace{0.5cm}
	I^{\rm BLP}_\alpha
	\left(
	\cdot
	|
	\cdot
	\right)
	\leq
	I^{\rm C}_\alpha
	\left(
	\cdot
	|
	\cdot
	\right)
	\leq
	I^{\rm S}_\alpha
	\left(
	\cdot
	|
	\cdot
	\right),
	\\
	\alpha \in [0,1],
	\hspace{0.5cm}
	I^{\rm BLP}_\alpha
	\left(
	\cdot
	|
	\cdot
	\right)
	\leq
	I^{\rm S}_\alpha
	\left(
	\cdot
	|
	\cdot
	\right)
	\leq
	I^{\rm C}_\alpha
	\left(
	\cdot
	|
	\cdot
	\right),
	\\
	\alpha \in [1,\infty],
	\hspace{0.5cm}
	I^{\rm C}_\alpha
	\left(
	\cdot
	|
	\cdot
	\right)
	\leq
	I^{\rm BLP}_\alpha
	\left(
	\cdot
	|
	\cdot
	\right)
	\leq
	I^{\rm S}_\alpha
	\left(
	\cdot
	|
	\cdot
	\right),
	\end{align}
\end{lemma}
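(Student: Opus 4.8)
The plan is to reduce the ordering of the mutual informations directly to the ordering of the conditional R\'enyi divergences established in the preceding lemma, exploiting the fact that the min operation preserves pointwise inequalities. Recall that each mutual information is defined as a minimisation of the respective conditional R\'enyi divergence over the auxiliary PMF $q_G$,
\begin{align}
I^{\rm V}_{\alpha}(X;G)
=
\min_{q_G}
D^{\rm V}_{\alpha}
\left(
p_{G|X}
||
q_G
|
p_X
\right),
\qquad
{\rm V}\in\{{\rm S},{\rm C},{\rm BLP}\}.
\end{align}
The crucial observation is that the inequalities proved in the previous lemma hold for \emph{arbitrary} conditional PMFs in the second slot, and in particular they hold for every product-type PMF $q_G$ (i.e.\ every $q_{G|X}$ that does not depend on $x$). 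Thus for each fixed $q_G$ we already have the desired chain of inequalities between $D^{\rm BLP}_\alpha$, $D^{\rm C}_\alpha$, $D^{\rm S}_\alpha$ in the relevant range of $\alpha$.

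First I would fix the range $\alpha \in [-\infty,0]$ and use the corresponding line of the previous lemma, which gives, for every $q_G$,
\begin{align}
D^{\rm BLP}_\alpha
\left(
p_{G|X}
||
q_G
|
p_X
\right)
\leq
D^{\rm C}_\alpha
\left(
p_{G|X}
||
q_G
|
p_X
\right)
\leq
D^{\rm S}_\alpha
\left(
p_{G|X}
||
q_G
|
p_X
\right).
\end{align}
Next I would invoke the elementary monotonicity of the minimum: if $f(q_G) \leq g(q_G)$ for all $q_G$, then $\min_{q_G} f(q_G) \leq \min_{q_G} g(q_G)$, since $\min_{q_G} f \leq f(q_G^\star) \leq g(q_G^\star) = \min_{q_G} g$ with $q_G^\star$ the minimiser of $g$. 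Applying this twice to the pointwise chain above and using the definition of $I^{\rm V}_\alpha$ yields $I^{\rm BLP}_\alpha \leq I^{\rm C}_\alpha \leq I^{\rm S}_\alpha$, which is precisely the claim for $\alpha\in[-\infty,0]$.

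Finally I would repeat the identical argument for the other two ranges, $\alpha\in[0,1]$ and $\alpha\in[1,\infty]$, simply substituting the corresponding line of the previous lemma and again passing to the minimisation over $q_G$; no new ingredient is needed beyond monotonicity of the minimum. I do not anticipate any genuine obstacle here, as all the analytic work (the Jensen-type arguments distinguishing the three measures) has already been carried out at the level of the divergences in the previous lemma. The only point requiring a word of care is to emphasise that the inequalities between the divergences hold uniformly in $q_G$, so that the three (possibly distinct) minimisers causing no difficulty: the preservation of the inequality under minimisation does not require a common optimiser, only the pointwise bound.
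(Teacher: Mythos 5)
Your proposal is correct and follows essentially the same route as the paper: the paper likewise derives the ordering of the mutual informations directly from the preceding lemma on the conditional R\'enyi divergences, using that the mutual informations are defined by minimising those divergences over $q_G$ and that pointwise inequalities survive the minimisation. (The paper's proof sketch says ``minimising over $p_X$'', but this is evidently a slip for $q_G$ as in the definition \eqref{eq:DM}, which is what you correctly use.)
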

\begin{proof}
	The cases $\alpha \in [0,1]$ and $\alpha \in [1,\infty]$ were proven in \cite{BLP1}, and they follow by considering the previous Lemma on the less or equal order between the conditional R\'enyi divergences and, by considering that the mutual informations are defined in terms of the conditional R\'enyi divergences by minimising over $p_X$ \eqref{eq:DM}. The cases $\alpha \in [-\infty,0]$ follow the same argument.
\end{proof}

\subsection{The Rényi channel capacity}

Having defined these mutual informations, we now address the fact that some of them become equal when maximising over PMFs $p_X$, whilst keeping fixed the conditional PMF $p_{G|X}$.
\begin{lemma} (R\'enyi channel capacity \cite{arimoto, csiszar, remarks}) The mutual information of Arimoto and Sibson of order $\alpha \in \mathds{\overline R}$) become equal when maximised over $p_X$, and we refer to this quantity as the R\'enyi capacity of order $\alpha$. The R\'enyi capacity of order $\alpha \in \mathds{\overline R}$,  of a conditional PMF $p_{G|X}$ is:
\begin{align}
    C_{\alpha}
		(p_{G|X})
    \coloneqq
      \max_{p_X}
		I^{\rm V}_{\alpha}
		(p_{G|X}p_X),
\end{align}
	with ${\rm V\in\{A,S\}}$, the maximisation over all PMFs $p_X$, and the mutual information of Sibson as in \eqref{eq:DM}, and Arimoto's mutual information as in the main text. The case $\alpha=1$ reduces to the standard channel capacity \cite{CT} $C_1(p_{G|X})=C(p_{G|X})=\max_{p_X}I(X;G)$.
\end{lemma}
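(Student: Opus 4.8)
The plan is to reduce both mutual informations to explicit closed forms in $p_X$ and then link them through the escort (tilted) reparametrisation $p_X \mapsto p_X^{(\alpha)}$, under which the two maximisations become literally the same optimisation problem. First I would evaluate Sibson's measure by carrying out the inner minimisation over $q_G$ present in its definition \eqref{eq:DM}. Writing $A_g \coloneqq \sum_x p(x)\,p(g|x)^\alpha$, the quantity inside the logarithm of $D^{\rm S}_\alpha$ is $\sum_g A_g\, q(g)^{1-\alpha}$, and a Lagrange-multiplier stationarity condition yields the optimiser $q^\ast(g) \propto A_g^{1/\alpha}$. Substituting back collapses the expression to $\big(\sum_g A_g^{1/\alpha}\big)^{\alpha}$, so that $I^{\rm S}_\alpha(p_{G|X}p_X) = \tfrac{|\alpha|}{\alpha-1}\log\sum_g\big(\sum_x p(x)\,p(g|x)^\alpha\big)^{1/\alpha}$, using $\sgn(\alpha)\,\alpha = |\alpha|$.

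Next I would write Arimoto's measure explicitly from \eqref{eq:RE}, \eqref{eq:ARCE} and \eqref{eq:DMA}. With $S \coloneqq \sum_x p(x)^\alpha$ and $T \coloneqq \sum_g\big(\sum_x p(x)^\alpha p(g|x)^\alpha\big)^{1/\alpha}$ (using $p(x,g)^\alpha = p(x)^\alpha p(g|x)^\alpha$), one obtains $I_\alpha(X;G)_{p_{G|X}p_X} = \tfrac{\sgn(\alpha)}{1-\alpha}\big[\log S - \alpha\log T\big]$. The crucial observation is then the identity $I_\alpha(p_{G|X}\,p_X) = I^{\rm S}_\alpha\big(p_{G|X}\,p_X^{(\alpha)}\big)$, where $p_X^{(\alpha)}(x)\coloneqq p(x)^\alpha/S$ is the order-$\alpha$ escort of $p_X$: inserting $p_X^{(\alpha)}$ into the closed form for $I^{\rm S}_\alpha$ produces exactly the factor $S^{-1/\alpha}T$ inside the logarithm, and a short rearrangement (again collecting $\sgn(\alpha)$ and $|\alpha|$) reproduces the Arimoto expression verbatim.

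With this identity in hand I would finish by a bijection argument. For $\alpha\in(0,\infty)$ the escort map $p_X\mapsto p_X^{(\alpha)}$ is a bijection of the probability simplex onto itself (its inverse being the order-$1/\alpha$ escort), so as $p_X$ ranges over all PMFs so does $p_X^{(\alpha)}$, whence $\max_{p_X} I_\alpha(p_{G|X}p_X) = \max_{p_X} I^{\rm S}_\alpha(p_{G|X}p_X) \eqqcolon C_\alpha(p_{G|X})$. For $\alpha<0$ the same reasoning applies on the relative interior (full-support PMFs), and since both measures extend continuously to the boundary the common value persists in the limit. The remaining orders $\alpha\in\{0,1,\infty,-\infty\}$ follow from the continuous extensions already fixed in the definitions, with $\alpha=1$ recovering the standard Shannon capacity.

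The main obstacle I anticipate is not the escort identity itself but the careful bookkeeping of signs over the full range $\alpha\in\mathds{\overline R}$. Concretely, one must verify that the stationary point $q^\ast$ is genuinely the minimiser of $D^{\rm S}_\alpha$ (rather than a maximiser) for every sign of $\alpha$; this hinges on the sign of the prefactor $\sgn(\alpha)/(\alpha-1)$ together with the convexity/concavity of $q\mapsto q^{1-\alpha}$, which switches across $\alpha=0$ and $\alpha=1$. The second delicate point is the support restriction for negative $\alpha$, where $p(x)^\alpha$ diverges on zero entries, so the optimisation must be confined to full-support distributions before the boundary limit is taken.
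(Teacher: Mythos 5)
Your proposal is correct and takes essentially the same route as the paper: the paper likewise collapses Sibson's measure to a closed form using the optimiser $q^*_G \propto \big(\sum_x p(x)\,p(g|x)^\alpha\big)^{1/\alpha}$ and then identifies the two maximisations through exactly your escort substitution $r(x)=p(x)^\alpha/\sum_{x'}p(x')^\alpha$. The one difference worth noting is that the paper certifies $q^*_G$ as the genuine minimiser not by a stationarity-plus-convexity check but via the identity $D^{\rm S}_\alpha(p_{G|X}||q_G|p_X)=D^{\rm S}_\alpha(p_{G|X}||q^*_G|p_X)+D_\alpha(q^*_G||q_G)$ together with non-negativity of the R\'enyi divergence, which disposes in one stroke of the sign bookkeeping you flag as your main obstacle (the paper also only writes out the $\alpha<0$ case, citing the literature for $\alpha\geq 0$).
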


This Lemma, for the cases $\alpha \geq 0$, has been proven in different places in the literature \cite{sibson, csiszar, remarks}. For completeness, here we provide a proof for the cases $\alpha < 0$. We can understand this result as $C_\alpha(p_{G|X})$ being the R\'enyi capacity of the classical channel specified by the conditional PMF $p_{G|X}$, which simultaneously represents the mutual information of Arimoto and Sibson. On can similarly address R\'enyi capacities using the rest of mutual informations, but using these two are enough for our purposes.

\begin{proof}
	The cases for $\alpha \in [0, \infty)$ have been proven in different places in the literature \cite{arimoto, csiszar, remarks}. We therefore only address here the interval $(-\infty,0)$. Addressing Arimoto's measure for $\alpha \in (-\infty,0)$:
	\begin{align}
		\max_{p_X}
		I^{\rm A}_{\alpha}
		\left(
		p_{G|X}
		p_X
		\right)
		&
		\overset{1}{=}
		\max_{p_X}
		\frac{
		|\alpha|
		}{\alpha-1}
		\log
		\sum_g
		\left(
		\sum_x
		p(g|x)^\alpha
		\frac{
		p(x)^\alpha
		}{
		\sum_{x'}p(x')^\alpha
		}
		\right)^\frac{1}{\alpha}
		,\\
		&
		\overset{2}{=}
		\max_{r_X}
		\frac{
		|\alpha|
		}{
		\alpha-1
		}
		\log
		\sum_g
		\left(
		\sum_x
		p(g|x)^\alpha
		r(x)
		\right)^\frac{1}{\alpha}
		.
		\label{eq:AP1}
	\end{align}
	In the first equality we replaced and reorganised the definition of Arimoto's mutual information \eqref{eq:DMA}. In the second equality we use the fact that both maximisations are equal, because from an optimal $p^*_X$, we can construct a feasible $r_X$ as $
	r(x)
	\coloneqq 
	p^*(x)^\alpha
	/
    (\sum_{x'}p^*(x')^\alpha)
	$ and conversely, from an optimal $r^*_X$, we can construct a feasible $p_X$ as $p(x) = r^*(x)^\frac{1}{\alpha}/(\sum_{x'}r^*(x')^\frac{1}{\alpha})$. 
	We now relate the quantity in \eqref{eq:AP1} to the quantity obtained from Sibson's. We now consider Sibson's CR-divergence and invoke the identity \cite{csiszar} $\forall p_{G|X}, q_G,	p_X$:
	\begin{align}
	    D^{\rm S}_{\alpha}
		\left(
		p_{G|X}
		||
		q_G
		|
		p_X
		\right)
		=
		D^{\rm S}_{\alpha}
		\left(
		p_{G|X}
		||
		q_G^*
		|
		p_X
		\right)
		+
		D_{\alpha}
		\left(
		q_G^*
		|
		q_G
		\right),
		\label{eq:identity}
	\end{align}
	with the PMF $q_G^*$ given by:
	\begin{align}
	    q_G^*
	    (g)
	    \coloneqq
	    \frac{
    	    \left(
    	    \sum_x
    	    p(x)
    	    p(g|x)^\alpha
    	    \right)^\frac{1}{\alpha}
        }
        {
        \sum_g
            \left(
            \sum_x
    	    p(x)
    	    p(g|x)^\alpha
    	    \right)^\frac{1}{\alpha}
        }
        .
        \label{eq:qs}
	\end{align}
	This identity can be checked by directly substituting \eqref{eq:qs} into the RHS of \eqref{eq:identity}. We can now get an explicit expression for Sibson's mutual information, because minimising \eqref{eq:identity} over $q_G$ is obtained for $q_G=q_G^*$, this, because the R\'enyi divergence is non-negative for $\alpha \in (-\infty,0)$ \cite{RD}. We therefore get:
	\begin{align}
	    I^{\rm S}_{\alpha}
		\left(
		p_{G|X}
		p_X
		\right)
		&=
		\min_{q_G}
		D^{\rm S}_{\alpha}
		\left(
		p_{G|X}
		||
		q_G
		|
		p_X
		\right)
		,
		\\
		&=
		D^{\rm S}_{\alpha}
		\left(
		p_{G|X}
		||
		q_G^*
		|
		p_X
		\right)
		,
		\\
		&=
		\frac{
		|\alpha|
		}{
		\alpha-1
		}
		\log
		\sum_g
		\left(
		    \sum_x
		    p(x)
		    p(g|x)^\alpha
		\right)^\frac{1}{\alpha}
		.
	\end{align}
	Maximising this quantity over $p_X$ we get:
	\begin{equation}
	    \max_{p_X}
	    I^{\rm S}_{\alpha}
		\left(
		p_{G|X}
		p_X
		\right)
		=
		\max_{p_X}
		\frac{
		|\alpha|
		}{\alpha-1}
		\log
		\sum_g
		\left(
		    \sum_x
		    p(x)
		    p(g|x)^\alpha
		\right)^\frac{1}{\alpha}
		,
		\label{eq:S1}
	\end{equation}
	which is the same quantity than in \eqref{eq:AP1} for Arimoto's measure. Altogether, we have that starting from either Sibson or Arimoto, we arrive to the same expression when maximising over $p_X$, as per equations \eqref{eq:S1} and \eqref{eq:AP1}.
	Consequently, the capacities they each define is the same, and thus proving the claim. 
\end{proof}

\subsection{Information-theoretic measures in the quantum domain}

Mutual informations in the quantum domain are defined via their R\'enyi conditional divergences counterparts as:
\begin{align}
    I_{\alpha}^{\rm V}
		(X;G)_{\mathcal{E},\mathbb{M}}
    \coloneqq
      \min_{q_G}
		D^{\rm V}_{\alpha}
		\left(
		p_{G|X}^{({\mathbb{M}},\mathcal{S})}
		\Big|\Big|
		q_{G}
		\Big|
		\,
		p_X
		\right),
\end{align}
	with the quantum conditional PMFs $p_{G|X}^{({\mathbb{M}},\mathcal{E})}$ and $q_{G|X}^{({\mathbb{N}}, \mathcal{E})}$ given by $p(g|x)\coloneqq \tr (M_g \rho_x)$, $q(g|x)\coloneqq \tr (N_g \rho_x)$, respectively, the minimisation over all PMFs $q_G$, and the classical conditional R\'enyi divergences of: Sibson, Csisz\'ar, and Bleuler-Lapidoth-Pfister, which we address with  a label $\rm V\in\{S,C,BLP\}$.

\section{Proof of \cref{R_uninformative}}
\label{AR_uninformative}

We start by mentioning that the tasks which are of interest to us are quantum state betting (QSB) games, but that from an operational point of view, they are equivalent to ``horse betting games with risk and quantum side information", or quantum horse betting (QHB) games for short. Given this equivalence, in this appendix we would address QSB games as QHB or HB games only.

In order to prove Result 1 we need two Theorems on horse betting (HB) with risk: one for HB games \emph{without} side information, and other for HB games \emph{with} side information. These two Theorems depend on the of R\'enyi divergence and the BLP conditional R\'enyi divergence from \cref{AA}.

\subsection{Preliminary steps}

We start by addressing a simplified notation.
\begin{align}
	w^{ICE}_R
	(b_{X|G},o_X,p_{XG}
	)
	\coloneqq
	w^{ICE}_R
	(b_{X|G}, \mathbb{M}, o_X,\mathcal{E})
	,
\end{align}	
with $p(x,g)=p(g|x)p(x)$, $p(g|x)=\tr[M_g\rho_x]$. We also notice that that optimising over uninformative measurements $\mathds{N}\in {\rm UI}$, meaning $N_g=p(g)\mathds{1}$, $\forall g$, is equivalent to a horse betting game with risk but \emph{without} side information because $p(g|x)=\tr(N_g\rho_x)=p(g)\tr(\mathds{1}\rho_x)=p(g)$ and then:
\begin{align}
    \max_{b_{X|G}}
    \max_{\mathds{N}\in {\rm UI}}
	w^{ICE}_R
	(b_{X|G},o_X,p_{XG}
	)
	&=
	\max_{b_{X|G}}
	\max_{\mathds{N}\in {\rm UI}}
	\left[
		\sum_{g,x}
		\big[
		b(x|g)
		o(x)\big]^{1-R}
		p(g|x)
		p(x)
		\right]^\frac{1}{1-R}
		,
	\\
	&=
	\max_{b_{X|G}}
	\max_{p_G}
	\left[
		\sum_{g,x}
		\big[
		b(x|g)
		o(x)\big]^{1-R}
		p(g)
		p(x)
		\right]^\frac{1}{1-R}
		,
	\\
	&=
	\max_{b_{X|G}}
	\left[
		\sum_{x}
		\left(
		\max_{p_G}
		\sum_{g}
		b(x|g)^{1-R}
		p(g)
		\right)
		o(x)^{1-R}
		p(x)
		\right]^\frac{1}{1-R}
		,
		\\
	&=
	\max_{b_{X|G}}
	\left[
		\sum_{x}
		\left(
		\max_{g}
		b(x|g)^{1-R}
		\right)
		o(x)^{1-R}
		p(x)
		\right]^\frac{1}{1-R}
		,
	\\
	&=
	\max_{b_{X}}
	\left[
		\sum_{x}
		\big[
		b(x)
		o(x)\big]^{1-R}
		p(x)
		\right]^\frac{1}{1-R}
		,
	\\
	&=
	\max_{b_{X}}
	w^{ICE}_R
	(b_X,o_X,p_X)
	.
\end{align}	
This defines a HB game \emph{without} side information, meaning without the random variable $G$. We now define the auxiliary function of \emph{the logarithm of the isoelastic certainty equivalent} as:
\begin{equation}
		U_R
		(b_{X|G},o_X,p_{XG})
		\coloneqq
		\sgn(
		o
		)
		\log
		\left|
		w^{ICE}_R
		(b_{X|G},o_X,p_{XG})
		\right|
		,
\end{equation}
and similarly without side information as:
\begin{align}
		U_R
		(b_X,o_X,p_X)
		\coloneqq
		\sgn(
		o
		)
		\log
		\left|
		w^{ICE}_R
		(b_X,o_X,p_X)
		\right|
		,
		\label{eq:QSB_SI_U}
\end{align}
with $\sgn(o)$ as a shorthand for the sign of the odds $o(x)$, $\forall x$. We also highlight here that we are interested in the strategy that achieves:
\begin{align}
    \max_{b_X}
    w^{ICE}_R
	(b_X,o_X,p_X),
\end{align}
and we can see that this is equivalent to finding the best strategy for the auxiliary optimisation:
\begin{align}
    \max_{b_X}
    U_R
	(b_X,o_X,p_X).
\end{align}

\subsection{Horse betting games with risk}

We now present two results on horse betting games. We remark here that we invoke these results, in contrast with the original presentation in \cite{BLP1}, with the following modifications in the notation: i) the original version involves a parameter $\beta$, here instead we directly use the risk aversion parameter $R$, taking into account that these two parameters are related as $\beta=1-R$, ii) we have defined the R\'enyi divergence as a non-negative quantity, for all $\alpha \in \mathds{\overline R}$, even for negative values of alpha, and this explains the appearance of the term $\sgn(R)$, iii) we allow for the odds and consequently the wealth to be negative, and this explains the appearance of the term $\sgn(o)$. We now address a result that characterises this task in terms of the R-divergence. 
\begin{theorem}(Bleuler-Lapidoth-Pfister \cite{BLP1, thesis_CP}) 
	Consider a HB game with risk defined by the triple $(o_X, p_X,R)$, and a Gambler playing this game with a betting strategy $b_X$. The logarithm of the isoelastic certainty equivalent is characterised by the R-divergence $D_\alpha(\cdot||\cdot)$ as:
	\begin{equation}
    U_R
    	(b_X,o_X,p_X)
    	=
    	\sgn(o)
    	\log 
    	\left| 
    	c^o
    	\right|
    	+
    	\sgn(o)
    	\sgn(R)\,
    	D_{1/R}
    	(p_X||r^o_X)
    	-
    	\sgn(o)
    	\sgn(R)\,
    	D_{R}
    	(h^{(R,o,p)}_X||b_X)
    	,
	\end{equation}
	with the parameter and the PMF:
	\begin{align}
    	c^o
    	\coloneqq
    	\left(
    	\sum_x \frac{1}{o(x)}
    	\right)^{-1},
    	\hspace{0.5cm}
    	r^o(x)
    	\coloneqq
    	\frac{c^o}{o(x)},
	\end{align}
	and the PMF:
	\begin{align}
	h^{(R,o,p)}(x)
	\coloneqq
	\frac{
		p(x)^{\frac{1}{R}}
		o(x)^{\frac{1-R}{R}}
	}
	{
		\sum_{x'}
		p(x')^{\frac{1}{R}}
		o(x')^{\frac{1-R}{R}}
	}.
	\label{eq:HB_b}
	\end{align}
	Note that the quantities $r^o_X$ and $h^{(R,o,p)}_X$ define valid PMFs even for negative odds ($o(x)<0$, $\forall x$).
\end{theorem}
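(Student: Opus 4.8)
The plan is to prove the identity by direct algebraic manipulation of the explicit expression for the isoelastic certainty equivalent, organising the computation so that the strategy-dependent term $D_R(h^{(R,o,p)}_X||b_X)$ and the strategy-independent terms $\log|c^o|$ and $D_{1/R}(p_X||r^o_X)$ emerge naturally. First I would start from the definition, writing $\log|w^{ICE}_R(b_X,o_X,p_X)| = \frac{1}{1-R}\log\big|\sum_x [b(x)o(x)]^{1-R}p(x)\big|$, and treat the generic orders $R\in(-\infty,0)\cup(0,1)\cup(1,\infty)$, deferring the boundary cases $R\in\{0,1,\pm\infty\}$ to a continuity argument at the end.

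The key step is to recognise the PMF $h^{(R,o,p)}_X$ inside the sum. Writing $Z \coloneqq \sum_{x'} p(x')^{1/R} o(x')^{(1-R)/R}$ for its normalisation, a short computation gives $h^{(R,o,p)}(x)^{R} = p(x)\,o(x)^{1-R}/Z^{R}$, so that $\sum_x [b(x)o(x)]^{1-R}p(x) = Z^{R}\sum_x h^{(R,o,p)}(x)^{R} b(x)^{1-R}$. The second factor is exactly the argument of the R\'enyi divergence $D_R(h^{(R,o,p)}_X||b_X)$, and inverting its definition (using $1/\sgn(R)=\sgn(R)$) isolates the $-\sgn(R)\,D_R(h^{(R,o,p)}_X||b_X)$ contribution, leaving a residual term $\frac{R}{1-R}\log Z$.

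It then remains to show $\frac{R}{1-R}\log Z = \log|c^o| + \sgn(R)\,D_{1/R}(p_X||r^o_X)$. I would substitute $r^o(x) = c^o/o(x)$ into the definition of $D_{1/R}$, using $\sgn(1/R)=\sgn(R)$ and $1/(1/R-1) = R/(1-R)$; the factors $o(x)^{(1-R)/R}$ reassemble into $Z$ while the powers of $c^o$ collapse to the $\log|c^o|$ term (the coefficient $\frac{R}{1-R}\cdot\frac{R-1}{R}=-1$ cancelling the sign), completing the identification. Multiplying throughout by $\sgn(o)$ and inserting the absolute values then recovers the claimed expression for $U_R$.

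The main obstacle I expect is the sign bookkeeping rather than the algebra. For loss games the odds $o(x)$ and the wealth $b(x)o(x)$ are negative, so one must justify that $[b(x)o(x)]^{1-R}$ and the resulting $w^{ICE}_R$ are handled consistently via the $\sgn(o)$ prefactor and the absolute values appearing in the definition of $U_R$; likewise, because the R\'enyi divergence carries a $\sgn(\alpha)$ so as to remain non-negative for negative orders, the factors $\sgn(R)$ must be tracked carefully when $R<0$. A small but necessary preliminary check is that both $r^o_X$ and $h^{(R,o,p)}_X$ are genuine PMFs (non-negative and normalised) even when all odds are negative, which follows since all summands share a common sign and the normalisations are defined accordingly. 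Finally, the excluded values $R\in\{0,1,\pm\infty\}$ follow by taking the appropriate limits, matching the continuous extensions used to define both $w^{ICE}_R$ and the R\'enyi divergences at these orders.
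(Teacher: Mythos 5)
Your derivation is correct, but note that the paper does not actually prove this theorem: it is imported verbatim from Bleuler--Lapidoth--Pfister \cite{BLP1, thesis_CP}, with the preceding remark explaining only the notational translation ($\beta = 1-R$, the extra $\sgn(R)$ from the paper's convention that $D_\alpha$ is non-negative for negative orders, and the extra $\sgn(o)$ from allowing negative odds). So there is no in-paper argument to compare against; what you have written is a self-contained reconstruction of the BLP computation adapted to these conventions, and the algebra checks out. Your central identity $h^{(R,o,p)}(x)^{R} = p(x)\,o(x)^{1-R}/Z^{R}$ with $Z=\sum_{x'}p(x')^{1/R}o(x')^{(1-R)/R}$ does give $\sum_x [b(x)o(x)]^{1-R}p(x) = Z^R\sum_x h(x)^R b(x)^{1-R}$, the factor $\frac{1}{1-R}\log(\cdot)$ of the second sum is exactly $-\sgn(R)D_R(h_X\Vert b_X)$, and the residual $\frac{R}{1-R}\log Z$ does reduce to $\log|c^o|+\sgn(R)D_{1/R}(p_X\Vert r^o_X)$ upon substituting $r^o(x)=c^o/o(x)$. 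One point you gesture at but should make fully explicit: for loss games the opening line $\log|w^{ICE}_R|=\frac{1}{1-R}\log\bigl|\sum_x[b(x)o(x)]^{1-R}p(x)\bigr|$ is not literally meaningful, since $[b(x)o(x)]^{1-R}$ is not real for negative wealth and non-integer exponent; the clean route is to go back to $u_R(w)=\sgn(w)\frac{|w|^{1-R}-1}{1-R}$, from which $|w^{ICE}_R|=\bigl(\sum_x p(x)|b(x)o(x)|^{1-R}\bigr)^{1/(1-R)}$, run the entire computation with $|o(x)|$ in place of $o(x)$ (noting $|c^o|=c^{|o|}$, $r^o_X=r^{|o|}_X$, $h^{(R,o,p)}_X=h^{(R,|o|,p)}_X$), and only then multiply through by $\sgn(o)$. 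With that tightening, and the continuity argument for $R\in\{0,1,\pm\infty\}$, the proof is complete.
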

We are particularly interested in the best possible betting strategy for a given game $(o_X, p_X)$ and fixed $R$, so we have the following two corollaries.
\begin{corollary} (Bleuler-Lapidoth-Pfister \cite{BLP1, thesis_CP}) 
	Consider a classical horse discrimination (HD) game ($o_X^{+}$, meaning $\sgn(o)=1$) being played by a risk-averse Gambler ($R\geq0$, meaning $\sgn(R)=1$). We then want to maximise the logarithm of the isoelastic certainty equivalent over all possible betting strategies. The gambler plays optimally when choosing $b^*(x)=h^{(R,o,p)}(x)$ and then:
	\begin{align}
	\max_{b_X}
	U_R
	(b_X,o_X^{+},p_X)
	&=
	U_R
	(b_X^*,o_X^{+},p_X),
	\nonumber
	\\
	&=
	\log
	\left| 
    	c^o
	\right|
	+
	D_{1/R}(p_X||r^o_X).
	\end{align}
	This is because the R\'enyi divergence $D_{R}(\cdot||\cdot)$ is non-negative $\forall R\in \mathds{\overline R}$.
\end{corollary}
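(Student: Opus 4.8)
The plan is to obtain the Corollary directly from the preceding Theorem by specialising its two sign prefactors and then optimising the single $b_X$-dependent term. First I would substitute the hypotheses of the Corollary into the Theorem's identity: a horse discrimination game has positive odds, so $\sgn(o)=1$, while a risk-averse gambler has $R\geq 0$, so $\sgn(R)=1$. With these two substitutions the characterisation of the logarithm of the isoelastic certainty equivalent collapses to
\begin{equation}
U_R(b_X, o_X^{+}, p_X) = \log\left|c^o\right| + D_{1/R}(p_X||r^o_X) - D_R\big(h^{(R,o,p)}_X||b_X\big).
\end{equation}

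The key observation is that, of the three terms on the right-hand side, only the last depends on the gambler's strategy $b_X$: the term $\log\left|c^o\right|$ depends solely on the odds, and $D_{1/R}(p_X||r^o_X)$ on the odds and the race statistics. Maximising $U_R$ over $b_X$ is therefore equivalent to minimising $D_R(h^{(R,o,p)}_X||b_X)$. Here I would invoke the non-negativity of the R\'enyi divergence, which under the sign-corrected definition of \cref{AA} holds for every order $R\in\mathds{\overline R}$, together with the elementary fact that a distribution has zero divergence from itself. Since $h^{(R,o,p)}_X$ is a valid PMF (as already noted in the Theorem, even for negative odds), the choice $b^{*}(x)=h^{(R,o,p)}(x)$ attains $D_R(h^{(R,o,p)}_X||b^{*}_X)=0$, which is the global minimum. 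Discarding the vanishing term then yields
\begin{equation}
\max_{b_X} U_R(b_X, o_X^{+}, p_X) = \log\left|c^o\right| + D_{1/R}(p_X||r^o_X),
\end{equation}
as claimed.

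I do not anticipate a genuine obstacle, since the analytic content is already carried by the Theorem; the only point demanding mild care is the treatment of the boundary orders $R\in\{0,1,\infty\}$, where both $D_R$ and the optimiser $h^{(R,o,p)}$ are defined by continuous extension. For these I would argue that the identity, the non-negativity of $D_R$, and the optimality of $b^{*}_X$ all persist in the limit by continuity of the relevant expressions in $R$, so no separate argument is required.
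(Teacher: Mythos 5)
Your proof is correct and follows exactly the route the paper takes (the paper's own justification is the one-line remark that $D_R(\cdot||\cdot)$ is non-negative for all orders): specialise the Theorem's identity to $\sgn(o)=\sgn(R)=1$, observe that only the term $-D_R(h^{(R,o,p)}_X||b_X)$ depends on $b_X$, and kill it by choosing $b_X^*=h^{(R,o,p)}_X$. Your added remark about handling the boundary orders by continuous extension is consistent with how the paper defines those orders throughout.
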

\begin{corollary}
	Consider a classical horse exclusion (HE) game ($o_X^{-}$, meaning $\sgn(o)=-1$)	being played by a risk-averse Gambler ($R<0$, meaning $\sgn(R)=-1$). We then want to maximise the logarithm of the isoelastic certainty equivalent over all possible betting strategies. The gambler plays optimally when choosing $b^*(x)=h^{(R,o,p)}(x)$ and then:
	\begin{align}
	\max_{b_X}
	U_R
	(b_X,o_X^{-},p_X)
	&=
	U_R
	(b_X^*,o_X^{-},p_X),
	\nonumber
	\\
	&=
	-\log 
	\left| 
    	c^o
	\right|
	+
	D_{1/R}(p_X||r^o_X).
	\end{align}
	This is because the R\'enyi divergence $D_{R}(\cdot||\cdot)$ is non-negative $\forall R\in \mathds{\overline R}$.
\end{corollary}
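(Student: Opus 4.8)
The plan is to obtain this corollary as a direct specialisation of the Bleuler-Lapidoth-Pfister Theorem stated above, followed by a one-line optimisation over the betting strategy. First I would insert the sign data of the horse-exclusion regime into the general decomposition. Here the odds are negative and the risk parameter is negative, so $\sgn(o) = -1$ and $\sgn(R) = -1$; consequently $\sgn(o)\sgn(R) = 1$, and the Theorem gives
\[
U_R(b_X, o_X^{-}, p_X)
=
-\log\left|c^o\right|
+
D_{1/R}(p_X||r^o_X)
-
D_R(h^{(R,o,p)}_X||b_X).
\]

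Next I would observe that, of the three terms on the right-hand side, only the final one depends on the betting strategy $b_X$; the first two are constants determined solely by the game data $(o_X, p_X, R)$, and the target PMF $h^{(R,o,p)}_X$ itself depends only on $(R,o,p)$. Since the last term appears with an overall minus sign, maximising $U_R$ over $b_X$ is equivalent to minimising $D_R(h^{(R,o,p)}_X||b_X)$. At this step I would invoke the non-negativity of the R\'enyi divergence $D_R(\cdot||\cdot)$ for every $R \in \mathds{\overline R}$ — the very fact flagged in the statement and established for the normalisation adopted in \cref{AA} — together with the fact that it vanishes precisely when its two arguments coincide. Therefore the maximum is achieved at $b^*_X = h^{(R,o,p)}_X$, which annihilates the last term and yields the claimed identity $\max_{b_X} U_R(b_X, o_X^{-}, p_X) = -\log\left|c^o\right| + D_{1/R}(p_X||r^o_X)$.

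I do not expect any real obstacle here, since the corollary is an immediate consequence of the Theorem; the only point needing genuine care is the bookkeeping of the two sign factors. Both $\sgn(o)$ and $\sgn(R)$ are negative in the exclusion regime, so they combine to a positive coefficient in front of the two R\'enyi-divergence terms, which is exactly what flips the sign of the $\log\left|c^o\right|$ term relative to the discrimination corollary while leaving the optimisation structure intact. One should also confirm that $h^{(R,o,p)}_X$ remains a valid PMF for negative odds; this is guaranteed by the closing remark of the Theorem, since in $h^{(R,o,p)}(x) \propto p(x)^{1/R} o(x)^{(1-R)/R}$ the common sign of $o(x)$ across all $x$ factors out of numerator and denominator and cancels in the normalisation.
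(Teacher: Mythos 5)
Your proof is correct and follows exactly the route the paper intends: specialise the Bleuler--Lapidoth--Pfister decomposition with $\sgn(o)=\sgn(R)=-1$ so that the two divergence terms acquire a positive coefficient, note that only $-D_R(h^{(R,o,p)}_X||b_X)$ depends on $b_X$, and kill that term using non-negativity of $D_R$ together with $D_R(h||h)=0$ at $b^*_X=h^{(R,o,p)}_X$. The paper gives only the one-line justification via non-negativity of the R\'enyi divergence; your write-up is the same argument with the sign bookkeeping and the validity of $h^{(R,o,p)}_X$ as a PMF for negative odds made explicit.
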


\subsection{Horse betting with risk and side information}

We now address a result that characterises this task in terms of the BLP-CR-divergence and the R-divergence. 

\begin{theorem}(Bleuler-Lapidoth-Pfister \cite{BLP1, thesis_CP})
	Consider a HB game with risk and side information defined by the triple $(o_X,p_{XG},R)$, and a Gambler playing this game with a betting strategy $b_{X|G}$. The utility function of log-wealth is characterised by the the BLP-CR-divergence $D^{\rm BLP}_\alpha(\cdot||\cdot|\cdot)$ and R-divergence $D_\alpha(\cdot||\cdot)$ as:
    \begin{multline}
		U_R
		(b_{X|G},o_X,p_{XG})
		\\=
		\sgn(o)
		\log 
		\left|
		c^o
		\right|
		+
		\sgn(o)
		\sgn(R)
		D^{\rm BLP}_{1/R}
		\left(
		p_{X|G}||r_X^o|p_G
		\right)
		-
		\sgn(o)
		\sgn(R)
		D_{R}
		\left(
		h^{(R,o,p)}_{X|G} 
		h_G^{(R,o,p)}
		\Big| \Big|
		b_{X|G}\,
		h_G^{(R,o,p)}
		\right),
	\end{multline}
	with the parameter and the PMF:
	\begin{align}
	c^o
	\coloneqq
	\left(
	\sum_x \frac{1}{o(x)}
	\right)^{-1},
	\hspace{0.5cm}
	r^o(x)
	\coloneqq
	\frac{c^o}{o(x)},
	\end{align}
	and the conditional PMF and PMF:
	\begin{align}
		h^{(R,o,p)}(x|g)
		&\coloneqq \frac{
			p(x|g)^{\frac{1}{R}}
			o(x)^{\frac{1-R}{R}}
		}
		{
			\sum_{x'}
			p(x'|g)^{\frac{1}{R}}
			o(x')^{\frac{1-R}{R}}
		},
		\label{eq:HB_SI_b}
		\\
		h^{(R,o,p)}(g)
		&\coloneqq \frac{
		    p(g)
		    \left[
		    \sum_{x'}
			p(x'|g)^{\frac{1}{R}}
			o(x')^{\frac{1-R}{R}}
			\right]^{R}
		}
		{
			\sum_{g'}
			p(g')
		    \left[
		    \sum_{x'}
			p(x'|g')^{\frac{1}{R}}
			o(x')^{\frac{1-R}{R}}
			\right]^{R}
		}.
		\end{align}
		Note that the quantities $r^o_X$, $h^{(R,o,p)}_{X|G}$, $h^{(R,o,p)}_G$ define valid PMFs even for negative odds ($o(x)<0$, $\forall x$).
\end{theorem}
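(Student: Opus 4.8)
The plan is to prove the decomposition by direct algebraic manipulation of the isoelastic certainty equivalent, reducing everything to the reference PMFs $h^{(R,o,p)}_{X|G}$, $h^{(R,o,p)}_G$ and the single-horse quantities $c^o$, $r^o_X$, exactly as in the side-information-free Theorem proved just above (the present statement is its conditional version). First I would raise the ICE to the power $1-R$, writing $(w^{ICE}_R)^{1-R} = \sum_{g,x} b(x|g)^{1-R} o(x)^{1-R} p(x|g)\,p(g)$, and introduce the normalising factor $Z(g) \coloneqq \sum_x p(x|g)^{1/R} o(x)^{(1-R)/R}$, so that $h^{(R,o,p)}(x|g) = p(x|g)^{1/R} o(x)^{(1-R)/R}/Z(g)$. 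The key substitution is the identity $o(x)^{1-R} p(x|g) = Z(g)^R\, h^{(R,o,p)}(x|g)^R$, which follows at once from these definitions and recasts the sum as $(w^{ICE}_R)^{1-R} = \sum_g p(g)\, Z(g)^R \sum_x h^{(R,o,p)}(x|g)^R\, b(x|g)^{1-R}$.

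Next I would recognise the two divergence structures. Comparing with the definition of $h^{(R,o,p)}_G$ shows that $p(g)\, Z(g)^R = N\, h^{(R,o,p)}(g)$ with $N \coloneqq \sum_{g'} p(g')\, Z(g')^R$, so the outer weighting is precisely the reference PMF $h^{(R,o,p)}_G$. The inner sum $\sum_x h^{(R,o,p)}(x|g)^R b(x|g)^{1-R}$, once weighted by $h^{(R,o,p)}(g)$ and summed over $g$, equals $\exp[\sgn(R)(R-1)\, D_{R}(h^{(R,o,p)}_{X|G} h^{(R,o,p)}_G || b_{X|G} h^{(R,o,p)}_G)]$ by the definition of the R\'enyi divergence, since the factors $h^{(R,o,p)}(g)^R h^{(R,o,p)}(g)^{1-R}$ collapse to $h^{(R,o,p)}(g)$. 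For the prefactor $N$ I would invoke $r^o(x) = c^o/o(x)$, which gives $p(x|g)^{1/R} r^o(x)^{1-1/R} = (c^o)^{1-1/R}\, p(x|g)^{1/R} o(x)^{(1-R)/R} = (c^o)^{1-1/R} Z(g)\,h^{(R,o,p)}(x|g)$, and hence $N = (c^o)^{1-R}\exp[\sgn(R)(1-R)\, D^{\rm BLP}_{1/R}(p_{X|G}||r^o_X|p_G)]$, matching the BLP conditional R\'enyi divergence term. Assembling the pieces yields $(w^{ICE}_R)^{1-R} = (c^o)^{1-R}\exp[\sgn(R)(1-R)(D^{\rm BLP}_{1/R} - D_{R})]$.

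Finally I would take logarithms and divide by $1-R$ to obtain $\log|w^{ICE}_R| = \log|c^o| + \sgn(R)(D^{\rm BLP}_{1/R} - D_{R})$, then multiply by $\sgn(o)$ to recover $U_R$ in the stated form. The main obstacle is the sign bookkeeping: since both the odds and the risk parameter may be negative, the quantities $w^{ICE}_R$, $c^o$, and the interior sums need not be positive, so I must insert absolute values consistently and verify that the factors $\sgn(o)$ and $\sgn(R)$ emerge correctly from the power $1/(1-R)$ and from the $\sgn(\cdot)/(\alpha-1)$ prefactors of the divergence definitions. In particular I would check that $r^o_X$, $h^{(R,o,p)}_{X|G}$, and $h^{(R,o,p)}_G$ remain genuine PMFs for negative odds -- the negative signs from $o(x)<0$ cancel in each normalisation -- which is exactly the remark appended to the theorem, and which guarantees that every divergence on the right-hand side is well defined.
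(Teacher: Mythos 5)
Your computation is correct, and it is worth noting that the paper itself does not prove this statement: it imports it verbatim from Bleuler--Lapidoth--Pfister \cite{BLP1, thesis_CP}, adding only the $\sgn(o)$ and $\sgn(R)$ bookkeeping described in the remark preceding the two theorems. Your direct verification is therefore a genuine addition rather than a rederivation of something in the appendix, and it is consistent with the special-case computations the paper does carry out (Lemmas \ref{lemma4} and \ref{lemma5}, which are exactly your argument restricted to constant odds, where $Z(g)=\sum_x p(x|g)^{1/R}$ and the BLP term collapses). All the key identities check out: $o(x)^{1-R}p(x|g)=Z(g)^R h^{(R,o,p)}(x|g)^R$; $p(g)Z(g)^R=N\,h^{(R,o,p)}(g)$; the collapse $h(g)^R h(g)^{1-R}=h(g)$ turning the inner sum into $2^{\sgn(R)(R-1)D_R(\cdot\|\cdot)}$; and $r^o(x)^{1-1/R}=(c^o)^{1-1/R}o(x)^{(1-R)/R}$ together with the prefactor $|1/R|/(1/R-1)=\sgn(R)/(1-R)$ identifying $N=(c^o)^{1-R}2^{\sgn(R)(1-R)D^{\rm BLP}_{1/R}}$. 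Your sign analysis for negative odds is also right: replacing $o$ by $|o|$ leaves every $h$ and every divergence unchanged, so $\log|w^{ICE}_R|=\log|c^o|+\sgn(R)(D^{\rm BLP}_{1/R}-D_R)$ and multiplying by $\sgn(o)$ gives the stated form. Two minor points you should make explicit: the division by $1-R$ and the exponents $1/R$ restrict the derivation to $R\in(-\infty,0)\cup(0,1)\cup(1,\infty)$, with $R\in\{0,1,\pm\infty\}$ handled by the continuous extension already built into the definitions; and your $\exp[\cdot]$ should be $2^{(\cdot)}$ to match the base of the paper's logarithms (immaterial if kept consistent).
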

We are particularly interested in the best possible betting strategy $b_{X|G}$ for a given game $(o_X,p_{XG})$ and fixed $R$, so we have the following two corollaries.
\begin{corollary}(Bleuler-Lapidoth-Pfister \cite{BLP1, thesis_CP})
	Consider a horse discrimination (HD) game ($o_X^{+}$, meaning $\sgn(o)=1$) being played by a risk-averse Gambler ($R>0$, meaning $\sgn(R)=1$) with access to side information. We then want to maximise the logarithm of the isoelastic certainty equivalent over all possible betting strategies. The Gambler plays optimally when choosing $b^*(x|g) = h^{(R,o,p)}(x|g)$ and then:
	\begin{align}  
	\max_{b_{X|G}}
	U_R
	(b_{X|G},o_X^{+},p_{XG})
	&=
	U_R 
	(b^*_{X|G},o_X^{+},p_{XG}),
	\nonumber
	\\
	&=
	\log 
	\left|
	c^o
	\right|
	+
	D^{\rm BLP}_{1/R}
	(p_{X|G}
	||
	r^o_X 
	|
	p_G
	),
	\end{align}
	with the BLP-CR-divergence $D^{\rm BLP}_\alpha(\cdot||\cdot|\cdot)$. This is because the R\'enyi divergence $D_{R}(\cdot||\cdot)$ is non-negative $\forall R\in \mathds{\overline R}$.
\end{corollary}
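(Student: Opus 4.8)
The plan is to derive this corollary directly from the general decomposition established in the preceding Theorem, by specialising the two sign factors and then optimising the single term that depends on the betting strategy. First I would substitute $\sgn(o)=1$ (positive odds, i.e.\ a discrimination game) and $\sgn(R)=1$ (a risk-averse Gambler with $R>0$) into the general identity for $U_R(b_{X|G},o_X,p_{XG})$. With both sign factors equal to $+1$, that identity collapses to
\[
U_R(b_{X|G},o_X^{+},p_{XG})
= \log|c^o| + D^{\rm BLP}_{1/R}(p_{X|G}||r_X^o|p_G)
- D_R\!\left(h^{(R,o,p)}_{X|G} h_G^{(R,o,p)} \,\big\|\, b_{X|G}\, h_G^{(R,o,p)}\right).
\]

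Next I would observe that only the final R\'enyi-divergence term carries any dependence on the strategy $b_{X|G}$, since the first two terms are fixed by the game data $(o_X,p_{XG})$ and the risk parameter $R$. Because this divergence enters with a minus sign, maximising $U_R$ over $b_{X|G}$ is equivalent to \emph{minimising} $D_R(h^{(R,o,p)}_{X|G} h_G^{(R,o,p)} \| b_{X|G}\, h_G^{(R,o,p)})$. Invoking the non-negativity of the R\'enyi divergence $D_R(\cdot\|\cdot)$ for all $R\in\mathds{\overline R}$, the smallest value this term can take is zero, which immediately yields the upper bound $U_R \le \log|c^o| + D^{\rm BLP}_{1/R}(p_{X|G}||r_X^o|p_G)$.

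The key step is then to exhibit a strategy that saturates this bound. The R\'enyi divergence vanishes precisely when its two arguments coincide, so choosing $b^*(x|g)=h^{(R,o,p)}(x|g)$ makes $b^*_{X|G} h_G^{(R,o,p)} = h^{(R,o,p)}_{X|G} h_G^{(R,o,p)}$ as joint PMFs, forcing the divergence to zero and achieving the bound with equality. Substituting $b^*$ back then gives the claimed value $\max_{b_{X|G}} U_R(b_{X|G},o_X^{+},p_{XG}) = \log|c^o| + D^{\rm BLP}_{1/R}(p_{X|G}||r_X^o|p_G)$.

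The main subtlety — and essentially the only point needing care — is admissibility and support. I would check that $h^{(R,o,p)}_{X|G}$ is a legitimate conditional PMF so that $b^*_{X|G}$ is a valid strategy; this is guaranteed by the remark in the Theorem that the auxiliary quantities define valid PMFs (even for negative odds). I would also note that equality of the two joint PMFs need only hold on the support of $h_G^{(R,o,p)}$, which coincides with the support of $p_G$; on the complementary null set the value of $b^*(\cdot|g)$ affects neither $U_R$ nor the divergence, so optimality is not compromised there.
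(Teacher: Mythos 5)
Your proposal is correct and follows essentially the same route as the paper: the corollary is obtained by specialising the sign factors in the preceding Theorem's decomposition, noting that only the final Rényi-divergence term depends on $b_{X|G}$, and using non-negativity of $D_R(\cdot\|\cdot)$ together with the choice $b^*_{X|G}=h^{(R,o,p)}_{X|G}$ to annihilate that term. Your additional remarks on admissibility and support are sound but not needed beyond what the Theorem already guarantees.
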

\begin{corollary}
	Consider a classical horse exclusion (HE) game ($o_X^{-}$)	being played by a risk-averse Gambler ($R<0$) with access to side information. We then want to maximise the logarithm of the isoelastic certainty equivalent over all possible betting strategies. The Gambler plays optimally when choosing $b^*(x|g)=h^{(R,o,p)}(x|g)$ and then:
	\begin{align} 
	\max_{b_{X|G}}
	U_R
	(b_{X|G},o_X^{-},p_{XG})
	&=
	U_R
	(b^*_{X|G},o_X^{-},p_{XG}),
	\nonumber
	\\
	&=
	-
	\log 
	\left|
	c^o
	\right|
	+
	D^{\rm BLP}_{1/R}
	(p_{X|G}
	||
	r^o_X 
	|
	p_G
	),
	\end{align}
	with the BLP-CR-divergence $D^{\rm BLP}_\alpha(\cdot||\cdot|\cdot)$. This is because the R\'enyi divergence $D_{R}(\cdot||\cdot)$ is non-negative $\forall R\in \mathds{\overline R}$.
\end{corollary}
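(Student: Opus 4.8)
The plan is to read this corollary off directly from the immediately preceding Theorem --- the Bleuler--Lapidoth--Pfister decomposition of $U_R(b_{X|G},o_X,p_{XG})$ into three terms --- so that essentially no new computation is needed. The only work is to specialise the two sign factors to the exclusion regime and then to isolate and optimise the single term that carries the betting-strategy dependence.

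First I would substitute the defining signs of a horse exclusion game played by a risk-averse gambler: negative odds give $\sgn(o)=-1$, and $R<0$ gives $\sgn(R)=-1$, so the product $\sgn(o)\sgn(R)=+1$. Feeding these into the Theorem yields
\begin{multline*}
U_R(b_{X|G},o_X^{-},p_{XG})
= -\log|c^o|
+ D^{\rm BLP}_{1/R}(p_{X|G}||r_X^o|p_G)\\
- D_R\!\left(h^{(R,o,p)}_{X|G}\,h_G^{(R,o,p)} \,\big|\big|\, b_{X|G}\,h_G^{(R,o,p)}\right).
\end{multline*}
The first two terms do not involve the gambler's choice $b_{X|G}$; only the last term, which enters with an overall minus sign, depends on it.

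Next I would maximise over $b_{X|G}$. Since maximising $U_R$ is the same as minimising the R\'enyi divergence $D_R(\cdot||\cdot)$ in the final term, and since --- with the paper's $\sgn(\alpha)$-normalised definition of $D_\alpha$ --- this divergence is non-negative for \emph{every} order, including the present $R<0$, the maximum is attained exactly when that divergence vanishes. The divergence is zero precisely when its two arguments coincide, i.e.\ when $b_{X|G}\,h_G^{(R,o,p)}=h^{(R,o,p)}_{X|G}\,h_G^{(R,o,p)}$; because $h_G^{(R,o,p)}$ is a fixed, strictly supported PMF, this forces $b^*(x|g)=h^{(R,o,p)}(x|g)$. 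Substituting this optimiser annihilates the last term and leaves precisely $-\log|c^o|+D^{\rm BLP}_{1/R}(p_{X|G}||r_X^o|p_G)$, the claimed value.

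The one point requiring care --- and hence the step I would verify most carefully --- is the non-negativity of $D_R$ together with its vanishing condition for a \emph{negative} order $R$. This is not the textbook statement, since the ordinary R\'enyi divergence is non-negative only for positive orders; it holds here solely because of the $\sgn(\alpha)$ factor built into the paper's definition of $D_\alpha$. I would confirm that this same sign factor which renders $D_R\ge 0$ also guarantees $D_R(p||q)=0 \iff p=q$, so that the minimiser is unique and the argument is identical to the discrimination case treated in the previous corollary, differing only in the replacement $+\log|c^o|\mapsto -\log|c^o|$ coming from $\sgn(o)=-1$.
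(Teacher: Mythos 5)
Your proof is correct and follows essentially the same route as the paper: the corollary is read off directly from the Bleuler--Lapidoth--Pfister decomposition by substituting $\sgn(o)=-1$ and $\sgn(R)=-1$ (so $\sgn(o)\sgn(R)=+1$), and the only $b_{X|G}$-dependent term is eliminated by choosing $b^*(x|g)=h^{(R,o,p)}(x|g)$, using that the sign-normalised R\'enyi divergence is non-negative for all orders in $\mathds{\overline R}$. Your additional check that $D_{R}(p_X||q_X)=0$ if and only if $p_X=q_X$ at negative orders is precisely the point the paper leaves implicit, and it does hold under the paper's $\sgn(\alpha)$-normalised definition (e.g.\ via the skew-symmetry relation connecting order $R<0$ to order $1-R>1$), though for the stated equality alone non-negativity plus vanishing at $b_{X|G}=h^{(R,o,p)}_{X|G}$ already suffices.
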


\subsection{Proving Result 1}

In order to prove Result 1 we need two Lemmas. Let us start by rewriting the R\'enyi entropy in a more convenient form:
\begin{align}
		H_{\alpha}(X)
		& 
		=
		- 
		\log
		\left[
		p_{\alpha}
		\left(
		X 
		\right)
		\right],
		\label{eq:RE appendix}\\
		p_{\alpha}
		\left(
		X 
		\right)
		& \coloneqq
		\left(
		\sum_x p(x)^\alpha
		\right)^{\frac{1}{(\alpha-1)}}	.
\end{align}
We are now ready to establish a first Lemma.
\begin{lemma}
	(Operational interpretation of the R\'enyi entropy) \label{lemma4} Consider a PMF $p_X$, the R\'enyi probability of order $\alpha \in \mathds{\overline R}$ can be written as:
	\begin{align}
	\sgn(\alpha)
	\,
    C
    \,
	p_\alpha(X)
	=
    \max_{b_X}
    w^{ICE}_{1/\alpha}
    (
    b_{X}
    ,
    o_X^{\sgn(\alpha)c}
    ,
    p_X)
    ,
    \end{align}
	with the maximisation over all possible betting strategies $b_X$, and constant odds $o^{\sgn(\alpha)c}(x) \coloneqq \sgn(\alpha)C$, $C>0$, $\forall x$.
\end{lemma}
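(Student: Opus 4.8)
The plan is to reduce the statement to the two corollaries of the horse-betting theorem \emph{without} side information established above, specialised to constant odds. First I would record the closed form of the no-side-information ICE that is consistent with the sign conventions of the isoelastic utility \eqref{eq:isoelastic}, namely
\begin{align}
  w^{ICE}_R(b_X, o_X, p_X) = \sgn(o)\,|o|\left[\sum_x b(x)^{1-R}p(x)\right]^{\frac{1}{1-R}},
\end{align}
which follows by solving $u_R(w^{ICE}) = \mathbb{E}[u_R(W)]$ for wealth $w(x)=b(x)o(x)$ of fixed sign. Recalling the auxiliary quantity $U_R(b_X,o_X,p_X) = \sgn(o)\log|w^{ICE}_R(b_X,o_X,p_X)|$ \eqref{eq:QSB_SI_U}, maximising $w^{ICE}_R$ over $b_X$ is equivalent to maximising $U_R$ when $o>0$, and to \emph{minimising} $|w^{ICE}_R|$ (hence again to maximising $U_R$) when $o<0$.

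Next I would specialise the two corollaries to constant odds $o(x)=\sgn(\alpha)C$ and risk $R=1/\alpha$. A short computation gives $c^o = \sgn(\alpha)C/K$ with $K=|\mathcal{X}|$, so $|c^o|=C/K$, and crucially $r^o(x)=c^o/o(x)=1/K$, i.e. the reference PMF $r^o_X$ is \emph{uniform}. Since for $\alpha>0$ we face a horse-discrimination game ($o>0$) with $R=1/\alpha>0$, and for $\alpha<0$ a horse-exclusion game ($o<0$) with $R<0$, the two corollaries apply verbatim and give $\max_{b_X}U_R = \sgn(o)\log|c^o| + D_{1/R}(p_X\|r^o_X)$, the optimiser being the constant-odds form $b^*(x)=h^{(R,o,p)}(x)\propto p(x)^\alpha$ of \eqref{eq:HB_b}. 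It then remains to evaluate the R\'enyi divergence against the uniform PMF.

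For $r^o_X$ uniform and $\beta=1/R=\alpha$, expanding \eqref{eq:RD} and using $\tfrac{\sgn(\alpha)}{\alpha-1}(1-\alpha) = -\sgn(\alpha)$ yields
\begin{align}
  D_{\alpha}(p_X\|r^o_X) = \sgn(\alpha)\log p_\alpha(X) + \sgn(\alpha)\log K,
\end{align}
with $p_\alpha(X) = (\sum_x p(x)^\alpha)^{1/(\alpha-1)}$. Combining with $\sgn(o)\log|c^o| = \sgn(\alpha)(\log C - \log K)$ (valid in both regimes since $\sgn(o)=\sgn(\alpha)$), the two $\log K$ terms cancel and
\begin{align}
  \max_{b_X}U_R = \sgn(\alpha)\log\!\big(C\,p_\alpha(X)\big).
\end{align}
Finally I would translate back through $U_R=\sgn(o)\log|w^{ICE}_R|$: for $\alpha>0$ this reads $\max_{b_X}\log w^{ICE}_{1/\alpha} = \log(Cp_\alpha(X))$, giving $\max_{b_X}w^{ICE}_{1/\alpha}=Cp_\alpha(X)$; for $\alpha<0$ it reads $-\log\big(\min_{b_X}|w^{ICE}_{1/\alpha}|\big) = -\log(Cp_\alpha(X))$, whence, as $w^{ICE}<0$ here, $\max_{b_X}w^{ICE}_{1/\alpha}=-Cp_\alpha(X)$. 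Both cases collapse to $\sgn(\alpha)\,C\,p_\alpha(X)$, as claimed.

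The main obstacle is not the optimisation — that is delivered by the corollaries — but the careful sign bookkeeping keeping $\sgn(\alpha)$, $\sgn(R)$, and $\sgn(o)$ mutually consistent when passing between $w^{ICE}_R$ and $U_R$, together with verifying that the $\log K$ contributions cancel. The boundary orders $\alpha\in\{0,1,\infty,-\infty\}$ I would treat by the continuous extension already built into $p_\alpha(X)$ and $w^{ICE}_R$; as a sanity check, $\alpha\to\infty$ gives the risk-neutral $\max_{b_X} C\sum_x b(x)p(x) = C\,p_{\max} = C\,p_\infty(X)$, and $\alpha\to-\infty$ gives $-C\,p_{\min}=-C\,p_{-\infty}(X)$, both matching the general formula.
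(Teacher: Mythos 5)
Your proof is correct, and it is anchored in the same external input as the paper's -- the Bleuler--Lapidoth--Pfister corollaries for horse betting without side information -- but the subsequent computation runs along a different track. The paper takes from the corollaries only the \emph{optimiser} $b^*(x)=h^{(R,o,p)}(x)\propto p(x)^{\alpha}$ and then substitutes it directly into the ICE \eqref{eq:wCE_QSB}, simplifying the resulting sum by hand until $\sgn(\alpha)\,C\,p_\alpha(X)$ appears. You instead take from the corollaries the \emph{optimal value} $\max_{b_X}U_R=\sgn(o)\log|c^o|+D_{1/R}(p_X\|r^o_X)$, observe that constant odds make $r^o_X$ uniform, evaluate $D_\alpha(p_X\|\mathrm{unif})=\sgn(\alpha)\log p_\alpha(X)+\sgn(\alpha)\log K$, and check that the $\log K$ terms cancel against $\sgn(o)\log|c^o|$. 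Your route buys a cleaner conceptual picture (the Rényi entropy appears as a Rényi divergence from uniform, up to an additive constant) at the cost of the extra sign bookkeeping in passing between $w^{ICE}_R$ and $U_R$ for negative odds -- which you handle correctly, including the observation that maximising $U_R$ for $\sgn(o)=-1$ means minimising $|w^{ICE}_R|$ and hence maximising the (negative) $w^{ICE}_R$. Your explicit closed form $w^{ICE}_R=\sgn(o)\,|o|\,[\sum_x b(x)^{1-R}p(x)]^{1/(1-R)}$ for constant odds is in fact a more careful treatment of the negative-wealth case than the paper's display, which formally raises negative quantities to non-integer powers. Both arguments are equally rigorous modulo the corollaries; neither has a gap.
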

\begin{proof}
	We start by considering a HB game with constant odds $o^{\sgn(\alpha)}(x) \coloneqq \sgn(\alpha)C$, $C>0$, $\forall x$, and consider a risk-aversion coefficient parametrised as $R(\alpha) \coloneqq 1/\alpha$. We first notice that the best strategy for the Gambler is given by \eqref{eq:HB_b}:
	\begin{align}
		b^*(x)
		=
		\frac{
			p(x)^\alpha
		}
		{
			\sum_{x'}
			p(x')^\alpha
		}.
	\end{align}
	Considering now the isoelastic certainty equivalent and replacing the constant odds and the best strategy we get:
	\begin{align}
		w^{ICE}_{1/\alpha}
		(b^*_{X},o_X^{\sgn(\alpha)c},p_X)
		&=
		\left[
		\sum_{x}
		p(x)
		\big[
		b^*(x)
		o^{\sgn(\alpha)c}(x)
		\big]^{\frac{\alpha-1}{\alpha}}
		\right]^\frac{\alpha}{\alpha-1}
		,\\
		&=
		\sgn(\alpha)
		C
		\left[
		\sum_{x}
		p(x)
		\big[
		b^*(x)
		\big]^{\frac{\alpha-1}{\alpha}}
		\right]^\frac{\alpha}{\alpha-1}
		,\\
		&=
		\sgn(\alpha)
		C
		\left[
		\sum_{x}
		p(x)
		\left[
		\frac{
			p(x)^\alpha
		}
		{
			\sum_{x'}
			p(x')^\alpha
		}
		\right]^{\frac{\alpha-1}{\alpha}}
		\right]^\frac{\alpha}{\alpha-1}
		,\\
		&=
		\sgn(\alpha)
		C
		\left[
		\sum_{x}
		p(x)
		\frac{
			p(x)^{\alpha-1}
		}
		{
			\left[ \sum_{x'}
			p(x')^\alpha \right]^\frac{\alpha-1}{{\alpha}}
		}
		\right]^\frac{\alpha}{\alpha-1}
		.
		\end{align}
		Reorganising we get:
		\begin{align}
		w^{ICE}_{1/\alpha}
		(b^*_{X},o_X^{\sgn(\alpha)c},p_X)
		&=
		\sgn(\alpha)
		C
		\left[
		\sum_{x}
		\frac{
			p(x)^{\alpha}
		}
		{
			\left[ \sum_{x'}
			p(x'|g)^\alpha \right]^\frac{\alpha-1}{{\alpha}}
		}
		\right]^\frac{\alpha}{\alpha-1}
		,\\
		&=
		\sgn(\alpha)
		C
		\frac{
			1
		}
		{
			\sum_{x'}
			p(x')^\alpha
		}
		\left[
		\sum_{x}
		p(x)^{\alpha}
		\right]^\frac{\alpha}{\alpha-1}
		,\\
		&=
		\sgn(\alpha)
		C
		\left[
		\sum_{x}
		p(x)^{\alpha}
		\right]^\frac{1}{\alpha-1}
		,\\
		&=
		\sgn(\alpha)
		C 
		p_\alpha(X)
		,
		\end{align}
	and therefore proving the claim.
\end{proof}
We now move on to rewrite the Arimoto-R\'enyi conditional entropy in a more convenient form:
\begin{align}
	H_{\alpha}(X|G)
	& =
	- 
	\log
	\left[
	p_{\alpha}
	\left(
	X|G
	\right)
	\right],
	\label{eq:ARCE appendix}\\
	p_{\alpha}
	\left(
	X|G
	\right)
	&\coloneqq
	\left(
	\sum_g
	\left(
	\sum_x
	p(x,g)^\alpha
	\right)^\frac{1}{\alpha}	
	\right)^\frac{\alpha}{(\alpha-1)}
	.
\end{align}
We are now ready to establish a second Lemma.
\begin{lemma}(Operational interpretation of the Arimoto-R\'enyi conditional entropy) \label{lemma5}
	Consider a joint PMF $p_{XG}$, the Arimoto-R\'enyi conditional entropy of order $\alpha \in \mathds{\overline R}$ can be written as:
	\begin{align}
	\sgn(\alpha)
	\,
	C
	\,
	p_\alpha(X|G)
	=
    \max_{b_{X|G}}
    w^{ICE}_{1/\alpha}
    (
    b_{X|G}
    ,
    o_X^{\sgn(\alpha)c}
    ,
    p_{XG})
    ,
    \end{align}
	with the maximisation over all possible betting strategies $b_{X|G}$, and constant odds $o^{\sgn(\alpha)c}(x) \coloneqq \sgn(\alpha)C$, $C>0$, $\forall x$.
\end{lemma}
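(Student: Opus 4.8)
The plan is to mimic the proof of \cref{lemma4} (the no-side-information case), now using the side-information corollaries of the Bleuler--Lapidoth--Pfister theorem. First I would fix $R(\alpha) = 1/\alpha$ and constant odds $o^{\sgn(\alpha)c}(x) = \sgn(\alpha) C$, and read off the optimal betting strategy directly from those corollaries, namely $b^*(x|g) = h^{(R,o,p)}(x|g)$. For discrimination ($\alpha>0$, hence $R>0$ and $\sgn(o)=1$) one invokes the horse-discrimination-with-side-information corollary, and for exclusion ($\alpha<0$, hence $R<0$ and $\sgn(o)=-1$) the horse-exclusion-with-side-information corollary; in both cases the same optimiser $h^{(R,o,p)}_{X|G}$ is selected, since the residual divergence $D_R(h^{(R,o,p)}_{X|G} h_G^{(R,o,p)} || b_{X|G} h_G^{(R,o,p)})$ vanishes exactly at this choice. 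Because maximising $U_R$ is equivalent to maximising $w^{ICE}_R$, it then suffices to evaluate the isoelastic certainty equivalent at $b^* = h^{(R,o,p)}$.

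The key simplification is that with constant odds the factor $o(x)^{(1-R)/R}$ is independent of $x$ and cancels in the normalisation of $h^{(R,o,p)}$, so using $1/R = \alpha$ the optimiser collapses to $b^*(x|g) = p(x|g)^\alpha / \sum_{x'} p(x'|g)^\alpha$, the conditional analogue of the $b^*(x)$ appearing in \cref{lemma4}. I would then substitute this into $w^{ICE}_{1/\alpha}(b_{X|G}, o^{\sgn(\alpha)c}_X, p_{XG}) = [\sum_{g,x} (b(x|g) o(x))^{(\alpha-1)/\alpha} p(x,g)]^{\alpha/(\alpha-1)}$. The constant odds pull out as an overall $\sgn(\alpha) C$, and writing $p(x,g) = p(x|g) p(g)$ together with $1 - (\alpha-1)/\alpha = 1/\alpha$ reduces the inner $x$-sum to $p(g) [\sum_x p(x|g)^\alpha]^{1/\alpha}$. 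Rewriting $p(x|g)^\alpha = p(x,g)^\alpha / p(g)^\alpha$ then cancels the remaining factor of $p(g)$, leaving $\sum_g [\sum_x p(x,g)^\alpha]^{1/\alpha}$ raised to the power $\alpha/(\alpha-1)$, which is precisely $\sgn(\alpha) C \, p_\alpha(X|G)$ by the definition in \eqref{eq:ARCE appendix}.

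The cases $\alpha \in \{1, \infty, -\infty\}$ are handled by continuous extension, consistent with the corresponding extensions of the Arimoto--R\'enyi conditional entropy and of the ICE. The main obstacle I anticipate is the bookkeeping of signs and fractional powers when $\alpha < 0$: the odds are negative, so terms such as $(\sgn(\alpha) C)^{(\alpha-1)/\alpha}$ and $b(x|g) o(x)$ must be interpreted through the $\sgn(\cdot)$ and absolute-value conventions built into $u_R$ and $w^{ICE}_R$. Delegating the determination of the optimiser to the BLP exclusion corollary, which is already established for negative odds and guarantees that $h^{(R,o,p)}_{X|G}$ is a valid PMF, sidesteps having to re-verify optimality by hand, so the remaining work is a sign-careful but otherwise routine algebraic reduction identical in structure to that of \cref{lemma4}.
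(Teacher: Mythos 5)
Your proposal is correct and follows essentially the same route as the paper: identify the optimal conditional betting strategy from the Bleuler--Lapidoth--Pfister side-information result, observe that constant odds collapse $h^{(R,o,p)}(x|g)$ to $p(x|g)^\alpha/\sum_{x'}p(x'|g)^\alpha$, and substitute into the ICE to recover $\sgn(\alpha)\,C\,p_\alpha(X|G)$ via the same algebraic reduction. The only (harmless) difference is that you make explicit the vanishing of the residual divergence $D_R(\cdot\|\cdot)$ at the optimiser to justify optimality in both the positive- and negative-$\alpha$ cases, where the paper simply cites the optimal strategy directly.
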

\begin{proof}
	We start by considering a HB game with constant odds $o^{\sgn(\alpha)}(x) \coloneqq \sgn(\alpha)C$, $C>0$, $\forall x$, and consider a risk-aversion coefficient parametrised as $R(\alpha) \coloneqq 1/\alpha$. We now notice that the best strategy for the Gambler with access to side information is given by \eqref{eq:HB_SI_b}:
	\begin{align}
	b^*(x|g)
	&=
	g^{(R,o,p)}(x|g)
	,\\
	& = \frac{
		p(x|g)^{\frac{1}{R}}
		o^{\sgn(\alpha)c}(x)^{\frac{1-R}{R}}
	}
	{
		\sum_{x'}
		p(x'|g)^{\frac{1}{R}}
		o^{\sgn(\alpha)c}(x')^{\frac{1-R}{R}}
	}
	,\\
	& = \frac{
		p(x|g)^{\frac{1}{R}}
		(\sgn(\alpha)C)^{\frac{1-R}{R}}
	}
	{
		\sum_{x'}
		p(x'|g)^{\frac{1}{R}}
		(\sgn(\alpha)C)^{\frac{1-R}{R}}
	}
	,\\
	& = \frac{
		p(x|g)^{\frac{1}{R}}
	}
	{
		\sum_{x'}
		p(x'|g)^{\frac{1}{R}}
	}
	,\\
	& = \frac{
		p(x|g)^{\alpha}
	}
	{
		\sum_{x'}
		p(x'|g)^{\alpha}
	}
	.
	\end{align}
	Considering now the isoelastic certainty equivalent and replacing the constant odds and the best strategy we get:
	\begin{align}
    w^{ICE}_{1/\alpha}
	(b^*_{X|G},o_X^{\sgn(\alpha)c},p_{XG})
	&=
	\left[
	\sum_{x,g}
	p(x,g)
	\big[
	b^*(x|g)
	o^{\sgn(\alpha)c}(x)\big]^{\frac{\alpha-1}{\alpha}}
	\right]^\frac{\alpha}{\alpha-1}
	,\\
		&=
		\sgn(\alpha)
		C
		\left[
		\sum_{x,g}
		p(x,g)
		\left[
		\frac{
			p(x|g)^\alpha
		}
		{
			\sum_{x'}
			p(x'|g)^\alpha
		}
		\right]^{\frac{\alpha-1}{\alpha}}
		\right]^\frac{\alpha}{\alpha-1}
		,\\
		&=
		\sgn(\alpha)
		C
		\left[
		\sum_{x,g}
		p(x,g)
		\frac{
			p(x|g)^{\alpha-1}
		}
		{
			\left[ \sum_{x'}
			p(x'|g)^\alpha \right]^\frac{\alpha-1}{{\alpha}}
		}
		\right]^\frac{\alpha}{\alpha-1}
		.
\end{align}
Using $p(x,g)=p(x|g)p(g)$ and reorganising:
\begin{align}
	w^{ICE}_{1/\alpha}
	(b^*_{X|G},o_X^{\sgn(\alpha)c},p_{XG})	&=
		\sgn(\alpha)
		C
		\left[
		\sum_{x,g}
		p(g)
		\frac{
			p(x|g)^{\alpha}
		}
		{
			\left[ \sum_{x'}
			p(x'|g)^\alpha \right]^\frac{\alpha-1}{{\alpha}}
		}
		\right]^\frac{\alpha}{\alpha-1}
		,\\
		&=
		\sgn(\alpha)
		C
		\left[
		\sum_{g}
		p(g)
		\frac{
			\sum_x p(x|g)^{\alpha}
		}
		{
			\left[ \sum_{x'}
			p(x'|g)^\alpha \right]^\frac{\alpha-1}{{\alpha}}
		}
		\right]^\frac{\alpha}{\alpha-1}
		,\\
		&=
		\sgn(\alpha)
		C
		\left[
		\sum_{g}
		p(g)
		\left[
		\sum_x p(x|g)^{\alpha}
		\right]^{\frac{1}{\alpha}}
		\right]^\frac{\alpha}{\alpha-1}
		,\\
		&=
		\sgn(\alpha)
		C
		\left[
		\sum_{g}
		\left[
		\sum_x 
		p(x|g)^{\alpha}
		p(g)^\alpha
		\right]^{\frac{1}{\alpha}}
		\right]^\frac{\alpha}{\alpha-1}
		,\\
		&=
		\sgn(\alpha)
		C
		\left[
		\sum_{g}
		\left[
		\sum_x
		p(x,g)^{\alpha}
		\right]^{\frac{1}{\alpha}}
		\right]^\frac{\alpha}{\alpha-1}
		,\\
		&=
		\sgn(\alpha)
		C
		p_\alpha(X|G),
		\end{align}
	and therefore proving the claim.
\end{proof}
We are now ready to prove Result 1.
\begin{proof}(of Result 1)
	Consider the Arimoto's mutual information of order $\alpha \in \mathds{\overline R}$, we have the following chain of equalities:
	\begin{align}
	I_\alpha(X;G)
	&=
	\sgn(\alpha)
	[
	H_{\alpha}(X)
	-
	H_{\alpha}(X|G)
	]
	,\\
	&=
	\sgn(\alpha)
	\log
	\left[
	\frac{
		p_{\alpha}
		(
		X|G
		)
	}
	{
		p_{\alpha}	
		(
		X 
		)
	}
	\right]
	,\\
	&=
	\sgn(\alpha)
	\log
	\left[
	\frac{
		\sgn(\alpha)\,C\,p_{\alpha}
		(
		X|G
		)
	}
	{
		\sgn(\alpha)
		\,C\,p_{\alpha}	
		(
		X 
		)
	}
	\right]
	,\\
	&=
	\sgn(\alpha)
	\log
	\left[
	\frac{
		\displaystyle
		\max_{b_{X|G}}
		w^{ICE}_{1/\alpha}
		(
		b_{X|G}
		,
		o_X^{\sgn(\alpha)}
		,
		p_{XG}
		)
	}{
		\displaystyle
		\max_{b_X}
		w^{ICE}_{1/\alpha}
		(
		b_X
		,
		o_X^{\sgn(\alpha)}
		,
		p_X
		)
	}
	\right]
	.
	\end{align}
	The first equality is the definition of the Arimoto's mutual information. The second equality comes from replacing the R\'enyi entropy and the Arimoto-R\'enyi conditional entropy. The third inequality we have multiplied and divided by $\sgn(\alpha)\,C$. The fourth and last equality follows from invoking \cref{lemma4} and \cref{lemma5}. This proves the claim.
\end{proof}

\section{Proof of Corollaries 2 and 3}
\label{Acorollaries}

\begin{proof}(of Corollary 2)
	In the case $\alpha\rightarrow\infty$, we have:
	\begin{align}
		\max_\mathcal{E}
		I_{\infty}
		(X;G)_{\mathcal{E},\mathbb{M}}
		&=
		\log
		\left[
		\max_\mathcal{E}
		\frac{
			\displaystyle
			\max_{b_{X|G}}
			\,
			w^{ICE}_0
			(
			b_{X|G},\mathbb{M},o_X^{c},\mathcal{E}
			)
		}{
			\displaystyle
			\max_{\mathbb{N}\in {\rm UI}}
			\max_{b_{X|G}}
			\,
			w^{ICE}_0
			(
			b_{X|G}, \mathbb{N},o_X^{c}, \mathcal{E}
			)
		}
		\right].
	\end{align}
	To prove the claim, it is enough to prove:
	\begin{align}
	\displaystyle
	\max_{b_{X|G}}
	\,
	w^{ICE}_0
	(
	b_{X|G},\mathbb{M},o_X^{c},\mathcal{E}
	)
	=
	C
	P^{\rm QSD}_{\rm succ}(\mathcal{E},\mathbb{M}).
	\end{align}
	We have already shown this in the main document, but we can also double check it from \cref{lemma5} from which we have that for $\alpha \geq 0$:
	\begin{align}
	\max_{b_{X|G}}
	w^{ICE}_{1/\alpha}
	(b_{X|G},\mathbb{M},o_X^{c},\mathcal{E})
	&=
	C
	p_\alpha(X|G),
	\\
	&=
	C
	\left[
	\sum_{g}
	\left[
	\sum_x
	p(x,g)^{\alpha}
	\right]^{\frac{1}{\alpha}}
	\right]^\frac{\alpha}{\alpha-1}.
	\end{align}
	Considering now $\alpha \rightarrow \infty$ we have:
	\begin{align}
	&
	\max_{b_{X|G}}
	w^{ICE}_{0}
	(b_{X|G},\mathbb{M},o_X^{c},\mathcal{E})
	=
	C
	\sum_g
	\max_x
	p(x,g)
	.
	\end{align}
	Further analysing this quantity we have:
	\begin{align}
	\sum_g
	\max_x
	p(x,g)
	&=
	\sum_g
	\max_{q(x|g)}
	\sum_x
	q(x|g)
	p(x,g),
	\\
	&=
	\max_{q(x|g)}
	\sum_{g,x}
	q(x|g)
	p(g|x)p(x),
	\\
	&=
	\max_{q(x|g)}
	\sum_{g,x}
	\left[
	\sum_a \delta^a_x \,
	q(a|g)
	\right]
	p(g|x)
	p(x),
	\\
	&=
	\max_{q(a|g)}
	\sum_{a,g,x}
	\delta^a_x\,
	q(a|g)
	p(g|x)
	p(x),
	\\
	&=
	P^{\rm QSD}_{\rm succ}(\mathcal{E},\mathbb{M}).
	\end{align}
	In the first line we use the identity:
	\begin{equation}
		\max_{q(x)}
		\sum_x
		q(x)
		f(x)
		=
		\max_x f(x).
		\label{eq:PS_trick}
	\end{equation}
	This proves the claim.
\end{proof}
\begin{proof}(of Corollary 3)
The proof of Corollary 3 follows a similar argument than that of Corollary 2.
\end{proof}

\section{Proof of \cref{R_measurements_channels} on noisy quantum state betting (nQSB) games}
\label{AR_measurements_channels}

The proof of this result is similar to that of result 1, and we write below for completeness. We start with the case for QRTs of measurements with general resources.
\begin{proof}(of first part)
	Consider the Arimoto's gap of order $\alpha \in \mathds{\overline R}$, we have the following chain of equalities:
	\begin{align}
	G_{\alpha}^{\mathbb{F}}
	(X;G)_{\mathcal{E},\mathbb{M}}
	&=
	I_{\alpha}
	(X;G)_{\mathcal{E},\mathbb{M}}
	-
	\max_{\mathbb{N} \in \mathbb{F}}
	I_{\alpha}
	(X;G)_{\mathcal{E},\mathbb{N}}
    ,
	\\
	&=
	\sgn(\alpha)
	\log
	\left[
	\frac{
		p_{\alpha}
		(X_{\mathcal{E}}|G_\mathbb{M})
	}
	{
		p_{\alpha}	
		(
		X
		)
	}
	\right]
	-
	\max_{\mathbb{N} \in \mathbb{F}}
	\sgn(\alpha)
	\log
	\left[
	\frac{
		p_{\alpha}
		(X_{\mathcal{E}}|G_\mathbb{N})
	}
	{
		p_{\alpha}	
		(
		X
		)
	}
	\right]
	,
	\\
	&=
	\sgn(\alpha)
	\log
	\left[
	\frac{
		p_{\alpha}
		(X_{\mathcal{E}}|G_\mathbb{M})
	}
	{
	    \displaystyle
	    \max_{\sigma \in {\rm F}}
		p_{\alpha}	
		(X_{\mathcal{E}};G_\mathbb{N})
	}
	\right]
	,\\
	&=
	\sgn(\alpha)
	\log
	\left[
	\frac{
		\sgn(\alpha)\,C\,
		p_{\alpha}
		(X_{\mathcal{E}}|G_\mathbb{M})
	}
	{
	    \displaystyle
	    \max_{\mathbb{N} \in \mathbb{F}}
		\sgn(\alpha)
		\,C\,
		p_{\alpha}	
		(X_{\mathcal{E}};G_\mathbb{N})
	}
	\right]
	,\\
	&=
	\sgn(\alpha)
	\log
	\left[
	\frac{
		\displaystyle
		\max_{b_{X|G}}
		w^{\rm QSB}_{1/\alpha}
		(
		b_{X|G}
		,
		o_X^{\sgn(\alpha)}
		,\mathcal{E},\mathbb{M}
		)
	}{
		\displaystyle
		\max_{\mathbb{N} \in \mathbb{F}}
		\max_{b_{X|G}}
		w^{\rm QSB}_{1/\alpha}
		(
		b_{X|G}
		,
		o_X^{\sgn(\alpha)}
		,\mathcal{E},\mathbb{N}
		)
	}
	\right]
	.
	\end{align}
	The first equality is the definition of the Arimoto's gap for a fixed couple $(\mathcal{E}, \mathbb{M})$. The second equality comes from replacing the R\'enyi entropy and the Arimoto-R\'enyi conditional entropy. In the third equality we reorganised the expression. In the fourth equality we have multiplied and divided by $\sgn(\alpha)\,C$. The fifth and last equality follows from invoking \cref{lemma5}. This proves the claim.
\end{proof}
We now consider the case for QRTs of channels with arbitrary resources.
\begin{proof}(of second part)
	Consider the Arimoto's gap of order $\alpha \in \mathds{\overline R}$, we have the following chain of equalities:
	\begin{align}
	G_{\alpha}^{\mathcal{F}}
	(X;G)_{\mathcal{E},\mathbb{M},\mathcal{N}}
	&=
	I_{\alpha}
	(X;G)_{\mathcal{E},\mathbb{M},\mathcal{N}}
	-
	\max_{\mathcal{\widetilde{N}} 
	\in \mathcal{F}}
	\max_{\mathbb{N}}
	I_{\alpha}
	(X;G)_{\mathcal{E},\mathbb{N},\mathcal{\widetilde{N}}}
    ,
	\\
	&=
	\sgn(\alpha)
	\log
	\left[
	\frac{
		p_{\alpha}
		(X_\mathcal{E}|G_\mathbb{M})_\mathcal{N}
	}
	{
		p_{\alpha}	
		(
		X
		)
	}
	\right]
	-
	\max_{\mathcal{\widetilde{N}} \in \mathcal{F}}
	\max_{\mathbb{N}}
	\sgn(\alpha)
	\log
	\left[
	\frac{
	    p_\alpha
		(X_\mathcal{E}|G_\mathbb{N})_\mathcal{\widetilde{N}}
	}
	{
		p_{\alpha}	
		(
		X
		)
	}
	\right]
	,
	\\
	&=
	\sgn(\alpha)
	\log
	\left[
	\frac{
		p_{\alpha}
		(X_\mathcal{E}|G_\mathbb{M})_\mathcal{N}
	}
	{
	    \displaystyle
	    \max_{\mathcal{\widetilde{N}} \in \mathcal{F}}
	    \max_{\mathbb{N}}
		p_{\alpha}	
		(X_\mathcal{E}|G_\mathbb{N})_\mathcal{\widetilde{N}}
	}
	\right]
	,\\
	&=
	\sgn(\alpha)
	\log
	\left[
	\frac{
		\sgn(\alpha)\,C\,
		p_{\alpha}
		(X_\mathcal{E}|G_\mathbb{M})_\mathcal{N}
	}
	{
	    \displaystyle
	    \max_{\mathcal{\widetilde{N}} \in \mathcal{F}}
	    \max_{\mathbb{N}}
		\sgn(\alpha)
		\,C\,
		p_{\alpha}	
		(X_\mathcal{E}|G_\mathbb{N})_\mathcal{\widetilde{N}}
	}
	\right]
	,\\
	&=
	\sgn(\alpha)
	\log
	\left[
	\frac{
		\displaystyle
		\max_{b_{X|G}}
		w^{\rm QSB}_{1/\alpha}
		(
		b_{X|G}
		,
		o_X^{\sgn(\alpha)}
		,\mathcal{E},\mathbb{M}
		,\mathcal{N})
	}{
		\displaystyle
		\max_{\mathcal{\widetilde{N}} \in \mathcal{F}}
		\max_{\mathbb{N}}
		\max_{b_{X|G}}
		w^{\rm QSB}_{1/\alpha}
		(
		b_{X|G}
		,
		o_X^{\sgn(\alpha)}
		,\mathcal{E},\mathbb{N}
		,\mathcal{\widetilde{N}})
	}
	\right]
	.
	\end{align}
	The first equality is the definition of the Arimoto's gap for a fixed triple $(\mathcal{E}, \mathbb{M}, \mathcal{N})$. The second equality comes from replacing the R\'enyi entropy and the Arimoto-R\'enyi conditional entropy. In the third equality we reorganised the expression. In the fourth equality we have multiplied and divided by $\sgn(\alpha)\,C$. The fifth and last equality follows from invoking \cref{lemma5}. This proves the claim.
\end{proof}

\section{Proof of \cref{R_states} on quantum channel betting (QCB) games}
\label{AR_states}

The proof of this result similar to that of result 1, and we write below for completeness. We start with the case for QRTs of states with arbitrary resources.
\begin{proof}(of first part)
	Consider the Arimoto's gap of order $\alpha \in \mathds{\overline R}$, we have the following chain of equalities:
	\begin{align}
	G_{\alpha}^{{\rm F}}
	(X;G)_{\Lambda,\mathbb{M},\rho}
	&=
	I_{\alpha}
	(X;G)_{\Lambda,\mathbb{M},\rho}
	-
	\max_{\sigma \in {\rm F}}
	I_{\alpha}
	(X;G)_{\Lambda,\mathbb{M},\sigma}
    ,
	\\
	&=
	\sgn(\alpha)
	\log
	\left[
	\frac{
		p_{\alpha}
		(X_\Lambda|G_\mathbb{M})_\rho
	}
	{
		p_{\alpha}	
		(
		X
		)
	}
	\right]
	-
	\max_{\sigma \in {\rm F}}
	\sgn(\alpha)
	\log
	\left[
	\frac{
		p_{\alpha}
		(X_\Lambda;G_\mathbb{M})_\sigma
	}
	{
		p_{\alpha}	
		(
		X
		)
	}
	\right]
	,
	\\
	&=
	\sgn(\alpha)
	\log
	\left[
	\frac{
		p_{\alpha}
		(X_\Lambda|G_\mathbb{M})_\rho
	}
	{
	    \displaystyle
	    \max_{\sigma \in {\rm F}}
		p_{\alpha}	
		(X_\Lambda|G_\mathbb{M})_\sigma
	}
	\right]
	,\\
	&=
	\sgn(\alpha)
	\log
	\left[
	\frac{
		\sgn(\alpha)\,C\,p_{\alpha}
		(X_\Lambda|G_\mathbb{M})_\rho
	}
	{
	    \displaystyle
	    \max_{\sigma \in {\rm F}}
		\sgn(\alpha)
		\,C\,
		p_{\alpha}	
		(X_\Lambda|G_\mathbb{M})_\sigma
	}
	\right]
	,\\
	&=
	\sgn(\alpha)
	\log
	\left[
	\frac{
		\displaystyle
		\max_{b_{X|G}}
		w^{\rm QCB}_{1/\alpha}
		(
		b_{X|G}
		,
		o_X^{\sgn(\alpha)}
		,
		\Lambda,\rho,\mathbb{M}
		)
	}{
		\displaystyle
		\max_{\sigma \in {\rm F}}
		\max_{b_{X|G}}
		w^{\rm QCB}_{1/\alpha}
		(
		b_{X|G}
		,
		o_X^{\sgn(\alpha)}
		,\Lambda,\sigma,\mathbb{M}
		)
	}
	\right]
	.
	\end{align}
	The first equality is the definition of the Arimoto's gap for a fixed triple $(\Lambda, \rho, \mathbb{M})$. The second equality comes from replacing the R\'enyi entropy and the Arimoto-R\'enyi conditional entropy. In the third equality we reorganised the expression. In the fourth equality we have multiplied and divided by $\sgn(\alpha)\,C$. The fifth and last equality follows from invoking \cref{lemma5}. This proves the claim.
\end{proof}

We now consider the case for multi-object QRTs of state-measurement pairs.

\begin{proof}(of second part)
	Consider the Arimoto's gap of order $\alpha \in \mathds{\overline R}$, we have the following chain of equalities:
	\begin{align}
	G_{\alpha}^{{\rm F},\mathbb{F}}
	(X;G)_{\Lambda,\mathbb{M},\rho}
	&=
	I_{\alpha}
	(X;G)_{\Lambda,\mathbb{M},\rho}
	-
	\max_{\sigma \in {\rm F}}
	\max_{\mathbb{N} \in \mathbb{F}}
	I_{\alpha}
	(X;G)_{\Lambda,\mathbb{N},\sigma}
    ,
	\\
	&=
	\sgn(\alpha)
	\log
	\left[
	\frac{
		p_{\alpha}
		(X_\Lambda|G_\mathbb{M})_\rho
	}
	{
		p_{\alpha}	
		(
		X
		)
	}
	\right]
	-
	\max_{\sigma \in {\rm F}}
	\max_{\mathbb{N} \in \mathbb{F}}
	\sgn(\alpha)
	\log
	\left[
	\frac{
		p_{\alpha}
		(X_\Lambda|G_\mathbb{N})_\sigma
	}
	{
		p_{\alpha}	
		(
		X
		)
	}
	\right]
	,
	\\
	&=
	\sgn(\alpha)
	\log
	\left[
	\frac{
		p_{\alpha}
		(X_\Lambda|G_\mathbb{M})_\rho
	}
	{
	    \displaystyle
	    \max_{\sigma \in {\rm F}}
	    \max_{\mathbb{N} \in \mathbb{F}}
		p_{\alpha}	
		(X_\Lambda|G_\mathbb{N})_\sigma
	}
	\right]
	,\\
	&=
	\sgn(\alpha)
	\log
	\left[
	\frac{
		\sgn(\alpha)\,C\,
		p_{\alpha}
		(X_\Lambda|G_\mathbb{M})_\rho
	}
	{
	    \displaystyle
	    \max_{\sigma \in {\rm F}}
	    \max_{\mathbb{N} \in \mathbb{F}}
		\sgn(\alpha)
		\,C\,
		p_{\alpha}	
		(X_\Lambda;G_\mathbb{N})_\sigma
	}
	\right]
	,\\
	&=
	\sgn(\alpha)
	\log
	\left[
	\frac{
		\displaystyle
		\max_{b_{X|G}}
		w^{\rm QCB}_{1/\alpha}
		(
		b_{X|G}
		,
		o_X^{\sgn(\alpha)}
		,
		\Lambda,\rho,\mathbb{M}
		)
	}{
		\displaystyle
		\max_{\sigma \in {\rm F}}
		\max_{\mathbb{N} \in \mathbb{F}}
		\max_{b_{X|G}}
		w^{\rm QCB}_{1/\alpha}
		(
		b_{X|G}
		,
		o_X^{\sgn(\alpha)}
		,\Lambda,\sigma,\mathbb{N}
		)
	}
	\right]
	.
	\end{align}
	The first equality is the definition of the Arimoto's gap for a fixed triple $(\Lambda, \rho, \mathbb{M})$. The second equality comes from replacing the R\'enyi entropy and the Arimoto-R\'enyi conditional entropy. In the third equality we reorganised the expression. In the fourth equality we have multiplied and divided by $\sgn(\alpha)\,C$. The fifth and last equality follows from invoking \cref{lemma5}. This proves the claim.
\end{proof}

\section{Proof of \cref{Rdivergences}}
\label{ARdivergences}

\begin{proof}(of \cref{Rdivergences})
	For $\alpha > 1$ we have:
	\begin{align}
	E_{\alpha}^{\mathcal{S}} 
	(\mathds{M})
	&
	\overset{1}{=}
	\min_{\mathds{N}\in {\rm UI}}
	D_{\alpha}^{\mathcal{S}}
	(\mathds{M}||\mathds{N}),
	\\
	&
	\overset{2}{=}
	\min_{\mathds{N}\in {\rm UI}}
	\max_{p_X}
	D_{\alpha}^{\rm S}
	\left(
	p_{G|X}^{({\mathbb{M}},\mathcal{S})}
	\Big|\Big|
	q_{G|X}^{({\mathbb{N}},\mathcal{S})}
	\Big|
	\,
	p_X
	\right),
	\\
	&
	\overset{3}{=}
	\min_{q_G}
	\max_{p_X}
	D_{\alpha}^{\rm S}
	\left(
	p_{G|X}^{({\mathbb{M}},\mathcal{S})}
	\Big|\Big|
	q_{G}
	\Big|
	\,
	p_X
	\right),
	\\
	&
	\overset{4}{=}
	\min_{q_G}
	\max_{p_X}
	\frac{1}{\alpha-1}
	\log
	\left[
	\sum_x
	p(x)
	\sum_g
	p(g|x)^\alpha
	q(g)^{1-\alpha}
	\right]	,
	\\
	&
	\overset{5}{=}
	\frac{1}{\alpha-1}
	\log
	\left[
	\min_{q_G}
	\max_{p_X}
	\sum_x
	p(x)
	\sum_g
	p(g|x)^\alpha
	q(g)^{1-\alpha}
	\right]	,
	\\
	&
	\overset{6}{=}
	\frac{1}{\alpha-1}
	\log
	\left[
	\min_{q_G}
	\max_{p_X}
	f_\alpha^{\rm S}(q_G,p_X)
	\right]
	,
	\\
	&
	\overset{7}{=}
	\frac{1}{\alpha-1}
	\log
	\left[
	\max_{p_X}
	\min_{q_G}
	f_\alpha^{\rm S}(q_G,p_X)
	\right]
	,
	\\
	&
	\overset{8}{=}
	\max_{p_X}
	\min_{q_G}
	\frac{1}{\alpha-1}
	\log
	\left[
	f_\alpha^{\rm S}(q_G,p_X)
	\right]
	,
	\\
	&
	\overset{9}{=}
	\max_{p_X}
	\min_{q_G}
	D_{\alpha}^{\rm S}
	\left(
	p_{G|X}^{({\mathbb{M}},\mathcal{S})}
	\Big|\Big|
	q_{G}
	\Big|
	\,
	p_X
	\right),
	\\
	&
	\overset{10}{=}
	\max_{p_X}
	I_{\alpha}^{\rm S}
	\left(
	p_{G|X}^{({\mathbb{M}},\mathcal{S})}
	p_X
	\right),
	\\
	&
	\overset{11}{=}
	C_{\alpha}
	\left(
	p_{G|X}^{({\mathbb{M}},\mathcal{S})}
	\right).
\end{align}
	In the first equality we use the definition of $E_{\alpha,\mathcal{S}}(\mathds{M})$. In the second equality we replace $D_{\alpha}^{\mathcal{S}} (\mathds{M} || \mathds{N})$. In the third equality we notice that minimising over uninformative measurements is equivalent to minimising over PMFs $q_G$. In the fourth equality we replace the Sibson's CR-divergence. In the fifth equality we move the optimisation inside $\log(\cdot)$  because the term $\alpha-1$ is positive and because $\log(\cdot)$ is an increasing function. In the sixth equality we introduce the function:
	\begin{align}
	    f_\alpha^{\rm S}(q_G,p_X)
    	\coloneqq
    	\sum_x
    	p(x)
    	\sum_g
    	p(g|x)^\alpha
    	q(g)^{1-\alpha}
    	.
	\end{align}
	In the seventh equality we use Sion's minimax theorem \cite{sion1, sion2} because the function $f_\alpha^{\rm S}(q_G,p_X)$ is being optimised over convex and compact sets, and because it is a convex-concave function. Specifically, the function $f_\alpha^{\rm S}(q_G,p_X)$ is convex in $g_G$ because the function $f(q)=q^{1-\alpha}$ with $\alpha>1$ and positive values of $q$, is convex,	and because the sum of convex functions is convex. The function $f_\alpha^{\rm S}(q_G,p_X)$ is concave in $p_X$ because it is linear in $p_X$. In the eight equality we take the maximisation out of $\log(\cdot)$ because $\alpha-1$ is positive and because $\log(\cdot)$ is an increasing function. In the ninth equality we use the definition of Sibson's CR-divergence. In the tenth equality we use the definition of Sibson's mutual information. In the eleventh and final equality we use Lemma 3. The cases for $0<\alpha<1$ and $\alpha<0$ follow a similar argument, taking into account the sign of $\alpha-1$, and the convexity/concavity of the function $f(q)=q^{1-\alpha}$.
\end{proof}

\section{Proof of \cref{Rmonotones}}
\label{ARmonotones}

\begin{proof}(of \cref{Rmonotones})
	It is straightforward to check that $M_\alpha(\mathds{M})$ is a resource monotone (meaning that it satisfies i)faithfulness and ii)monotonicity) if and only if $E_\alpha(\mathds{M})$ is a resource monotone. We now prove these properties for $E_\alpha(\mathds{M})$. In short, we will expand this function in terms of the R\'enyi divergence, and exploit the properties of this function.
	\\
	\textbf{Part i)} Faithfulness. Consider $\mathbb{M}\in{\rm UI}$, and let us see that this implies $E_\alpha(\mathds{M})=0$ with $\alpha\geq 0$:
	\begin{align}
	E_{\alpha}
	(\mathds{M})
	&
	\overset{1}{=}
	\max_{\mathcal{S}}
	\min_{q_G}
	\max_{p_X}
	D_{\alpha}^{\rm S}
	\left(
	p_{G|X}^{({\mathbb{M}},\mathcal{S})}
	\Big|\Big|
	q_{G}
	\Big|
	\,
	p_X
	\right),
	\\
	&
	\overset{2}{=}
	\max_{\mathcal{S}}
	\min_{q_G}
	\max_{p_X}
	D_{\alpha}
	\left(
	p_{G|X}^{({\mathbb{M}},\mathcal{S})}
	p_X
	\Big|\Big|
	q_{G}
	\,
	p_X
	\right),
	\\
	&
	\overset{3}{=}
	\max_{\mathcal{S}}
	\min_{q_G}
	\max_{p_X}
	D_{\alpha}
	\left(
	p_{G}
	\,
	p_X
	\Big|\Big|
	q_{G}
	\,
	p_X
	\right),
	\\
	&
	\overset{4}{=}
	\max_{\mathcal{S}}
	\max_{p_X}
	\min_{q_G}
	D_{\alpha}
	\left(
	p_{G}
	\,
	p_X
	\Big|\Big|
	q_{G}
	\,
	p_X
	\right),
	\\
	&
	\overset{5}{\leq}
	\max_{\mathcal{S}}
	\max_{p_X}
	D_{\alpha}
	\left(
	p_{G}
	\,
	p_X
	\Big|\Big|
	p_{G}
	\,
	p_X
	\right),
	\\
	&
	\overset{6}{=}
	\max_{\mathcal{S}}
	\max_{p_X}
	\,
	0
	=0.
	\end{align}
	In the first equality we use the definition of the measure. In the second equality we write Sibson's mutual information in terms of the R\'enyi divergence. In the third equality we use the assumption that $\mathds{M}\in {\rm UI}$. In the fourth equality we use Sion's minimax theorem \cite{sion1, sion2}, using the same arguments as in Result 2. In the fifth inequality we use that $q_G=p_G$ is a feasible option. In the sixth equality we invoke the property of the R\'enyi divergence which reads $D_\alpha(p_X||q_X)=0$ if and only if $q_X=p_X$. This chain means that $E_\alpha(\mathds{M})\leq 0$, and remembering that that $E_\alpha(\mathds{M})$ is non-negative (being an optimisation over the R\'enyi divergence which is itself non-negative) implies $E_\alpha(\mathds{M})=0$ as desired.
	
	Consider now that $\mathds{M}$ achieves $E_\alpha(\mathds{M})=0$, and let us prove that $\mathds{M}\in {\rm UI}$. We have:
	\begin{align}
	0
	&
	\overset{1}{=}
	E_{\alpha}
	(\mathds{M})
	\\
	&
	\overset{2}{=}
	\max_{\mathcal{S}}
	\min_{q_G}
	\max_{p_X}
	D_{\alpha}^{\rm S}
	\left(
	p_{G|X}^{({\mathbb{M}},\mathcal{S})}
	\Big|\Big|
	q_{G}
	\Big|
	\,
	p_X
	\right),
	\\
	&
	\overset{3}{=}
	\max_{\mathcal{S}}
	\max_{p_X}
	\min_{q_G}
	D_{\alpha}^{\rm S}
	\left(
	p_{G|X}^{({\mathbb{M}},\mathcal{S})}
	\Big|\Big|
	q_{G}
	\Big|
	\,
	p_X
	\right),
	\\
	&
	\overset{4}{=}
	\max_{\mathcal{S}}
	\max_{p_X}
	\min_{q_G}
	D_{\alpha}
	\left(
	p_{G|X}^{({\mathbb{M}},\mathcal{S})}
	p_X
	\Big|\Big|
	q_{G}
	\,
	p_X
	\right),
	\\
	&
	\overset{5}{=}
	\max_{\mathcal{S}}
	\max_{p_X}
	D_{\alpha}
	\left(
	p_{G|X}^{({\mathbb{M}},\mathcal{S})}
	p_X
	\Big|\Big|
	q_{G}^*
	\,
	p_X
	\right).
	\end{align}
	The first equality is the assumption. In the second equality we invoke the definition of the measure. In the third equality we use Sion's minimax theorem \cite{sion1, sion2} as per Result 2. In the fourth equality we expand Sibson's CR-divergence in terms of the R\'enyi divergence. In the fifth equality we denote the optimal PMF as $q_G^*$. We now notice that the latter equality implies:
	\begin{align}
    	D_{\alpha}
    	\left(
    	p_{G|X}^{({\mathbb{M}},\mathcal{S})}
	    p_X
    	\Big|\Big|
    	q_{G}^*
    	\,
    	p_X
    	\right)
    	=
    	0
    	,
	\end{align}
	from which we get that $p_{G|X}^{({\mathbb{M}},\mathcal{S})}
	    =
	    q_{G}^*$. This means that $p(g|x)=q(g)$, $\forall g,x$, or that $\tr[M_g\rho_x]=\tr[q(g)\mathds{1}\rho_x]$, $\tr[(M_g-q(g)\mathds{1})\rho_x]=0$, $\forall g,x$ which implies $M_g=q(g)\mathds{1}$, $\forall g$, or that $\mathds{M}\in {\rm UI}$ as desired.
	\\
	\textbf{Part ii)} Monotonicity for the order induced by the simulability of measurements. Given two measurements $\mathds{N}=\{N_g\}$, $\mathds{M}=\{M_y\}$  such that $\mathds{N} \leq \mathds{M}$, we now show that this implies $E_\alpha(\mathds{N})\leq E_\alpha(\mathds{M})$. Let us consider that $\mathds{N} \leq \mathds{M}$, meaning that $\forall g$ and some $s_{G|Y}$ we have:
	\begin{align}
	    N_g=\sum_y s(g|y) M_y.
	\end{align}
	This implies that for any set of states $\mathcal{S}=\{\rho_x\}$:
	\begin{align}
	    r(g|x)
	    \coloneqq
	    \tr[N_g\rho_x]
	    =
	    \sum_y
	    s(g|y)
	    p(y|x)
	    ,
	\end{align}
	with $p(y|x)=\tr[M_y\rho_x]$. We now invoke the data processing inequality for the R\'enyi divergence \cite{RD} and get:
	{\small\begin{align}
	    D_{\alpha}
    	\left(
    	r_{G|X}^{({\mathbb{N}},\mathcal{S})}
    	\,
	    p_X
    	\Big|\Big|
    	q_{G}
    	\,
    	p_X
    	\right)
    	\leq
    	D_{\alpha}
    	\left(
    	p_{G|X}^{({\mathbb{M}},\mathcal{S})}
    	\,
	    p_X
    	\Big|\Big|
    	q_{G}
    	\,
    	p_X
    	\right),
	\end{align}}
	with arbitrary PMFs $p_X$ and $p_G$. Recognising that these quantities are the Sibson's CR-divergence leads to:
	\begin{align}
	    D_{\alpha}^{\rm S}
    	\left(
    	r_{G|X}^{({\mathbb{N}},\mathcal{S})}
    	\Big|\Big|
    	q_{G}
    	\Big|
    	p_X
    	\right)
    	\leq
    	D_{\alpha}^{\rm S}
    	\left(
    	p_{G|X}^{({\mathbb{M}},\mathcal{S})}
    	\Big|\Big|
    	q_{G}
    	\Big|
    	p_X
    	\right)
    	.
	\end{align}
	We now perform the optimisations $\max_{\mathcal{S}}$, $\min_{q_G}$, $\max_{p_X}$ on both sides of the inequality and get:
	\begin{align}
	    E_\alpha(\mathds{N})
	   \leq 
	   E_\alpha(\mathds{M})
	   .
	\end{align}
	This finishes the proof for the cases $\alpha \geq 0$. The cases $\alpha < 0$ follow a similar argument.
\end{proof} 

\bibliography{bibliography.bib}

\end{document}